\documentclass[a4paper,11pt,dvipsnames]{article}
\usepackage[titles]{tocloft}

\usepackage{fullpage}
\usepackage{amsthm}
\usepackage{mathtools}
\usepackage{amssymb}
\usepackage{thmtools,thm-restate}
\usepackage{setspace}
\usepackage[hidelinks, pagebackref]{hyperref}
\usepackage{mathpazo}
\usepackage{enumerate}
\usepackage{bookmark}
\usepackage{appendix}
\usepackage{enumitem}
\usepackage{hyperref}
\usepackage{tikz}
\usepackage{algorithm, algorithmic}
\usepackage[misc]{ifsym}
   \date{}

\newcommand{\nocontentsline}[3]{}
\newcommand{\toclesslab}[3]{\bgroup\let\addcontentsline=\nocontentsline#1{#2\label{#3}}\egroup}

\hypersetup{
    colorlinks,
    linkcolor=[rgb]{0.3,0.3,0.6},
    citecolor=[rgb]{0.2, 0.6, 0.2},
    urlcolor=[rgb]{0.6, 0.2, 0.2}
}




\newcommand{\IP}{\mathsf{IP}}
\newcommand{\PSPACE}{\mathsf{PSPACE}}
\renewcommand{\P}{\mathsf{P}}
\newcommand{\NP}{\mathsf{NP}}
\newcommand{\QP}{\mathsf{QP}}
\newcommand{\poly}{\mathsf{poly}}

\newcommand{\DiDI}{\mathsf{DiDI}}

\newcommand{\F}{\mathbb{F}}
\newcommand{\N}{\mathbb{N}}

\newcommand{\Q}{\mathbb{Q}}
\newcommand{\Z}{\mathbb{Z}}
\newcommand{\C}{\mathbb{C}}

\newcommand{\ringR}{\mathsf{R}}

\newcommand{\x}{{\boldsymbol x}}
\renewcommand{\a}{\boldsymbol \alpha}
\newcommand{\T}{{\boldsymbol T}}
\newcommand{\y}{{\boldsymbol y}}
\newcommand{\z}{{\boldsymbol z}}

\newcommand{\e}{{\boldsymbol e}}

\newcommand{\calC}{\mathcal{C}}
\newcommand{\calD}{\mathcal{D}}
\newcommand{\calH}{\mathcal{H}}

\newcommand{\calT}{\mathcal{T}}

\newcommand{\calJ}{\mathcal{J}}

\renewcommand{\v}{\mathsf{val}}

\renewcommand{\sp}{\mathsf{sp}}

\newcommand{\rk}{\mathsf{rk}}
\newcommand{\cf}{\mathsf{coef}}
\renewcommand{\d}{\mathsf{deg}}
\newcommand{\dlg}{\mathsf{dlog}}
\newcommand{\size}{\mathsf{size}}
\newcommand{\tdeg}{\mathsf{trdeg}}
\newcommand{\PSP}[2]{\Pi^{#1}\Sigma\Pi^{#2}}
\newcommand{\SPSP}[3]{\Sigma^{#1}\Pi^{#2}\Sigma\Pi^{#3}}

\newcommand{\WSP}[2]{\wedge^{#1}\Sigma\Pi^{#2}}
\newcommand{\SWSP}[3]{\Sigma^{#1}\!\wedge^{#2}\!\Sigma\Pi^{#3}}
\newcommand{\PSU}[2]{\Pi^{#1} \Sigma\wedge^{#2}}
\newcommand{\SPSU}[3]{\Sigma^{#1} \Pi^{#2} \Sigma \wedge^{#3}}
\newcommand{\WSU}[1]{\wedge\Sigma \wedge^{#1}}
\newcommand{\SWSU}[2]{\Sigma^{#1}\!\wedge\!\Sigma\!\wedge^{#2}}
\newcommand{\Gen}[2]{\mathsf{Gen}(#1,#2)}
\newcommand{\SPS}[2]{\Sigma^{#1} \Pi^{#2} \Sigma}

\numberwithin{equation}{section}

\declaretheoremstyle[bodyfont=\normalfont \itshape \upshape,qed=\qedsymbol]{noproofstyle}

\declaretheoremstyle[bodyfont=\normalfont \itshape]{thmstyle}

\declaretheoremstyle[
    postheadhook = {\hspace*{\parindent}}]{thmliststyle}

\declaretheorem[numberlike=equation,style=thmstyle]{theorem}

\declaretheorem[name=Theorem,style=thmstyle,numbered=no]{theorem*}

\declaretheorem[numberlike=equation,style=thmstyle]{lemma}
\declaretheorem[name=Lemma,numbered=no,style=thmstyle]{lemma*}

\declaretheorem[name=Corollary,numbered=no,style=thmstyle]{corollary*}

\declaretheorem[name=Proposition,numbered=no,style=thmstyle]{proposition*}

\declaretheorem[numberlike=equation,style=thmstyle]{claim}
\declaretheorem[name=Claim,numbered=no,style=thmstyle]{claim*}

\declaretheorem[numberlike=equation,style=thmstyle]{fact}
\declaretheorem[name=Fact,numbered=no,style=thmstyle]{fact*}

\declaretheorem[numberlike=equation,style=thmstyle]{problem}
\declaretheorem[name=Problem,numbered=no,style=thmstyle]{problem*}

\declaretheorem[name=Conjecture,numbered=no,style=thmstyle]{conjecture*}

\declaretheorem[numberlike=equation,style=thmstyle]{definition}
\declaretheorem[unnumbered,name=Definition,style=thmstyle]{definition*}

\declaretheoremstyle[]{defstyle}

\declaretheorem[unnumbered,name=Remark,style=defstyle]{remark*}
\declaretheorem[unnumbered,style=thmliststyle, name=Remark]{remarklist*}

{\vspace{1ex}\noindent{\em Proof.}\hspace{0.5em}\def\myproof@name{#1}}%
{\hfill{\tiny \qed\ (\myproof@name)}\vspace{1ex}}
\newenvironment{proof-sketch}{\medskip\noindent{\em Proof Sketch.}}{\qed\bigskip}
\newenvironment{proof-attempt}{\medskip\noindent{\em Proof attempt.}}{\bigskip}

\usepackage{nth}
\usepackage{intcalc}
\usepackage{etoolbox}
\usepackage{xstring}

\usepackage{ifpdf}
\ifpdf
\else
\usepackage[quadpoints=false]{hypdvips}
\fi

\newcommand{\shortECCC}[2]{\texttt{\href{http://eccc.hpi-web.de/report/\ifnumcomp{#1}{>}{93}{19}{20}#1/#2/}{eccc:TR#1-#2}}}

\newcommand{\parseECCC}[1]{
\StrSubstitute{#1}{TR}{}[\tmpstring]%
\IfSubStr{\tmpstring}{/}{ 
\StrBefore{\tmpstring}{/}[\ecccyear]%
\StrBehind{\tmpstring}{/}[\ecccreport]%
}{
\StrBefore{\tmpstring}{-}[\ecccyear]%
\StrBehind{\tmpstring}{-}[\ecccreport]%
}%
\shortECCC{\ecccyear}{\ecccreport}}

\allowdisplaybreaks
\linespread{1.25}

\begin{document}

\title{Deterministic identity testing paradigms for bounded top-fanin depth-4 circuits\thanks{A preliminary version appeared in $36^{\text{th}}$ Computational Complexity Conference (CCC), 2021. \cite{DDS21-Didi}}}

\author{Pranjal Dutta\thanks{School of Computing, NUS. Email: \texttt{duttpranjal@gmail.com}} \and Prateek Dwivedi\thanks{Dept.~of Computer Science \& Engineering, IIT Kanpur. Email: \texttt{\{pdwivedi,nitin\}@cse.iitk.ac.in}} \and Nitin Saxena\footnotemark[2]}

\maketitle
\begin{abstract}
    Polynomial Identity Testing (PIT) is a fundamental computational problem. The famous depth-$4$ reduction result by Agrawal and Vinay (FOCS 2008) has made PIT for depth-$4$ circuits an enticing pursuit. A restricted depth-4 circuit computing a $n$-variate degree-$d$ polynomial of the form $\sum_{i = 1}^{k} \prod_{j} g_{ij}$, where $\deg g_{ij} \leq \delta$ is called $\SPSP{[k]}{}{[\delta]}$ circuit. On further restricting $g_{ij}$ to be sum of univariates we obtain $\SPSU{[k]}{}{}$ circuits. The largely open, special-cases of $\SPSP{[k]}{}{[\delta]}$ for constant $k$ and $\delta$, and $\SPSU{[k]}{}{}$ have been a source of many great ideas in the last two decades. For eg. depth-$3$ ideas of Dvir and Shpilka (STOC 2005), Kayal and Saxena (CCC 2006), and Saxena and Seshadhri (FOCS 2010 and STOC 2011). Further, depth-$4$ ideas of Beecken, Mittmann and Saxena (ICALP 2011), Saha, Saxena and Saptharishi (Comput.Compl. 2013), Forbes (FOCS 2015), and Kumar and Saraf (CCC 2016). Additionally, geometric Sylvester-Gallai ideas of Kayal and Saraf (FOCS 2009), Shpilka (STOC 2019), and Peleg and Shpilka (CCC 2020, STOC 2021). Very recently, a subexponential-time {\em blackbox} PIT algorithm for constant-depth circuits was obtained via lower bound breakthrough of Limaye, Srinivasan, Tavenas (FOCS 2021). We solve two of the basic underlying open problems in this work.

    We give the \emph{first} polynomial-time PIT for $\SPSU{[k]}{}{}$. We also give the \emph{first} quasipolynomial time {\em blackbox} PIT for both $\SPSU{[k]}{}{}$ and $\SPSP{[k]}{}{[\delta]}$. A key technical ingredient in all the three algorithms is how the {\em logarithmic derivative}, and its power-series, modify the top $\Pi$-gate to $\wedge$.
\end{abstract}

\noindent \textbf{Keywords} Polynomial identity testing, hitting set, depth-4 circuits.
\vspace*{1em}

\noindent \textbf{2012 ACM Subject Classification } Theory of computation \(\to\) Algebraic complexity theory.
\vspace*{1em}

\noindent \textbf{Acknowledgement} Pranjal is supported by the project “Foundation of Lattice-based Cryptography", funded by NUS-NCS Joint Laboratory for Cyber Security. Most of the work was carried out when Pranjal was visiting CSE, IIT Kanpur and supported by Google PhD  Fellowship. Nitin thanks the funding support from DST (SJF/MSA-01/2013-14), SERB (CRG/2020/000045) and N.Rama Rao Chair.


\tableofcontents

\section{Introduction: PIT \& beyond}

Algebraic circuits are natural algebraic analog of boolean circuits, with the logical operations being replaced by $+$ and $\times$ operations over the underlying field. The study of algebraic circuits comprise the large study of algebraic complexity, mainly pioneered (and formalized) by Valiant \cite{valiant1979completeness}. A central problem in algebraic complexity is an algorithmic design problem, known as Polynomial Identity Testing (PIT): given an algebraic circuit $\calC$ over a field $\F$ and input variables $x_1,\hdots,x_n$, determine whether $\calC$ computes the identically zero polynomial. PIT has found numerous applications and connections to other algorithmic problems. Among the examples are algorithms for finding perfect matchings in graphs \cite{lovasz1979determinants, DBLP:journals/combinatorica/MulmuleyVV87,fenner2019bipartite}, primality testing \cite{agrawal2004primes}, polynomial factoring \cite{kopparty2014equivalence,dutta2018discovering}, polynomial equivalence \cite{dvir2014testing}, reconstruction algorithms \cite{klivans2006learning,shpilka2009interpolation,karnin2009reconstruction} and the existence of algebraic natural proofs \cite{chatterjee2020existence,kumar2020if}. Moreover, efficient design of PIT algorithms is intrinsically connected to proving strong lower bounds \cite{heintz1980testing,agrawal2005proving,kabanets2004derandomizing,dvir2010hardness,forbes2018succinct,chou2018hardness,dutta_et_al:LIPIcs.ITCS.2021.23}. Interestingly, PIT also emerges in many fundamental results in complexity theory such as $\IP=\PSPACE$ \cite{shamir1992ip,lund1992algebraic}, the PCP theorem \cite{arora1998proof,arora1998probabilistic}, and the overarching Geometric Complexity Theory (GCT) program towards $\P\ne\NP$ \cite{mulmuley2012gct,mulmuley2012geometric, grochow2015unifying,joshua16boundaries}. 

There are broadly two settings in which the PIT question can be framed. In the \emph{whitebox} setup, we are allowed to look inside the wirings of the circuit, while in the \emph{blackbox} setting we can only evaluate the circuit at some points from the given domain. There is a very simple randomized algorithm for this problem - evaluate the polynomial at a random point from a large enough domain. With very high probability, a nonzero polynomial will have a nonzero evaluation; this is famously known as the Polynomial Identity Lemma \cite{ore1922hohere,DL78,Zip79,schwartz1980fast}. It has been a
long standing open question to derandomize this algorithm.

For many years, blackbox identity tests were only known for depth-2 circuits which compute sparse polynomials \cite{ben1988deterministic,klivans2001randomness}. In a surprising result, Agrawal and Vinay \cite{agrawal2008arithmetic} showed that a complete derandomization of blackbox identity testing for just depth-4 algebraic circuits ($\SPSP{}{}{}$) already implies a near complete derandomization for the general PIT problem. More recent depth reduction results \cite{koiran2012arithmetic,gupta2013arithmetic}, and the bootstrapping phenomenon \cite{Agrawal8107,kumar2019near,DBLP:conf/focs/Guo0SS19,DBLP:conf/coco/000320} show that even PIT
for very restricted classes of depth-$4$  circuits (\emph{even} depth-$3$) would have very
interesting consequences for PIT of general circuits. These results make the identity testing regime for depth-$4$ circuits, a very meaningful pursuit.

\emph{Three PITs in one-shot.} Following the same spirit, here we solve three important (and open) PIT questions. We give the first deterministic polynomial-time whitebox PIT algorithm for the bounded sum of product of sum of univariates circuits \cite[Open Prob.~2]{saha2013case}. Further, we give a quasipolynomial-time blackbox algorithm for the same class of circuits. These circuits are denoted by $\SPSU{[k]}{}{}$ and compute polynomials of the form $\Sigma_{i \in [k]} \Pi_{j} \left(g_{ij1}(x_1)+ \cdots +g_{ijn}(x_n)\right)$.
\begin{quote}
    \itshape 
    Whitebox and Blackbox PIT for the $\SPSU{[k]}{}{}$ circuits is in polynomial and quasi-polynomial time respectively.
\end{quote}
A similar technique also gives a quasi-polynomial time blackbox PIT algorithm for the bounded sum of product of bounded degree sparse polynomials circuits. They are denoted by $\SPSP{[k]}{}{[\delta]}$ (where $k$ and $\delta$ can be up to \(\poly(\log(s))\), where \(s\) is the circuit size). 
\begin{quote}
    \itshape
    Blackbox PIT for the $\SPSP{[k]}{}{[\delta]}$ circuits is in quasi-polynomial time.
\end{quote}
$\SPSP{[k]}{}{[\delta]}$ circuits compute polynomials which are of the form $\Sigma_{i \in [k]} \Pi_{j} g_{ij}(\x)$, where $\d(g_{ij}) \le \delta$. Even $\delta=2$ was a challenging open problem \cite[Open Problem~2]{DBLP:conf/coco/0001S16a}. The model has gained a lot of interest in the past few years and has generated many important results \cite{peleg2020generalized,peleg2020polynomial,GOS22,OS22}.

\vspace*{1em}
\subsection{Main results: An analytic detour to three PITs}
Though some attempts have been made to solve PIT for $\SPSU{[k]}{}{}$, an efficient PIT for $k \ge 3$  \emph{even} in the whitebox settings remains open, see \cite[Open Prob.~2]{saha2013case}. Our first result addresses this problem and designs a polynomial time algorithm (Ref. Algorithm \ref{algo:whitebox-algo}). In our pursuit we discover an analytic and non-ideal based new technique which we refer as $\mathsf{DiDI}$. Throughout the paper, we will work with $\F=\Q$, though all the results hold for field of large characteristic. 

\begin{theorem}[Whitebox $\SPSU{[k]}{}{}$ PIT] \label{thm:thm1}
There is a deterministic, whitebox $s^{O(k\,7^k)}$-time PIT algorithm for $\SPSU{[k]}{}{}$ circuits of size $s$, over $\F[\x]$.
\end{theorem}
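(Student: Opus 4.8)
The approach is the $\DiDI$ --- \emph{Divide--Derive--Induct} --- paradigm, run as an induction on the top fan-in $k$ (throughout we use $\mathrm{char}\,\F = 0$ or large, so that $\dlog$ and the relevant power series behave). Write the input as $f = \sum_{i=1}^{k} T_i$ with $T_i = \prod_{j} \ell_{ij}$ and each $\ell_{ij} = \sum_{l} g_{ijl}(x_l)$ a sum of univariates, so $f$ is the given $\SPSU{[k]}{}{}$ circuit of size $s$ and degree $d \le s$. The base case $k=1$ is trivial: $T_1 = \prod_j \ell_{1j} \equiv 0$ iff some $\ell_{1j}\equiv 0$, and zeroness of a sum of univariates is immediate. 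For the inductive step I first normalize using whitebox access: compute the relevant g.c.d.\ structure among the $\ell_{ij}$'s so as to pull out shared/repeated factors, discard any $T_i$ that is identically zero, and apply a shift $\x \mapsto \x + \a$ with $\a$ chosen to make every surviving $\ell_{ij}$ have a nonzero constant term (only finitely many sums-of-univariates must be kept nonzero simultaneously, so a good $\a$ is found by a small sparse-PIT search). After this each $T_i$ is a unit of $\F[[\x]]$.

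Now the single $\DiDI$ round. \emph{Divide} by $T_k$: since $T_k$ is a unit, $f \equiv 0$ iff the power series $f/T_k = 1 + \sum_{i<k} T_i/T_k$ vanishes, iff its truncation modulo $\langle \x\rangle^{d+1}$ vanishes. \emph{Derive}: apply the logarithmic derivative $\dlog_{x_1}(\cdot) = \partial_{x_1}(\cdot)/(\cdot)$. Because $\dlog$ sends products to sums, $\dlog_{x_1}(T_i/T_k) = \sum_j \dlog_{x_1}(\ell_{ij}) - \sum_j \dlog_{x_1}(\ell_{kj})$, and writing $\ell_{ij} = c_{ij}(1+\tilde h_{ij})$ with $\tilde h_{ij}(0)=0$, the geometric series gives $\dlog_{x_1}(\ell_{ij}) = (g'_{ij1}(x_1)/c_{ij})\sum_{m\ge0}(-\tilde h_{ij})^m$; truncated at degree $d$ this is a bounded sum of terms (univariate in $x_1$)$\times$(power of a sum-of-univariates). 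This is exactly the promised ``top $\Pi\to\wedge$'' move: each product gate $T_i$ has dissolved, leaving pieces built from $\WSU{}$-type ($\wedge\Sigma\wedge$) gates. Coupling the truncated identity $[f/T_k]_{\le d}\equiv 0$ with the ``base'' identities obtained by restricting the variables one at a time (honest $\SPSU{[k]}{}{}$ instances in fewer variables), the conjunction is equivalent to $f\equiv 0$; and the derivative part, after clearing the one remaining denominator $T_k$, is an identity whose top fan-in has dropped to $k-1$ over $\SWSU{}{}$-style summands. To keep the model stable across rounds one actually carries a slightly generalized class --- a bounded sum of products of bounded powers of sums-of-univariates --- inside which one round provably lowers the $\Sigma$-fan-in by one.

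\emph{Induct}: iterate; after $k-1$ rounds the top fan-in is $1$ and we are back in the base case, while the $n$ variable-restrictions feed the same recursion at one fewer variable. Each round multiplies the circuit size by a fixed polynomial amount --- the truncation degree stays $O(d)$, the new $\wedge$-fan-in is $O(d)$, the number of summands grows by a constant factor --- and the $\SPSU{[k-1]}{}{}$-style subcall is invoked $O(nk)$ times (once per variable, per choice of which $T_i$ to divide by); unrolling the recursion yields deterministic time $s^{O(k\,7^k)}$, the constant $7$ being the combined cost of the divide, derive and re-normalize sub-steps within one round. The crux --- the step I expect to fight hardest with --- is the non-degeneracy bookkeeping that makes the $\DiDI$ round \emph{correct}: one must guarantee that dividing by $T_k$ and truncating at degree $d$ genuinely preserves nonzeroness (i.e.\ that $T_k$ is not ``swallowed'' by the span of the other terms), which is what forces the g.c.d./radical preprocessing, and one must show all of this can be maintained without the parameter $k$ creeping back up across the $O(nk)$ levels of recursion --- the place where boundedness of $k$ is genuinely used.
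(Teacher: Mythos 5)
Your high-level recipe matches the paper's $\DiDI$ technique---divide by one summand, take a logarithmic derivative, use a power-series expansion to convert the top $\Pi$ to a $\wedge$, and induct on the top fan-in down to a $\SWSU{}{}$ base case. But two of your steps diverge from the paper in ways that leave genuine gaps.

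\emph{The derivation variable.} You apply $\dlg_{x_1}$. The paper instead introduces a \emph{fresh} variable $z$ via the scaling-shift map $\Phi\colon x_i\mapsto z\,x_i + a_i$ (with $\a$ chosen so that every bottom factor is invertible at $z=0$), and takes $\partial_z$ working modulo $z^{d_j}$ with a $z$-valuation bookkeeping. The payoff is that when $\partial_z(\Phi(f)/\Phi(T_k))$ vanishes, $\Phi(f)/\Phi(T_k)$ is $z$-free, and $z$-free is $\x$-free under $\Phi$, so that quotient is a scalar; more generally the ``or'' branch of the iff condition is a single $\SWSU{}{}/\SWSU{}{}$ PIT that terminates immediately (\autoref{lem:val-powerseries}). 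In your version, ``$f/T_k$ is $x_1$-free'' leaves a full $(n-1)$-variate $\SPSU{[k]}{}{}$ instance with the \emph{same} top fan-in $k$. That makes the recursion bivariate in $(n,k)$, roughly $T(n,k)\le T(n,k-1)+T(n-1,k)+\mathrm{poly}$; and since sizes compound polynomially along the $k$-axis (the paper's size recursion is $s_{j+1}= s_j^{7}\cdot d^{O(j)}$, i.e.\ a polynomial \emph{power}, not ``a fixed polynomial factor'' as you state), it is not at all clear the two branches can coexist inside $s^{O(k\,7^k)}$. The single auxiliary variable $z$ removes the $n$-axis entirely; without it, your claim ``the conjunction is equivalent to $f\equiv 0$'' is asserted rather than established, and the paper's valuation argument has no analogue in your proposal.

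\emph{The invariant class.} Your maintained model is ``bounded sum of products of bounded powers of sums-of-univariates,'' essentially $\Sigma^{[\cdot]}\Pi\wedge\Sigma\wedge$. After one round, the $i$-th summand is $(T_i/T_k)\cdot\dlg(T_i/T_k)$: the $\dlg$ piece does become a $\SWSU{}{}/\SWSU{}{}$ expression via the geometric series, but the prefactor $T_i/T_k$ is a genuine ratio of $\PSU{}{}$ circuits that does not fold into your class. Your remedy---``clearing the one remaining denominator $T_k$''---would blow the top fan-in right back up, which is exactly what $\DiDI$ is built to avoid. The paper instead works in the explicitly \emph{rational} bloated class $\Gen{k}{s}=\Sigma^{[k]}(\PSU{}{}/\PSU{}{})\cdot(\SWSU{}{}/\SWSU{}{})$ and proves it closed under one $\DiDI$ round (\autoref{cl:size-bound-on-Ti}), with the key invariant $U_{i,j}|_{z=0},V_{i,j}|_{z=0}\in\F\setminus\{0\}$ making the $\PSU{}{}$ parts invertible at every stage. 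Without carrying those fractions, your round cannot iterate. Finally, the g.c.d.\slash radical preprocessing you propose is not in the paper and is not needed once the $z$-scaling and valuation bookkeeping are in place: the minimality of $\v_z(T_{k-j,j})$, not a g.c.d., is what prevents the divisor from being ``swallowed.''
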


\begin{remark*} ~ 
    \begin{enumerate}
        \item Case $k \le 2$ can be solved by invoking \cite[Theorem 5.2]{saha2013case}; but $k \ge3$ was open.
        \item Our technique \emph{necessarily} blows up the exponent exponentially in $k$. In particular, it would be interesting to design an efficient time algorithm when $k=\Theta(\log s)$.
        \item It is not clear if the current technique gives PIT for $\Sigma^{[k]}\Pi\Sigma\mathsf{M}_2$ circuits, where \(\Sigma\mathsf{M}_2\) denotes sum of {\em bi}variate monomials computed and fed into the top product gate.
    \end{enumerate}
\end{remark*}

Next, we go to the blackbox setting and address two models of interest, namely--- $\SPSU{[k]}{}{}$ and $\SPSP{[k]}{}{[\delta]}$, where $k, \delta$ are constants. Our work builds on previous ideas for unbounded top fanin (1) Jacobian \cite{agrawal2016jacobian}, (2) the known blackbox PIT for $\SWSU{}{}{}$ and $\SWSP{}{}{[\delta]}$ \cite{gurjarARO,forbes2015deterministic} while maneuvering with an analytic approach \emph{via} power-series, which unexpectedly \emph{reduces} the top $\Pi$-gate to a $\wedge$-gate.

\begin{theorem}[Blackbox depth-$4$ PIT]\label{thm:thm2}
    \textcolor{white}{.}
    \begin{enumerate}[label={(\alph*)}]
        \item There is a $s^{O(k \log \log s)}$ time blackbox PIT algorithm for $\SPSU{[k]}{}{}$ circuits of size $s$, over $\F[\x]$.
        \item There is a $s^{O(\delta^2\,k\,\log s)}$ time blackbox PIT algorithm for $\SPSP{[k]}{}{[\delta]}$ circuits of size $s$, over $\F[\x]$.
    \end{enumerate}
\end{theorem}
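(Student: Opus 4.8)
The plan is to reduce blackbox PIT for the two bounded-top-fanin classes to blackbox PIT for the ``sum of powers'' classes $\SWSP{}{}{[\delta]}$ and $\SWSU{}{}$, for which quasipolynomial hitting sets are already available (\cite{forbes2015deterministic} and \cite{gurjarARO}), and to combine the $k$ product gates through the Jacobian / algebraic-independence machinery of \cite{agrawal2016jacobian} rather than through a recursion-on-$k$ (which would cost an exponential-in-$k$ factor, as in the whitebox algorithm). I describe part~(b); part~(a) is the same scheme with every bottom $\Sigma\Pi^{[\delta]}$-gate a sum of univariates, so the reduction lands in $\SWSU{}{}$ and its sharper hitting set gives $s^{O(k\log\log s)}$ in place of $s^{O(\delta^2 k\log s)}$.

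\smallskip
\noindent\emph{Preprocessing and the $\Pi\!\to\!\wedge$ identity.} Given $f=\sum_{i=1}^{k}T_i$ with $T_i=\prod_j g_{ij}$, $\d(g_{ij})\le\delta$, I first fold a generic shift into a fresh variable $y$ (replace $x_\ell$ by $x_\ell+\alpha_\ell y$, with the $\alpha_\ell$'s ranging over a small ``shift component'' of the eventual hitting set). For good $\alpha$ every nonzero $g_{ij}$ has a nonzero $y^0$-coefficient, so $\log g_{ij}\in\F[\x][[y]]$ and, writing $D:=\d(f)$ and working modulo $y^{D+1}$, one has $\log g_{ij}\equiv\sum_{e=1}^{D}\tfrac{(-1)^{e-1}}{e}\big(g_{ij}/g_{ij}|_{y=0}-1\big)^{e}$, a $\wedge\Sigma\Pi^{[\delta]}$ power series of size $\poly(s)$. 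Hence $\ell_i:=\sum_j\log g_{ij}$ is a $\Sigma\wedge\Sigma\Pi^{[\delta]}$ power series, and $T_i\equiv\exp(\ell_i)\equiv\sum_{e=0}^{D}\ell_i^e/e!\pmod{y^{D+1}}$: the top $\Pi$-gate of $T_i$ has collapsed to a $\wedge$-gate over $\ell_i$. Thus $f\bmod y^{D+1}$, which vanishes iff $f=0$, lies in the subalgebra generated by the $\le k$ power series $\ell_1,\dots,\ell_k$; in particular the depth-4 circuit it defines has product gates of transcendence degree $\le k$ (a constant), and its bottom building blocks are $\Sigma\wedge\Sigma\Pi^{[\delta]}$ (resp. $\Sigma\wedge\Sigma\wedge$) circuits.

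\smallskip
\noindent\emph{Jacobian step and finishing.} Now I invoke the faithful-homomorphism technique of \cite{agrawal2016jacobian}: since the product gates have transcendence degree $\le k$, one constructs a homomorphism $\Phi$ sending the $x_\ell$'s to polynomials in only $O(k)$ variables that preserves the non-vanishing of $f$, provided $\Phi$ avoids the zero set of a certain Jacobian-minor times a product of the relevant factors — an object whose parts are exactly the $\Sigma\wedge\Sigma\Pi^{[\delta]}$ (resp. $\Sigma\wedge\Sigma\wedge$) circuits above, so a suitable $\Phi$ can be extracted from the hitting sets of \cite{forbes2015deterministic} (resp. \cite{gurjarARO}). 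After applying $\Phi$ we are left with an $O(k)$-variate polynomial of degree $\poly(s)$, which we test by brute-force evaluation on a grid of $\poly(s)^{O(k)}$ points. The product of this grid with the shift component, the auxiliary $\Sigma\wedge$-class hitting set, and the (polynomially many) guesses for which $g_{ij}$ vanish identically and for the truncation order, is a hitting set of size $s^{O(\delta^2 k\log s)}$ for $\SPSP{[k]}{}{[\delta]}$; the same argument with $\SWSU{}{}$ in place of $\SWSP{}{}{[\delta]}$ gives $s^{O(k\log\log s)}$ for $\SPSU{[k]}{}{}$.

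\smallskip
\noindent\emph{Anticipated obstacles.} The delicate points are, first, making the analytic passage blackbox-compatible: the shift ensuring nonzero $y^0$-coefficients, the choice of $D$, and the genericity needed for $\Phi$ to be faithful must all be realized by one polynomial-size family of maps whose union is the hitting set, and the power-series denominators $1/g_{ij}$ must be shown to contribute only $\poly(s)$ to the size after truncation. Second — and this I expect to be the crux — one must verify that the truncated $f$ really has transcendence degree $\le k$ and that the witnessing Jacobian minor is computed by a $\poly(s)$-size circuit whose factors stay inside $\SWSP{}{}{[\delta]}$/$\SWSU{}{}$, with bottom degree $O(\delta)$ rather than $O(\delta D)$ (this is precisely why one keeps the $\wedge$-gate rather than expanding the powers); only then is the single quasipolynomial cost the one call to the known $s^{O(\delta^2\log s)}$ (resp. $s^{O(\log\log s)}$) hitting set.
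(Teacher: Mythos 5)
Your approach is essentially the paper's: both establish $\tdeg_{\F}(T_1,\ldots,T_k)\le k$ trivially, invoke the Jacobian criterion and the faithful-homomorphism lemma of \cite{agrawal2016jacobian}, show via a power-series expansion that the witnessing Jacobian minor (after dividing out $T_1\cdots T_k$) lies in $\SWSP{}{}{[\delta]}$ (resp.~$\SWSU{}{}$), use the hitting sets of \cite{forbes2015deterministic} (resp.~\cite{gurjarARO}) to build the rank-preserving map $\Psi'$, extend it to $\Phi$ via Lemma~\ref{lem:faithful}, and finish by brute force on $O(k)$ variables. The paper reaches the $\SWSP{}{}{[\delta]}$ form by applying the multilinearity identity $J_{\x_k}(\T_k)=\sum_{g_1,\ldots,g_k}\bigl(\tfrac{T_1\cdots T_k}{g_1\cdots g_k}\bigr)J_{\x_k}(g_1,\ldots,g_k)$ and then expanding each $1/\Psi(g_i)$ as a geometric series; your $\log/\exp$ framing is a different but equivalent narration of the same $\Pi\to\wedge$ observation, since $\partial_{x_j}\ell_i=\partial_{x_j}T_i/T_i=\sum_m \partial_{x_j}g_{im}/g_{im}$ and $\calJ_\x(\T)=\mathrm{diag}(T_i)\,\calJ_\x(\ell)$ have the same rank when all $T_i\neq 0$.

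There is, however, a concrete bug in your preprocessing. The shift $x_\ell\mapsto x_\ell+\alpha_\ell y$ leaves the $y^0$-coefficient of $g_{ij}(\x+\alpha y)$ equal to $g_{ij}(\x)$, a degree-$\delta$ \emph{polynomial}, not a field constant; hence $1/g_{ij}(\x+\alpha y)$ lands in $\F(\x)[[y]]$ rather than $\F[\x][[y]]$, and the coefficients of your $\ell_i$ carry denominators $g_{ij}(\x)$. This destroys exactly the size bound you flag as the crux: the power-series inverse would no longer stay inside $\SWSP{}{}{[\delta]}$ with bottom degree $O(\delta)$. What you need is the paper's map $\Psi:x_\ell\mapsto y\,x_\ell+\alpha_\ell$, which simultaneously scales and shifts, so the constant term becomes $g_{ij}(\a)\in\F\setminus\{0\}$ once $\a$ is drawn from a hitting set for products of sparse polynomials (Lemma~\ref{lem:prod-sparse-hs}). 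With this correction, $\Psi(g)=A-yB$ with $A\in\F$ nonzero and $\size(B)\le 3^{\delta}\size(g)$, and the whole chain of size bounds (Claims~\ref{claim:jacobiansize}--\ref{claim:small-size-F-swsp}) goes through. Two smaller remarks: the step ``guess which $g_{ij}$ vanish identically'' is superfluous (a vanishing $g_{ij}$ makes $T_i=0$, so such terms are simply absent from the circuit); and your $\exp$ step is not actually needed as a circuit transformation — only the linearity of $\dlg$, i.e.\ the identity above for the Jacobian minor, is used.
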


\begin{remark*} ~ 
    \begin{enumerate}
    \setlength\itemsep{.1em}
        \item \autoref{thm:thm2} (b) has a \emph{better} dependence on $k$, but \emph{worse} on $s$, than \autoref{thm:thm1}. Our results are quasipoly-time even up to $k,\delta = \poly(\log s)$.
        \item \autoref{thm:thm2} (a) is better than \autoref{thm:thm2} (b), because $\SWSU{}{}{}$ has a faster algorithm than $\SWSP{}{}{[\delta]}$. 
        \item Even for $\SPSU{[3]}{}{}$ and $\SPSP{[3]}{}{[3]}$ models, we leave the {\em poly}-time blackbox question open. 
    \end{enumerate}
\end{remark*}

\subsection{Prior works on related models}  

In the last two decades, there has been a surge of results on identity testing for restricted
classes of bounded depth algebraic circuits (e.g.~`locally' bounded independence, bounded read/occur, bounded variables). There have been numerous results on PIT for depth-3 circuits with bounded
top fanin (known as $\Sigma^{[k]}\Pi\Sigma$-circuits). Dvir and Shpilka \cite{dvir2007locally} gave the first quasipolynomial-time deterministic whitebox algorithm for $k=O(1)$, using rank based methods, which finally lead Karnin and Shpilka \cite{karnin2011black} to design algorithm of same complexity in the blackbox setting. Kayal and Saxena \cite{kayal2007polynomial} gave the first polynomial-time algorithm of the same. Later, a series of works 
in \cite{saxena2011almost,saxena2012blackbox,saxena2013sylvester, agrawal2016jacobian} generalized the model and gave $n^{O(k)}$-time algorithm when the algebraic rank of the product polynomials are bounded. Note that in the white-box setting, our algorithm gives a poly(s) time PIT algorithm for bounded top-fanin depth-3 circuit. Moreover, the dependence on the top-fan is exponential. In the blackbox setting, our algorithm solves PIT for bounded top-fanin depth-3 circuit in quasi-poly(s) time, hence it does not offer any speedup compared to known polynomial time algorithms. However, our algorithm does give a PIT idea that is different from the known ones.

There has also been some progress on PIT for restricted classes of depth-4 circuits. A quasipolynomial-time blackbox PIT algorithm for \emph{multilinear} $\SPSP{[k]}{}{}$-circuits was designed in \cite{karnin2013deterministic}, which was further improved to a $n^{O(k^2)}$-time deterministic algorithm in \cite{saraf2018black}. A quasipolynomial blackbox PIT was given in \cite{beecken2013algebraic,DBLP:conf/coco/0001S16a} when algebraic rank of the irreducible factors in each multiplication gate as well as the bottom fanin are bounded. Further interesting restrictions like sum of product of fewer variables, and more structural restrictions have been exploited, see \cite{forbes2013quasipolynomial,agrawal2013quasi, forbes2015deterministic,mukhopadhyay2016depth,DBLP:journals/toc/0001S17}. Some progress has also been made for bounded top-fanin and bottom-fanin depth-$4$ circuits via incidence geometry \cite{gupta2014algebraic,shpilka2019sylvester,peleg2020generalized}. In fact, very recently, \cite{peleg2020polynomial} gave a polynomial-time blackbox PIT for $\SPSP{[3]}{}{[2]}$-circuits.

\begin{table}[ht]
    \centering
        \renewcommand{\arraystretch}{1.5}
        \begin{tikzpicture}[line width=.1mm,rounded corners=.5em]
        \node (table) [clip,inner sep=.5\pgflinewidth] {
        \begin{tabular}{llll}
            Model & Time & Ref. \\ \hline
            $\SPS{[k]}{[d]}$ & $\poly(n,d^k)$ & \cite{saxena2012blackbox} \\
            Multilinear $\SPSP{[k]}{}{}$ & $\poly(n^{O(k^2)})$ & \cite{saraf2018black,agrawal2016jacobian}\\
            $\SPSP{}{}{}$ of bounded $\mathsf{trdeg}$ & $\poly(s^{\mathsf{trdeg}})$ & \cite{beecken2013algebraic}\\ 
            $\SPSP{(k)}{}{[d]}$ of bounded \emph{local} $\mathsf{trdeg}$ & $\mathsf{QP}(n)$ & \cite{DBLP:journals/toc/0001S17} \\ 
            $\SPSP{[3]}{}{[2]}$ & $\poly(n,d)$ & \cite{peleg2020polynomial}\\
            $\overline{\SPSU{[k]}{}{}}$ & $s^{O(k \cdot 7^k \cdot \log log s)}$ & \cite{DDS21} \\
            $\overline{\SPSP{[k]}{}{[\delta]}}$ & $s^{O(\delta^2 \cdot k \cdot 7^k \cdot \log s)}$ & \cite{DDS21}\\
            $\SPSP{}{}{}$ & SUBEXP(n) & \cite{LST21}\\
            Whitebox $\SPSU{[k]}{}{}$ & $s^{O(k\,7^k)}$ & This work.\\
            $\SPSU{[k]}{}{}$ & $s^{O(k \log \log s)}$ & This work.\\
            $\SPSP{[k]}{}{[\delta]}$ & $s^{O(\delta^2\,k\,\log s)}$ & This work.\\
        \end{tabular}
        };
        \draw ([xshift=.5*\pgflinewidth,yshift=-.5*\pgflinewidth]table.north west) 
        rectangle ([xshift=-.5*\pgflinewidth,yshift=.5*\pgflinewidth]table.south east);
        \end{tikzpicture}
        \caption{Time complexity comparision of PIT algorithms related to $\SPSP{}{}{}$ circuits}
\end{table}

The authors recently generalised their novel $\DiDI$-technique to solve 'border PIT' of depth-4 circuits \cite{DDS21}. Specifically, they give a $s^{O(k \cdot 7^k \cdot \log log s)}$ time and $s^{O(\delta^2 \cdot k \cdot 7^k \cdot \log s)}$ time blackbox PIT algorithm for $\overline{\SPSU{[k]}{}{}}$ and $\overline{\SPSP{[k]}{}{[\delta]}}$ respectively. By definition, border classes capture exact complexity classes, hence border PIT results seemingly subsumes the results we present in this paper. However, the whitebox PIT algorithm here is much more efficient than their quasi-poly time blackbox algorithm. Further, the time complexity of blackbox PIT algorithms has a better dependence on $k$ and $\delta$ compared to their exponential dependence. Lastly, the proofs in this paper are simpler as we don't have to deal with an infinitesimally close approximation of polynomials in  border complexity classes. Very recently, Dutta and Saxena~\cite{duttaseparated21} showed an exponential-gap fanin-hierarchy theorem for bounded depth-3 circuits which is also based on a {\em finer} generalization of the $\DiDI$-technique.

In a breakthrought result by Limaye, Srinivasan and Tavenas \cite{LST21} the {\em first} superpolynomial lower bound for constant depth circuits was obtained. Their lower bound result, together with the `hardness vs randomness' tradeoff result of \cite{chou2018hardness} gives the {\em first} deterministic blackbox PIT algorithm for general depth-4 circuits which runs in $s^{O(n^\epsilon)}$ for all real $\epsilon > 0$. Their result is the first {\em sub}exponential time PIT algorithm for depth-4 circuits. Moreover, compared to their algorithm, our quasipoly time blackbox and polynomial time whitebox algorithms are significantly faster.

\medskip
{\bf Limitations of known techniques.}~~People have studied depth-$4$ PIT only with extra restrictions, mostly due to the limited applicability of the existing techniques as they were tailor-made for the specific models and do not generalize. E.g.~the previous methods handle  $\delta=1$ (i.e. linear polynomials at the bottom) or $k=2$ (via \emph{factoring}, \cite{saha2013case}). While $k=2$ to $3$, or $\delta=1$ to $2$ (i.e. `linear' to `quadratic') already demands a qualitatively different approach.  

Whitebox $\SPSU{[k]}{}{}$ model generalizes the famous bounded top fanin depth-$3$ circuits $\Sigma^{[k]}\Pi\Sigma$ of  \cite{kayal2007polynomial}; but their Chinese Remaindering (CR) method, loses applicability and thus fails to solve even a slightly more general model. The blackbox setting involved similar `certifying path' ideas in \cite{saxena2012blackbox} which could be thought of as general CR. It comes up with an ideal $I$ such that $f \ne 0 \bmod I$ and finally preserves it under a constant-variate linear map. The preservation gets harder (for both $\SPSU{[k]}{}{}$ and $\SPSP{[k]}{}{[\delta]}$) due to the increased non-linearity of the ideal $I$ generators. Intuitively, larger $\delta$ via ideal-based routes, brings us to the Gr\"obner basis method (which is doubly-exponential-time in $n$) \cite{vasconcelos2004computational}. We know that ideals even with $3$-generators (analogously $k=4$) already capture the whole ideal-membership problem \cite{sap19}.

The algebraic-geometric approach to tackle $\SPSP{[k]}{}{[\delta]}$ has been explored in \cite{beecken2013algebraic,gupta2014algebraic,mukhopadhyay2016depth,zeyu2021variety}. The families which satisfy a certain Sylvester–Gallai configuration (called SG-circuits) is the harder case which is conjectured to have constant transcendence degree \cite[Conj.~1]{gupta2014algebraic}. Non-SG circuits is the case where the nonzeroness-certifying-path question reduces to radical-ideal  non-membership questions \cite{garg2020special}. This is really a variety question where one could use algebraic-geometry tools to design a poly-time blackbox PIT. In fact, very recently, Guo \cite{zeyu2021variety} gave a $s^{\delta^k}$-time PIT by constructing explicit variety evasive subspace families. Unfortunately, this is not the case in the ideal non-membership; this scenario makes it much harder to solve $\SPSP{[k]}{}{[\delta]}$. From this viewpoint, radical-ideal-membership explains well why the intuitive $\Sigma^{[k]} \Pi \Sigma$ methods do not extend to  $\SPSP{[k]}{}{[\delta]}$.

Interestingly, Forbes \cite{forbes2015deterministic} found a quasipolynomial-time PIT for $\SWSP{}{}{[\delta]}$ using shifted-partial derivative techniques; but it naively fails when one replaces the $\wedge$-gate by $\Pi$ (because the `measure' becomes too large). The duality trick of \cite{saxena2008diagonal} completely solves whitebox PIT for $\SWSU{}{}$, by transforming it to a read-once oblivious ABP (ROABP); but it is inapplicable to our models with the top $\Pi$-gate (due to large waring rank and ROABP-width). A priori, our models are incomparable to ROABP, and thus the famous PIT algorithms for ROABP \cite{forbes2013quasipolynomial,forbes2014hitting,gurjarARO} are not expected to help either.

Similarly, a naive application of the \emph{Jacobian} and \emph{certifying path} technique from \cite{agrawal2016jacobian} fails for our models because it is difficult to come up with a \emph{faithful} map for constant-variate reduction. Kumar and Saraf \cite{DBLP:conf/coco/0001S16a} crucially used that the computed polynomial has low individual degree (such that~\cite{dvir2010hardness} can be invoked), while in \cite{DBLP:journals/toc/0001S17} they exploits the low algebraic rank of the polynomials computed below the top $\Pi$-gate. Neither of them hold in general for our models. Very recently, Peleg and Shpilka \cite{peleg2020polynomial} gave a poly-time blackbox PIT for $\SPSP{[3]}{}{[2]}$, via incidence geometry (e.g.~Edelstein-Kelly theorem involving `quadratic' polynomials), by solving \cite[Conj.~1]{gupta2014algebraic} for $k=3,\delta=2$. The method seems very strenuous to generalize even to `cubic' polynomials ($\delta=3=k$). 

\textbf{PIT for other models. } Blackbox PIT algorithms for many restricted models are known. Egs.~ROABP related models \cite{raz2005deterministic,jansen2010deterministic, agrawal2015hitting,gurjarARO,gurjar2017deterministic,forbes2014hitting,anderson2018identity}, $\log$-variate circuits \cite{forbes2018towards,bisht2020poly}, and non-commutative models \cite{garg2016deterministic,DBLP:journals/cjtcs/LagardeMP19}.

\subsection{Techniques and motivation}\label{sec:pf-ideas}

Both the proofs are analytic as they use \emph{logarithmic derivative}, and its power-series expansion which greatly transform the respective models. Where the nature of the first proof is inductive, the second is a more direct \emph{one-shot} proof. In both the cases, we essentially reduce to the well-understood \emph{wedge} models, that have unbounded top fanin, yet for which PITs are known. This reduction is unforeseeable and quite `power'ful.  

The analytic tool that we use, appears in algebra and complexity theory through the \emph{formal power series} ring $\ringR[[x_1,\hdots,x_n]]$ (in short $\ringR[[\x]]$), see \cite{10.2307/2317940,sinhababu2019power,dutta2018discovering}. The advantages of the ring $\ringR[[\x]]$ are many and they usually emerge because of the inverse identity: $(1 - x_1)^{-1} = \sum_{i \ge 0}\, x_1^i$, which does not make sense in $\ringR[x]$, but is valid in $\ringR[[\x]]$. Other analytic tools used are inspired from Wronskian (linear dependence) \cite[Theorem~7]{koiran2015wronskian} \cite{kayal2015lower}, Jacobian (algebraic dependence) \cite{beecken2013algebraic, agrawal2016jacobian,pandey2018algebraic}, and logarithmic derivative operator $\dlg_{\,z_1}(f) = (\partial_{z_1}\,f)/f$.

We will be work with the division operator (e.g.~ $\ringR(z_1)$, over a certain ring $\ringR$). However, the divisions do not come for free as they require invertibility with respect to $z_1$ throughout (again landing us in $\ringR[[z_1]]$. For circuit classes $C, D$ we define class \[\calC/\calD:=\{f/g \mid f\in \calC, \calD \ni g \neq 0\}.\]Similarly $\calC \cdot \calD$ to denotes the class taking respective products.

\subsubsection{The \texorpdfstring{$\DiDI$}{DiDI}-technique} In \autoref{thm:thm1} we introduce a novel technique for designing PIT algorithms which comprises of inductively applying two fundamental operations on the input circuits to reduce it to a more tractable model. Suppose we want to test $\sum_{i \in [k]} T_i \stackrel{?}{=} 0$ where each $T_i$ is computable by $\PSU{}{}$. The idea is to \emph{DI}vide it by $T_k$ to obtain $1 + \sum_{i \in [k-1]} T_i/T_k$ and then {\em D}erivative to reduce the fanin to $k-1$ and obtain $\sum_{i \in [k-1]} \calT_i$. Naturally, these operations pushes us to work with the fractional ring (e.g.~ $\ringR(z_1)$, over a certain ring $\ringR$), further it also distorts the model as $\calT_i$'s are no longer computable by simple $\PSU{}{}$ circuits. However, with careful analytically analysis we establish that the non-zeroness is preserved in the reduced model. The process is then repeated until we reach $k=1$, while maintaining the invariants which help us in preserving the non-zeroness till the end. We finish the proof by showing that the identity testing of reduced model can be done using known PIT algorithms.

\subsubsection{Jacobian hits again} In \autoref{thm:thm2} we exploit the prowess of the Jacobian polynomial first introduced in \cite{beecken2013algebraic} and later explored in \cite{agrawal2016jacobian} to unify known PIT algorithms and design new ones. Suppose we want to test $\sum_{i \in [k]} T_i \stackrel{?}{=} 0$, where $T_i \in \PSP{}{[\delta]}$ (respec.~$\PSU{}{}$). We associate the Jacobian $J(T_1, \hdots, T_r)$ to captures the algebraic independence of $T_1, \hdots, T_r$ assuming this to be a transcendence basis of the $T_i$’s. We design a variable reducing linear map $\Phi$ which preserves the algebraic independece of $T_1, \dots, T_r$ and show that for any $C$: $C(T_1,\hdots, T_k) = 0 \iff C(\Phi(T_1), \hdots, \Phi(T_k)) = 0$. Such a map is called `faithful' \cite{agrawal2016jacobian}. The map $\Phi$ ultimately provides a hitting set for $T_1 + \hdots +T_k$ , as we reduce to a PIT of a polynomial over `few' (roughly equal to $k$) variables, yielding a $\QP$-time algorithm.
\section{Preliminaries}  \label{sec:prel}
Before proving the results, we describe some of the assumptions and notations used throughout the paper. $\x$ denotes $(x_1,\hdots,x_n)$. $[n]$ denotes $\{1,\hdots,n\}$.

\subsection{Notations and Definitions}

\smallskip
\begin{itemize}
    \item \textbf{Logarithmic derivative.} Over a ring $\ringR$ and a variable $y$, the logarithmic derivative $\dlg_{y} : \ringR(y) \to \ringR(y)$ is defined as $\dlg_{y}(f):=\partial_y\,f/f$; here $\partial_{y}$ denotes the partial derivative with respect to variable $y$. One important property of $\dlg$ is that it is additive over a product as \[\dlg_{y}(f \cdot g) \,=\, \frac{\partial_y(f \cdot g)}{f \cdot g} \,=\, \frac{(f \cdot \partial_y g \,+\, g \cdot \partial_y f)}{f \cdot g}\,=\,\dlg_{y}(f) + \dlg_y(g).\] 
    \noindent We refer this effect as \emph{linearization} of product.

    \item \textbf{Circuit size.} Sparsity $\sp(\cdot)$ refers to the number of nonzero monomials. In this paper, it is a parameter of the circuit size. In particular, $\size(g_1 \cdots g_s)=\sum_{i \in [s]}\,(\sp(g_i)+\deg(g_i))$, for $g_i \in \Sigma \wedge$ (respectively~$\Sigma \Pi^{[\delta]}$). In whitebox settings, we also include the \emph{bit-complexity} of the circuit (i.e.~bit complexity of the constants used in the wires) in the size parameter. 
    Some of the complexity parameters of a circuit are \emph{depth} (number of layers), \emph{syntactic degree} (the maximum degree polynomial computed
    by any node), {\em fanin} (maximum number of inputs to a node).

    \item \textbf{Hitting set.} A set of points $\calH \subseteq \F^n$ is called a \emph{hitting-set} for a class $\calC$ of $n$-variate polynomials if for any nonzero polynomial $f \in \calC$, there exists a point in $\calH$ where $f$ evaluates to a nonzero value. A $T(n)$-time hitting-set would mean that the hitting-set can be generated in time $T(n)$, for input size $n$. 

    \item \textbf{Valuation.} Valuation is a map $\v_y : \ringR[y] \to \Z_{\ge 0}$, over a ring $\ringR$, such that $\v_y(\cdot)$ is defined to be the maximum power of $y$ dividing the element. It can be easily extended to fraction field $\ringR(y)$, by defining $\v_{y}(p/q):= \v_y(p) - \v_{y}(q)$; where it can be negative.
    
    \item \textbf{Field.} We denote the underlying field as $\F$ and assume that it is of characteristic $0$. All our results hold for other fields (eg.~$\Q_p, \F_p$) of \emph{large} characteristic (see Remarks in Section \ref{sec:pf:thm1}-\ref{sec:pf-thm2}). 
    
    \item \textbf{Jacobian.} The Jacobian of a set of polynomials ${\mathbf f}=\{f_1,\hdots,f_m\}$ in $\F[\x]$ is defined to be the matrix $\calJ_{\x}({\mathbf f}) := \left(\partial_{x_j}(f_i)\right)_{m \times n}$. Let $S \subseteq \x = \{x_1,\hdots,x_n\}$ and $|S|=m$. Then, polynomial $J_{S}({\bf f})$ denotes the minor (i.e.~determinant
    of the submatrix) of $\calJ_{\x}({\bf f})$, formed by the columns corresponding to the variables in $S$.
\end{itemize}

\subsection{Basics of Algebraic Complexity Theory} \label{sec:basic-act}

For detailed discussion on the basics of Algebraic Complexity Theory we will encourage readers to refer \cite{shpilka2010arithmetic,saxena2009progress,Mah13,saxena2014progress,saptharishi2019survey}. Here we will formally state a few of the PIT results and properties of circuits for the later reference.

\subsubsection*{Trivial PIT Algorithm}

The simplest PIT algorithm for any circuit in general is due to Polynomial Identity Lemma \cite{ore1922hohere,DL78,Zip79,schwartz1980fast}. When the number of variables is small, say $O(1)$, then this algorithm is very efficient.

\begin{lemma}[Trivial PIT] \label{lem:trivial-pit}
    For a class of $n$-variate, individual degree $ < d$ polynomial $f \in \F[\x]$ there exists a deterministic PIT algorithm which runs in time $O(d^n)$.
\end{lemma}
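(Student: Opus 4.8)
The plan is to evaluate $f$ on a combinatorial grid. Fix any subset $S \subseteq \F$ with $|S| = d$; since $\F$ has characteristic $0$, such a set exists (e.g.\ $S = \{0,1,\dots,d-1\}$). The algorithm computes $f(\a)$ for every $\a \in S^n$ and declares $f \equiv 0$ if and only if all $d^n$ of these evaluations vanish. Returning a correct answer is then equivalent to the statement that a nonzero polynomial of individual degree $< d$ cannot vanish identically on $S^n$.

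I would prove that statement by induction on the number of variables $n$. For $n=1$, a nonzero univariate polynomial of degree $< d$ has at most $d-1$ roots and hence is nonzero at some point of the $d$-element set $S$. For the inductive step, write $f = \sum_{i=0}^{d-1} f_i(x_2,\dots,x_n)\,x_1^i$; since $f \ne 0$, some coefficient $f_i$ is a nonzero polynomial in $n-1$ variables of individual degree $< d$, so by the inductive hypothesis there is a point $(a_2,\dots,a_n) \in S^{n-1}$ with $f_i(a_2,\dots,a_n) \ne 0$. Then $f(x_1,a_2,\dots,a_n)$ is a nonzero univariate polynomial of degree $< d$ in $x_1$, and the base case applies, producing a point of $S^n$ on which $f$ does not vanish.

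The running time is governed by the $|S^n| = d^n$ evaluations: enumerating the grid and testing whether every output is zero is immediate, giving the claimed $O(d^n)$ bound, with the per-evaluation cost absorbed into the computational model (or polynomial in the size of the given representation of $f$ and hence negligible next to $d^n$).

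There is essentially no obstacle here; the one hypothesis worth flagging is that $\F$ must contain at least $d$ distinct elements, which holds automatically in characteristic $0$. Over a small field of large characteristic one would first pass to an extension of size $\ge d$ (a $\poly(\log d)$ overhead) and run the identical argument there, which is the sense in which this lemma — like the rest of the paper — adapts to fields of sufficiently large characteristic.
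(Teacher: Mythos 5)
Your proof is correct and is the standard grid-evaluation argument that the paper is implicitly invoking; the paper itself gives no proof, merely citing the Polynomial Identity Lemma (Ore, DeMillo--Lipton, Zippel, Schwartz). The inductive argument you give — peeling off one variable at a time, using that a nonzero coefficient polynomial has a nonvanishing point on $S^{n-1}$, then reducing to the univariate root-counting base case — is precisely the deterministic finite-grid version of that lemma (sometimes stated as: a nonzero polynomial with individual degrees $< d$ cannot vanish on all of $S^n$ when $|S| \ge d$), and it matches the intended reading of the $O(d^n)$ bound as the number of grid evaluations. Your remark about passing to a field extension when $|\F| < d$ is also the right caveat and is consistent with the paper's standing assumption of characteristic $0$ or large characteristic.
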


\subsubsection*{Sparse Polynomial}
Sparse PIT is testing the identity of polynomials with bounded number of monomials. There have been a lot of work on sparse-PIT, interested readers can refer \cite{ben1988deterministic,klivans2001randomness} and references therein. For the proof of poly-time hitting set of Sparse PIT see \cite[Thm.~2.1]{saxena2009progress}.

\begin{theorem}[Sparse-PIT map \cite{klivans2001randomness}] \label{thm:sparse-ks}
Let $p(\x) \in \F[\x]$ with individual degree at most $d$ and sparsity at most $m$. Then, there exists $1 \le r \le (mn \log d)^2$, such that \[ p(y,y^d,\hdots,y^{d^{n-1}}) \ne 0,\bmod\,y^r-1.\] If $p$ is computable by a size-$s$ $\Sigma \Pi$ circuit, then there is a deterministic algorithm to test its identity which runs in time $\poly(s,m)$.
\end{theorem}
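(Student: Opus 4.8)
The plan is to analyze the \emph{Kronecker substitution} followed by a modular reduction of exponents. First I would apply the map $\sigma: x_i \mapsto y^{d^{i-1}}$ from the statement. Since $p$ has individual degree at most $d$, the map $\e \mapsto N(\e) := \sum_{i \in [n]} e_i d^{i-1}$ encodes the exponent vector $\e$ essentially in base $d$ (here one uses that the individual degree is below $d$; if one only assumes it is $\le d$, replace $d$ by $d+1$ throughout, which does not affect any of the bounds). Consequently $\sigma$ is injective on the exponent vectors occurring in $p$, so $\sigma(p) = p(y, y^d, \dots, y^{d^{n-1}})$ is a nonzero univariate polynomial — but of potentially enormous degree, smaller than $d^n$.

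The key step is to bring the degree down by working modulo $y^r - 1$ for a small $r$, i.e.\ reducing every exponent $N(\e)$ to $N(\e) \bmod r$. Let $\e^{(1)}, \dots, \e^{(m')}$, with $m' \le m$, be the exponent vectors of the monomials actually appearing in $p$. It suffices to choose $r$ so that $r \nmid N(\e^{(a)}) - N(\e^{(b)})$ for every pair $a \ne b$; then all $m'$ monomials survive the reduction carrying their original coefficients, and hence $\sigma(p) \bmod (y^r - 1) \ne 0$. Each difference $N(\e^{(a)}) - N(\e^{(b)})$ is a nonzero integer of absolute value strictly less than $d^n$, hence has at most $\log_2(d^n) = n \log_2 d$ distinct prime divisors; there are at most $\binom{m}{2}$ such differences, so together they have fewer than $\binom{m}{2}\, n \log_2 d$ distinct prime divisors. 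Any prime $r$ outside this finite set works, and by standard Chebyshev-type estimates on the distribution of primes one can take such a prime of size $r \le (mn \log d)^2$. This proves the first assertion.

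For the algorithmic claim, a size-$s$ $\Sigma\Pi$ circuit computes a polynomial with $m \le s$ monomials, individual degree $d \le s$, and $n \le s$ variables, so the bound gives $r \le \poly(s)$. The reduced polynomial $q(y) := p(y, y^d, \dots, y^{d^{n-1}}) \bmod (y^r - 1)$ has degree $< r$ and can be written down explicitly from the circuit in time $\poly(s,m)$ — reduce each monomial's $N(\e)$ modulo $r$ and collect terms — and $q \equiv 0$ can be decided by evaluating $q$ at $r$ distinct elements of $\F$ (or of a small extension, if $|\F| < r$), which is $\poly(s,m)$ overall. Equivalently, this exhibits the explicit hitting set $\{(\alpha, \alpha^d, \dots, \alpha^{d^{n-1}}) : \alpha \in S\}$ for any set $S \subseteq \F$ (or an extension) of size $r$.

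The step I expect to be the crux is the choice of $r$: everything hinges on the two quantitative facts that the pairwise Kronecker differences are integers of magnitude below $d^n$ (so each has only $O(n \log d)$ prime factors) and that the $N$-th prime is $O(N \log N)$; combining these with a union bound over the $\binom{m}{2}$ differences is exactly what produces the stated polynomial bound. The injectivity of $\sigma$ on monomials and the final evaluation-based test are routine by comparison, and for the second part one only needs to be slightly careful to pass to an extension field when $\F$ is too small to supply $r$ distinct evaluation points.
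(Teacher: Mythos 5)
The paper cites this theorem (from Klivans--Spielman, with the survey reference \cite{saxena2009progress} for a proof) and does not prove it itself; your argument is the standard one — Kronecker substitution, followed by picking a small prime $r$ that avoids all prime factors of the $\binom{m}{2}$ pairwise exponent differences, and then trying all $r$ up to $(mn\log d)^2$ in the algorithmic part — so it matches the intended proof in substance. Your side remark that the exponent base should really be $d+1$ rather than $d$ when the individual degree may equal $d$ is a correct and worthwhile observation about the statement as written.
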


Indeed if identity of sparse polynomial can be tested efficiently, product of sparse polynomial can be tested efficiently. We formalise this in the following:

\begin{lemma}[\cite{Sap13} Lemma 2.3] \label{lem:prod-sparse-PIT}
    For a class of $n$-variate, degree $d$ polynomial $f \in \F[\x]$ computable by $\PSP{}{}$ of size $s$, there is a deterministic PIT algorithm which runs in time $\poly(s,d)$.
\end{lemma}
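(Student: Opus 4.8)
The plan is to exploit that a product of polynomials over a field is zero if and only if one of its factors is zero, reducing the problem to sparse-PIT on the individual factors, for which \autoref{thm:sparse-ks} already gives an efficient test. Write the given $\PSP{}{}$ circuit as $f = \prod_{i=1}^{m} g_i$, where each $g_i \in \Sigma\Pi$ is the sparse polynomial computed at the $i$-th child of the top product gate. From the size bound, $m \le s$ and each $g_i$ has sparsity at most $s$ (and degree at most $s$); moreover, since $\F[\x]$ is an integral domain there is no cancellation of leading terms in the product, so in fact $\deg g_i \le \deg f = d$ for every $i$ (if some $g_i$ were the zero polynomial then $f \equiv 0$ and this is detected trivially).

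First I would record the equivalence: again because $\F[\x]$ is an integral domain, $f \equiv 0$ if and only if $g_i \equiv 0$ for some $i \in [m]$. Hence the algorithm simply runs the sparse-PIT procedure of \autoref{thm:sparse-ks} on each factor $g_i$ — presented as a $\Sigma\Pi$ circuit of size at most $s$ with individual degree at most $d$ — at a cost of $\poly(s,d)$ per factor, and declares $f \equiv 0$ precisely when one of these $m \le s$ tests reports zero. The total running time is $m \cdot \poly(s,d) = \poly(s,d)$, and correctness is immediate from the equivalence above together with the correctness of \autoref{thm:sparse-ks}.

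There is essentially no hard step: the only point requiring mild care is the bookkeeping of the sparsity and degree of the individual factors, so that each invocation of the sparse test stays within the promised bound, and the observation that the number of factors $m$ is itself at most $s$. (If one instead wanted a genuinely \emph{blackbox} statement with the same running time, the reduction modulo $y^r-1$ internal to \autoref{thm:sparse-ks} no longer preserves the ``product is zero iff a factor is zero'' property, since $\F[y]/(y^r-1)$ is not an integral domain; but for the whitebox formulation stated here, where the factorisation into sparse polynomials is read directly off the circuit, this obstacle does not arise.)
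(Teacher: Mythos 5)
Your argument is correct and is the standard one for this lemma: since $\F[\x]$ is an integral domain, $f=\prod_i g_i$ vanishes iff some sparse factor $g_i$ vanishes, so one simply runs the sparse-PIT test of \autoref{thm:sparse-ks} on each of the $\le s$ factors, each of size $\le s$, in $\poly(s,d)$ total time. The paper itself gives no proof here (it only cites \cite{Sap13}), but what you write is precisely that proof; your closing caveat is also apt --- the companion blackbox statement (\autoref{lem:prod-sparse-hs}) cannot be obtained by pushing the $y^r-1$ reduction through the product, and instead rests on finding a \emph{single} Kronecker/Klivans--Spielman weight that isolates a monomial in every factor simultaneously, after which the product stays nonzero in the integral domain $\F[y]$ with polynomially bounded degree.
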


A set $\calH \subseteq \F^n$ is called a Hitting Set for a class polynomial $\calC \subseteq \F[\x]$, if for all $g \in \calC$ \[ g \neq 0 \iff \exists \a \in \calH : g(\a) \neq 0.\]
In literature, PIT has a close association with Hitting set as the two notions are provably equivalent (refer Lemma 3.2.9 and 3.2.10 \cite{Forbes14Thesis}). Note that the set $\calH$ works for every polynomial of the class. Instead of a PIT algorithm occasionally we will use such a set.

\begin{lemma}[Hitting Set of $\PSU{}{}$] \label{lem:prod-sparse-hs}
    For a class of $n$-variate, degree $d$ polynomial $f \in \F[\x]$ computable by $\PSP{}{}$ of size $s$, there is an explicit Hitting Set of size $\poly(s,d)$.
\end{lemma}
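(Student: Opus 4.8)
The plan is to obtain this as the hitting-set twin of \autoref{lem:prod-sparse-PIT}. The quickest route is to observe that the $\poly(s,d)$-time PIT algorithm promised there is in fact \emph{blackbox} (it works by evaluating the circuit on an explicitly constructible point set, not by reading the wires), and then to invoke the equivalence between deterministic blackbox PIT and explicit hitting sets quoted just above (\cite{Forbes14Thesis}); this already yields an explicit hitting set of size $\poly(s,d)$. For self-containedness I would also give the construction directly, following the sparse-PIT machinery of \autoref{thm:sparse-ks} and \cite{klivans2001randomness,Sap13}. Write the input $\PSP{}{}$ circuit as $f=\prod_{i=1}^{t}g_i$ with each $g_i\in\Sigma\Pi$ of sparsity $\sp(g_i)\le s$ and $\deg g_i\le d$, and note $t\le s$ from the size bound (also, any $\PSU{}{}$ circuit of size $s$ is a $\PSP{}{}$ circuit of size $O(s)$, since each $\Sigma\wedge$ gate computes a polynomial of sparsity $O(s)$, so handling $\PSP{}{}$ suffices). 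The one structural fact I use is that $f$ is a product: for any ring homomorphism $\sigma\colon\F[\x]\to\F[y]$ we have $\sigma(f)=\prod_i\sigma(g_i)$, and since $\F[y]$ is an integral domain, $\sigma(f)\ne 0$ as soon as $\sigma(g_i)\ne 0$ for every $i$.

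Concretely I would use the monomial map $\sigma_r\colon x_j\mapsto y^{\,d^{j-1}\bmod r}$ for a suitable modulus $r$; this has the virtue that $\max_j(d^{j-1}\bmod r)<r$, so $\deg_y\sigma_r(f)<d\cdot r$ stays small. For a fixed $i$, distinct exponent vectors $\mathbf a\ne\mathbf b$ occurring in $g_i$ are mapped to the same power of $y$ iff $\langle\mathbf w_r,\mathbf a-\mathbf b\rangle=0$ with $\mathbf w_r:=(d^{j-1}\bmod r)_j$; since $\langle\mathbf w_r,\mathbf a-\mathbf b\rangle\equiv\delta_{\mathbf a,\mathbf b}\pmod r$ where $\delta_{\mathbf a,\mathbf b}:=\sum_j(a_j-b_j)d^{j-1}$ is a nonzero integer of absolute value $<d^{n}$, a prime $r\nmid\delta_{\mathbf a,\mathbf b}$ forces $\langle\mathbf w_r,\mathbf a-\mathbf b\rangle\ne 0$, hence no collision. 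Each $\delta_{\mathbf a,\mathbf b}$ rules out fewer than $n\log d$ primes, and across all factors there are at most $\sum_i\binom{\sp(g_i)}{2}<s^2$ pairs, so some prime $r\le R:=\poly(s,n,\log d)$ avoids all of them; for that $r$, $\sigma_r$ separates the monomials inside every $g_i$, so $\sigma_r(g_i)\ne 0$ for all $i$ and therefore $\sigma_r(f)\ne 0$. As $\deg_y\sigma_r(f)<dR$, the explicit set $\calH=\{(\alpha^{\,1\bmod r},\alpha^{\,d\bmod r},\dots,\alpha^{\,d^{n-1}\bmod r})\ :\ 1\le r\le R,\ 1\le\alpha\le dR+1\}$ has size $R(dR+1)=\poly(s,d)$ and contains a nonzero evaluation point of every nonzero $f\in\PSP{}{}$ of size $s$.

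The only genuine obstacle is that $f=\prod_i g_i$ is \emph{not} a sparse polynomial: its monomial count can be $s^{\Omega(s)}$, so \autoref{thm:sparse-ks} cannot be applied to $f$ as a black box, and one must use the product structure. The resolution is exactly the integral-domain observation of the previous paragraph, which collapses the (exponentially many) monomial-collision conditions of $f$ to the $O(s^2)$ within-factor conditions; once that is in place, the choice of $r$ is the routine prime-counting argument already packaged in \cite{klivans2001randomness,Sap13}, and the degree bound is immediate. I do not anticipate any further difficulty: using $n\le s$ and the reduced weights $d^{j-1}\bmod r<r\le R$ keeps every parameter polynomial in $s$ and $d$.
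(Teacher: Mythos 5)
Your proof is correct, and its first paragraph is essentially the route the paper takes: the paper states this lemma right after quoting the PIT--hitting-set equivalence from \cite{Forbes14Thesis}, so it is implicitly treated as a corollary of \autoref{lem:prod-sparse-PIT} plus the observation that the underlying Klivans--Spielman map is a blackbox substitution. Your second paragraph then unpacks the cited \cite{Sap13,klivans2001randomness} machinery into a self-contained construction; this is not a different method, but it does make explicit the two points the paper leaves tacit --- (i) the inclusion $\PSU{}{}\subseteq\PSP{}{}$ with only a constant size overhead, and (ii) the integral-domain trick that replaces the exponentially many monomial collisions of the expanded product $f=\prod_i g_i$ by the $O(s^2)$ within-factor collisions, so that a single good modulus $r\le\poly(s,n,\log d)$ suffices. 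The parameter bookkeeping (number of bad primes $<s^2 n\log d$, degree $\deg_y\sigma_r(f)<dR$, hitting set size $R(dR+1)=\poly(s,d)$ using $n\le s$) is all correct. No gaps.
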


\subsubsection*{Algebraic Branching Program (ABP)}
An ABP is a layered directed acyclic graph with $q+1$ many layers of vertices $V_0,\hdots,V_q$ with a source $a$ and a sink $b$ such that all the edges in the graph only go from $a$ to $V_0$, $V_{i-1}$ to $V_i$ for any $i\in [q]$, and $V_q$ to $b$. The edges have {\em uni}variate polynomials as their weights. 
The ABP is said to compute the polynomial \[ f(\x)\;=\;\sum_{p \in \mathsf{paths}(a,b)}\, \prod_{e \in p}\,W(e)\;,\]
where $W(e)$ is the weight of the edge $e$. The ABP has width-$w$ if $|V_i| \le w$, $\forall i \in \{0,\ldots,q\}$. In an equivalent definition, polynomials computed by ABP are of the form $A^T (\prod_{i \in [q]}\, D_i) B$, where $A, B \in \F^{w \times 1}[\x]$, and $D_i \in \F^{w \times w}[\x]$, where entries are univariate polynomials. We encourage interested readers to refer \cite{shpilka2010arithmetic, Mah13} for more detailed discussion.

\begin{definition}[Read-once oblivious ABP (ROABP)]
An ABP is called a {\em read-once oblivious ABP (ROABP)} if the edge weights are univariate polynomials in \emph{distinct} variables across layers. Formally, there is a permutation $\pi$ on the set $[q]$ such that the entries in the $i$-th matrix $D_i$ are univariate polynomials over the variable $x_{\pi(i)}$, i.e.,~they come from the polynomial
ring $\F[x_{\pi(i)}]$.
\end{definition}

A polynomial $f(x)$ is said to be computed by width-$w$ ROABPs in \emph{any order}, if for every permutation $\sigma$ of the variables, there exists a width-$w$ ROABP in the variable order $\sigma$ that computes the polynomial $f(\x)$. In whitebox setting, identity testing of any-order ROABP is completely solved. 

\begin{theorem}[Theorem 2.4 \cite{raz2005deterministic}] \label{thm:white-pit-ro}
    For $n$-variate polynomials computed by size-$s$ ROABP, a hitting set of size $O(s^5 + s \cdot n^4)$ can be constructed.    
\end{theorem}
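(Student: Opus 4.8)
The plan is to follow Nisan's rank characterisation of ROABP-computable polynomials, made algorithmic in the style of \cite{raz2005deterministic}: given the ROABP explicitly, build, layer by layer, a small set of evaluation points whose images already span the relevant ``coefficient space'', so that vanishing on this set forces $f \equiv 0$. First I would fix the order, say $x_1,\ldots,x_n$, and write $f = u^{T} D_1(x_1) D_2(x_2)\cdots D_n(x_n)\, v$, where each $D_i$ is a matrix of size $\le w\times w$ ($w$ the width) whose entries are univariate polynomials in $x_i$ of degree $<d$, and $u,v$ are fixed vectors. Define the prefix vector-polynomials $P_i(x_1,\ldots,x_i) := u^{T} D_1(x_1)\cdots D_i(x_i)$, a row vector of polynomials, so that $P_i = P_{i-1}\cdot D_i(x_i)$ and $f = P_n\cdot v$. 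Let $W_i$ be the $\F$-span of the coefficient vectors (one per monomial in $x_1,\ldots,x_i$) appearing in $P_i$; then $\dim W_i \le w$, and $\mathrm{span}\{P_i(\mathbf a): \mathbf a\in\F^i\}\subseteq W_i$ with equality once $\F$ (or the extension we work over) exceeds the degrees.

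Next comes the inductive construction of the point set. Maintain $\calH_i\subseteq\F^i$ with the invariant that $\{P_i(\mathbf a): \mathbf a\in\calH_i\}$ spans $W_i$; start with $\calH_0$ the single empty tuple. To pass from $i-1$ to $i$, pick any $B_i\subseteq\F$ of $d$ distinct elements and form $\calH_{i-1}\times B_i$. For each fixed $\mathbf a$, the map $b\mapsto P_i(\mathbf a,b)=P_{i-1}(\mathbf a)D_i(b)$ is a vector-valued univariate of degree $<d$, so interpolation on $B_i$ recovers all its coefficient vectors; taking the union over $\mathbf a\in\calH_{i-1}$ and using the invariant for $i-1$ together with the linearity of $w\mapsto wD_{i,j}$ shows $\{P_i(\mathbf a,b)\}$ spans $W_i$. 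Now prune $\calH_{i-1}\times B_i$ by Gaussian elimination to a subset $\calH_i$ of size $\le w$ whose $P_i$-images still span $W_i$; this is where whiteboxness is used, since forming and pruning requires reading the matrices $D_i$. After $n$ steps, $f=P_n\cdot v$ vanishes on $\calH_n$ iff $v$ is orthogonal to $\mathrm{span}\{P_n(\mathbf a):\mathbf a\in\calH_n\}=W_n$, which (since every coefficient of $P_n$ lies in $W_n$) happens iff $P_n\cdot v\equiv 0$, i.e.\ iff $f\equiv 0$. So $\calH_n$ is the desired hitting set for $f$ (equivalently one may output $\bigcup_i\calH_i$), and a crude accounting of the grid sizes $|B_i|=d$, the widths $\le w$, the number of layers $n$, and the size of a sufficiently large field extension gives a bound of the claimed shape $O(s^5+s\cdot n^4)$ in terms of the ROABP size $s$.

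The main obstacle is making the pruning \emph{globally} safe: a point discarded while processing layer $i-1$ must not turn out to be indispensable at a later layer. This is precisely why the invariant is phrased in terms of spanning the coefficient space $W_i$ of $P_i$ rather than in terms of individual evaluations: the extension operation $P_{i-1}\mapsto P_{i-1}D_i$ is $\F$-linear in $P_{i-1}$, so \emph{any} set whose $P_{i-1}$-images span $W_{i-1}$ extends to one whose $P_i$-images span $W_i$, and a locally correct prune is automatically a globally correct one. Secondary technical points are padding the $D_i$ to a common width $w$ and ensuring the working field is large enough that interpolation and the identity $\mathrm{span}\{P_i(\mathbf a)\}=W_i$ both hold; neither affects the structure of the argument.

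Finally, I would remark on the scope of what is proved. The construction reads the given ROABP and outputs a set that hits \emph{that} polynomial, which is exactly the deterministic whitebox test of \cite{raz2005deterministic} repackaged as a hitting set; for any-order ROABPs one simply fixes an order and applies it. Producing a single blackbox hitting set for the whole size-$s$ ROABP class is a strictly harder problem and is handled by rank-condenser / seeded-extractor machinery rather than by the spanning-set argument above; that distinction is why this theorem is invoked only in the whitebox parts of the paper.
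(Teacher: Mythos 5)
The paper does not prove this statement; it is cited directly from Raz--Shpilka, so there is no in-paper proof to compare against. Your reconstruction is the standard Raz--Shpilka argument: maintain the invariant that the chosen evaluation points have prefix-vectors $P_i(\mathbf a)$ spanning the coefficient space $W_i$, extend by interpolation on a degree-sized grid, prune back to $\le w$ points by Gaussian elimination, and conclude at layer $n$ that vanishing on $\calH_n$ forces $v \perp W_n$ and hence $f\equiv 0$. This is exactly the approach of the cited work, and the observations you flag (linearity of $w\mapsto wD_i$ makes local pruning globally safe, and the construction is inherently whitebox rather than a universal hitting set for the class) are the right ones. One small point worth noting: the set $\calH_n$ you end with has size at most the ROABP width, which is far below $O(s^5+s\cdot n^4)$; the larger figure in the statement is an upper bound accommodating the total work (all intermediate grids $\calH_{i-1}\times B_i$ plus field-extension overhead), so your construction comfortably meets it even though you only sketch the final accounting.
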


There have been quite a few results on blackbox PIT for ROABPs as well \cite{forbes2013quasipolynomial,forbes2014hitting,gurjarARO}. The current best known algorithm works in quasipolynomial time.  

\begin{theorem}[Theorem 4.9 \cite{gurjarARO}] \label{thm:roabp-blackbox-pit}
For $n$-variate, individual-degree-$d$ polynomials computed by width-$w$ ROABPs in any order, a hitting set of size $(ndw)^{O(\log \log w)}$ can be constructed.
\end{theorem}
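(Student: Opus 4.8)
\emph{Plan.} The whitebox case is settled by \autoref{thm:white-pit-ro}, so the task is to produce a blackbox hitting set. The plan is to exploit the defining feature of any-order ROABPs: if $f$ is computed by a width-$w$ ROABP in \emph{every} variable order, then for \emph{every} subset $S\subseteq\x$ the coefficient matrix $M_S(f)$ --- rows indexed by monomials in $S$, columns by monomials in $\x\setminus S$, with $(\alpha,\beta)$-entry the coefficient of $x^\alpha x^\beta$ in $f$ --- has rank at most $w$ (the standard fact that an ROABP reading $S$ first exists iff $\rk M_S(f)\le w$). Thus $f$ is ``low rank in all directions'', far from generic. The strategy is \emph{rank concentration}: construct, explicitly, a shift $\x\mapsto\x+\mathbf a$ after which, in every partition $(S,\bar S)$, the rows of $M_S$ coming from \emph{low-support} monomials (support $\le\ell$) already span the row space. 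Once $f$ is $\ell$-concentrated, identity testing collapses to a sparse-PIT--type problem: a Kronecker/Klivans--Spielman map as in \autoref{thm:sparse-ks} that keeps the $\le (nd)^{\ell}$ low-support monomials linearly independent preserves nonzeroness of $f$, giving a hitting set of size about $(nd)^{O(\ell)}\cdot\poly(w)$.

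For a \emph{single} fixed partition, the shift-and-concentrate step is classical (Forbes--Shpilka, Agrawal--Gurjar--Korwar--Saxena): an explicit ``basis isolating'' shift of cost $(ndw)^{O(\log w)}$ concentrates to support $\ell=O(\log w)$, since a rank-$w$ space of coefficient vectors can be pinned down by $O(\log w)$ coordinates after a suitable shift. Imposing this uniformly over all partitions --- which we must, as the order is unknown --- and plugging $\ell=O(\log w)$ already yields a hitting set of size $(ndw)^{O(\log w)}$. The remaining, and delicate, task is to shave the $\log w$ in the exponent down to $\log\log w$.

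The improvement comes from doing the concentration \emph{recursively} instead of in one shot: rather than driving the support straight down to $O(\log w)$, design a reduction turning the width-$w$ instance into an instance of width $\sqrt w$ (over essentially the same $n$ and $d$, with only a $\poly(ndw)$ overhead in hitting-set size), and iterate. Since $O(\log\log w)$ square-rootings bring $w$ to a constant --- where a direct argument handles constant width with a $\poly(nd)$-size hitting set --- the cost telescopes to $(ndw)^{O(\log\log w)}$. I expect this width-halving step to be the main obstacle: one must exhibit, inside the everywhere-low-rank structure, a ``coarse'' width-$\sqrt w$ skeleton whose hitting set, composed with a mild explicit correction, still hits $f$, and argue that this skeleton is \emph{itself} an any-order ROABP so that the recursion stays in the class --- here the merging property (collapsing several variables into one super-variable of larger degree preserves any-order ROABP-ness and width) and careful bookkeeping of how $n$ and $d$ grow across the $\log\log w$ levels are what make the final bound come out clean.

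Finally I would assemble: (i) the rank characterization; (ii) an explicit family of $(ndw)^{O(\log\log w)}$ candidate shifts realizing the recursive concentration, derandomized via Klivans--Spielman-type hashing; (iii) for each candidate shift, the sparse map of \autoref{thm:sparse-ks}; and (iv) the constant-width base case. A nonzero $f$ survives some candidate shift followed by the sparse map, so the union of all these point sets is the claimed hitting set of size $(ndw)^{O(\log\log w)}$.
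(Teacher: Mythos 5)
This statement is a black-box citation in the paper --- Theorem 4.9 of \cite{gurjarARO} is imported verbatim and used without proof --- so there is no ``paper's own proof'' to compare against. I can therefore only assess your sketch against the actual argument in \cite{gurjarARO}.

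Your ingredients (i) and (iii) are right and match the literature: the rank bound $\rk M_S(f)\le w$ for every $S$ is indeed the defining structural property of any-order ROABPs, and a Klivans--Spielman-type Kronecker map after a concentrating shift is exactly how the final hitting set is assembled. Your ingredient (ii), however, contains the real gap, and you have correctly flagged it yourself. The mechanism in \cite{gurjarARO} that gets the exponent from $\log w$ down to $\log\log w$ is \emph{not} a ``reduce width $w$ to $\sqrt w$ and iterate $O(\log\log w)$ times'' scheme; I don't know a way to make that literal statement true, and you give no argument for why the everywhere-low-rank structure should admit a width-$\sqrt{w}$ skeleton. What Gurjar--Korwar--Saxena actually do is recurse on the \emph{variables}: they merge variables pairwise into super-variables of larger individual degree (which, as you correctly note, preserves the any-order ROABP property and the width), and they prove a composition lemma showing that block-concentration at the merged level can be built from concentration at the unmerged level. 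The concentration parameter, not the width, is what decays over $O(\log\log w)$ rounds of merging, and the per-round cost stays $\poly(ndw)$; multiplying over rounds gives $(ndw)^{O(\log\log w)}$. Your proposal gestures at merging in the bookkeeping sentence but attaches it to the wrong quantity (width) and leaves the central lemma --- how concentration at one level of merging yields concentration at the next at affordable cost --- entirely unstated. That lemma is the theorem's real content; as written, the proposal would only support the weaker $(ndw)^{O(\log w)}$ bound you already obtain from the one-shot concentration argument.
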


\subsubsection*{Depth-4 Circuits}

A polynomial $f(\x) \in \F[\x]$ is computable by $\SWSP{}{}{[\delta]}$ circuits if $ f(\x) = \sum_{i \in [s]} f_i(\x)^{e_i}$ where $\deg f_i \leq \delta$. The first nontrivial PIT algorithm for this model was designed in \cite{forbes2015deterministic}.

\begin{theorem}[Proposition 4.18 \cite{forbes2015deterministic}] \label{thm:swsp-hs}
There is a $\poly(n,d, \delta \log s)$-explicit hitting set of size $(nd)^{O(\delta \log s)}$ for the class of $n$-variate, degree-($\le d$) polynomials $f(\x)$, computed by $\SWSP{}{}{[\delta]}$-circuit of size $s$.
\end{theorem}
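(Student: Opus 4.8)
The plan is to reduce a $\SWSP{}{}{[\delta]}$ circuit, through two structure‑preserving moves, to an instance of read‑once ABP identity testing, for which \autoref{thm:roabp-blackbox-pit} already supplies a quasipolynomial‑size hitting set. First, normalize: write $f = \sum_{i=1}^{s} f_i^{e_i}$ with each $f_i$ of degree $\le \delta$ and sparsity $\le s$, and each $e_i \le d$. Next, \emph{linearize the bottom}: let $m_1,\dots,m_N$ be the (at most $N\le s$) distinct monomials, each of $\x$‑degree $\le\delta$, occurring in some $f_i$, and introduce a fresh variable $w_j$ for $m_j$, so that each $f_i$ becomes a linear form $L_i(\vec w)$ and $f(\x) = F(\vec w)\big|_{w_j\mapsto m_j(\x)}$ with $F := \sum_{i\in[s]} L_i(\vec w)^{e_i}$ a diagonal depth‑$3$ (i.e.\ $\Sigma\wedge\Sigma$) circuit in the $N\le s$ variables $\vec w$; note $f\neq 0 \Rightarrow F\neq 0$ since $f$ is a substitution instance of $F$.

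The third step invokes \emph{duality} in the spirit of \cite{saxena2008diagonal}: each summand $L_i^{e_i}$, being a power of a linear form, is computed — by the multinomial expansion — by a width‑$(e_i{+}1)$ ROABP over $\vec w$ in \emph{every} variable order, the only state being the running exponent‑sum in $\{0,\dots,e_i\}$. Hence $F$ is an any‑order ROABP over $\vec w$ of width $\le s(d{+}1)$ and individual degree $\le d$. Finally we must \emph{lift back to $\x$}: substituting $w_j\mapsto m_j(\x)$ turns the layered form of the $F$‑ROABP into $f(\x) = A^{\!\top} D_1\!\left(m_1(\x)\right)\cdots D_N\!\left(m_N(\x)\right) B$, a layered matrix product in which layer $j$ reads only the $\le\delta$ variables occurring in $m_j$; one then fixes a linear order on $\x$, refines it to an order on the $m_j$, and applies a per‑block Kronecker/sparse‑PIT substitution (\autoref{thm:sparse-ks}) to each group of $\le\delta$ variables so that the composed object becomes a genuine ROABP in $O(n)$ grouped variables of $\poly(n,d,\delta)$ individual degree and width $s(d{+}1)$. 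Running \autoref{thm:roabp-blackbox-pit} then produces a hitting set of size $(nd)^{O(\delta\log s)}$ which, by construction, also hits $f$.

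The hard part is this last step. Unlike in the $\SWSU{}{}$ case, where the building blocks live on \emph{disjoint} variable sets and the substitution yields an ROABP in $\x$ immediately, here the monomials $m_j$ live on \emph{overlapping} sets of variables, so $w_j\mapsto m_j(\x)$ need not preserve the read‑once property; a variable may be re‑read in many layers. Making the transfer go through — de‑interleaving the degree‑$\le\delta$ monomials while paying only a $\delta$‑factor in the exponent, keeping the individual degree polynomial, and absorbing the extra $\log s$ from the ROABP step — is exactly the point where one needs the shifted‑partial‑derivative machinery of \cite{forbes2015deterministic} to bound the relevant dimension measure of the interleaved matrix product; the rest is routine bookkeeping on degrees and sizes.
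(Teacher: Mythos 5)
The paper does not prove this theorem; it is imported verbatim from Forbes~\cite{forbes2015deterministic}, so the comparison is really between your attempt and the cited work. Your first two moves (normalize, linearize the bottom via fresh variables $\vec w$) are fine, and the observation that $F=\sum_i L_i(\vec w)^{e_i}$ is an any-order ROABP in $\vec w$ of width $O(sd)$ is the standard duality fact, exactly analogous to what the paper itself uses for $\SWSU{}{}$ in \autoref{lem:swsu-hs}.

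The gap is in the lift-back step, and it is not one that can be patched by ``de-interleaving'' plus a per-block Kronecker substitution. Substituting $w_j \mapsto m_j(\x)$ destroys the read-once property in an essential way: a single variable $x_\ell$ typically appears in many of the degree-$\le\delta$ monomials $m_1,\dots,m_N$, hence in many different layers $D_j$. No reordering of the layers, and no grouping of the $n$ variables into $O(n/\delta)$ blocks, can arrange for each variable (or each block) to be read in a single layer, because the overlap pattern among the $m_j$ is arbitrary. This is precisely the structural difference from the $\SWSU{}{}$ case (disjoint univariate blocks), which you correctly identify -- but identifying it does not close it. Your final sentence then appeals to ``the shifted-partial-derivative machinery of~\cite{forbes2015deterministic} to bound the relevant dimension measure of the interleaved matrix product,'' which is circular: that machinery \emph{is} the content of the theorem being proved, and moreover it does not produce an ROABP. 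Forbes's actual argument never passes through ROABPs at all; it bounds the dimension of shifted partial derivatives of a $\SWSP{}{}{[\delta]}$ circuit directly and converts that bound into a hitting set via a rank-concentration argument over a small shift. So the proposal reduces the theorem to an unproved (and, as a route through ROABPs, likely false) claim about de-interleaving; as written it does not constitute a proof.
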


Similarly, $\SWSU{}{}{}$ circuits compute polynomials of the form $f(\x) = \sum_{ i \in [s] } f_i^{e_i}$ where $f_i$ is a sum of univariate polynomials. Using duality trick \cite{saxena2008diagonal} and PIT results from \cite{raz2005deterministic,gurjarARO}, one can design efficient PIT algorithm for $\SWSU{}{}{}$ circuits.

\begin{lemma} [PIT for $\SWSU{}{}$-circuits]\label{lem:swsu-hs}
Let  $P \in \SWSU{}{}$ of size $s$. Then, there exists a $\poly(s)$ (respectively~$s^{O(\log \log s)}$) time whitebox (respectively~blackbox) PIT for the same.
\end{lemma}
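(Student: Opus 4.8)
The plan is to show that every polynomial in $\SWSU{}{}{}$ is already computed by a small read-once oblivious ABP (indeed, in \emph{every} variable order), and then to quote the existing whitebox and blackbox PIT results for ROABPs, namely \autoref{thm:white-pit-ro} and \autoref{thm:roabp-blackbox-pit}.

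\textbf{Step 1: Duality on each summand.} Write $P = \sum_{i \in [s]} f_i^{e_i}$ with $f_i = g_{i1}(x_1) + \cdots + g_{in}(x_n)$, where each $g_{ij}$ is a univariate of degree at most $d\le \size(P)$ and each $e_i \le \size(P)$. I would invoke the duality trick of \cite{saxena2008diagonal}: using the formal identity $\exp(t f_i) = \prod_{j} \exp(t\,g_{ij}(x_j))$, truncating each factor to $t$-degree $e_i$, and Lagrange-interpolating the coefficient of $t^{e_i}$ at $n e_i + 1$ points, one obtains
\[
f_i^{e_i} \;=\; \sum_{r=1}^{L_i} \;\prod_{j=1}^{n} h_{i,r,j}(x_j), \qquad L_i \le n e_i + 1,
\]
where each $h_{i,r,j}\in\F[x_j]$ has degree at most $e_i d$. (This uses that $\ell!$ is invertible for $\ell\le e_i$, which holds in characteristic $0$ or large characteristic, consistent with the paper's standing assumption.) Such a sum of products of \emph{univariates in distinct variables} is computed, in every variable order, by an ROABP of width $L_i$: the edges of the $j$-th layer carry the univariates $h_{i,r,j}$, and reading the factors in whatever order a permutation dictates does not change the width.

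\textbf{Step 2: Combine the summands.} ROABPs in a fixed order are closed under addition with widths adding, and this closure preserves the any-order property. Hence $P$ is computed by an ROABP, in any order, of width $w := \sum_{i\in[s]} L_i \le \poly(s)$, over $n \le \size(P)$ variables, with individual degree $d' \le \deg P \le \poly(s)$.

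\textbf{Step 3: Apply known ROABP PIT.} In the whitebox setting, \autoref{thm:white-pit-ro} gives a hitting set of size $O(w^5 + w\,n^4) = \poly(s)$, hence a $\poly(s)$-time PIT. In the blackbox setting, \autoref{thm:roabp-blackbox-pit} gives a hitting set of size $(n\,d'\,w)^{O(\log\log w)} = s^{O(\log\log s)}$, hence an $s^{O(\log\log s)}$-time blackbox PIT.

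\textbf{Main obstacle.} There is no conceptual difficulty; the one point requiring care is the quantitative bookkeeping in Step 1 — checking that the number of product terms $L_i$ and the degrees of the interpolated univariates remain polynomially bounded in $\size(P)$ even though the exponents $e_i$ may be as large as $\size(P)$, and verifying that the resulting ROABP is genuinely order-oblivious so that the any-order hypothesis of \autoref{thm:roabp-blackbox-pit} is met. Everything else is routine.
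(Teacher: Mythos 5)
Your proposal is correct and follows the paper's own strategy: reduce a $\SWSU{}{}$ polynomial to an any-order ROABP of $\poly(s)$ width and degree, then invoke \autoref{thm:white-pit-ro} for the whitebox case and \autoref{thm:roabp-blackbox-pit} for the blackbox case. The only cosmetic difference is the identity used to write each $f_i^{e_i}$ as a short sum of products of univariates --- you use the $\exp$-truncation form of Saxena's duality trick, whereas the paper's proof sketch interpolates $\cf_{y^{e(n-1)}}(G^{e})$ for $G := (y+g_{1})\cdots(y+g_{n}) - y^{n}$; both yield the same polynomial bounds on the ROABP width and individual degree.
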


\begin{proof}[Proof sketch]
We show that any $g(\x)^e = (g_1(x_1)+\hdots +g_n(x_n))^e$, where $\d(g_i) \le s$ can be written as $ \sum_{j} h_{j1}(x_1) \cdots h_{jn}(x_n)$, for some $h_{j\ell} \in \F[x_{\ell}]$ of degree at most $es$. Define, $G := (y+g_1) \cdots (y+g_n)-y^n$. In its $e$-th power, notice that the leading-coefficient is $\cf_{y^{e(n-1)}}(G^e)= g^e$. So, interpolate on $e(n-1)+1$ many points ($y= \beta_i\in\F$) to get 
$$ \cf_{y^{e(n-1)}}(G^e) \;=\; \sum_{i=1}^{e(n-1)+1}\, \alpha_i\, G^e(\beta_i)\;.$$
Now, expand $G^e(\beta_i) = ((\beta_i +g_1) \cdots (\beta_i+g_n)-\beta_i^n)^e$, by binomial expansion (without expanding the inner $n$-fold product). The top-fanin can be atmost $s \cdot (e+1) \cdot (e(n-1)+1)= O(se^2n)$. The individual degrees of the intermediate univariates can be at most $es$. Thus, it can be computed by an ROABP (of {\em any order}) of size at most $O(s^2e^3n)$. 

Now, if $f=\sum_{j \in [s]} f_j^{e_j}$ is computed by a $\SWSU{}{}{}$ circuit of size $s$, then clearly, $f$ can also be computed by an ROABP (of any order) of size at most $O(s^6)$. So, the whitebox PIT follows from \autoref{thm:white-pit-ro}, while the blackbox PIT follows from Theorem \autoref{thm:roabp-blackbox-pit}.
\end{proof}

Further, $\SWSU{}{}$ can be shown to be closed under multiplication i.e., product of two polynomials, each computable by a $\SWSU{}{}$ circuit, is computable by a single $\SWSU{}{}$ circuit. To prove that we will need an efficient way to write a product of a few powers as a sum of powers, using simple interpolation. For an algebraic proof, see  \cite[Proposition~4.3]{CARLINI20125}. 

\begin{lemma}[Waring Identity for a monomial] \label{lem:waring-rank}
Let $M= x_1^{b_1} \cdots x_k^{b_k}$, where $1 \le b_1 \le \hdots \le b_k$, and {\em roots of unity} $\mathcal{Z}(i):=\{z \in \C : z^{b_i+1}=1\}$. Then, $$M= \sum_{\varepsilon(i) \in \mathcal{Z}(i): i=2,\cdots,k}\, \gamma_{\varepsilon(2),\hdots,\varepsilon(k)} \cdot \left(x_1 + \varepsilon(2)x_2 + \hdots + \varepsilon(k)x_k\right)^d\;,$$ where  $d:= \deg(M)=b_1+\hdots+b_k$, and  $\gamma_{\varepsilon(2),\hdots,\varepsilon(k)}$ are scalars ($\rk(M):=\prod_{i=2}^k\,(b_i+1)$ many).       
\end{lemma}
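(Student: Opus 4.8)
The plan is to expand each power $(x_1 + \varepsilon(2)x_2 + \cdots + \varepsilon(k)x_k)^d$ by the multinomial theorem and choose the scalars so that a roots-of-unity filter annihilates every monomial except $M$. Concretely, I would take
\[
\gamma_{\varepsilon(2),\ldots,\varepsilon(k)} \;:=\; \binom{d}{b_1,\ldots,b_k}^{-1}\cdot \prod_{i=2}^{k} \frac{\varepsilon(i)^{-b_i}}{b_i+1}\,,
\]
and evaluate $S:=\sum_{\varepsilon}\gamma_{\varepsilon}\,(x_1+\varepsilon(2)x_2+\cdots+\varepsilon(k)x_k)^d$. Interchanging the finitely many sums, the coefficient of $x_1^{a_1}\cdots x_k^{a_k}$ in $S$ (for $a_i\ge 0$, $\sum_i a_i=d$) equals $\binom{d}{b_1,\ldots,b_k}^{-1}\binom{d}{a_1,\ldots,a_k}$ multiplied by $\prod_{i=2}^{k}\bigl(\tfrac{1}{b_i+1}\sum_{\zeta^{b_i+1}=1}\zeta^{\,a_i-b_i}\bigr)$.

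Next I would invoke the elementary geometric-sum identity: $\tfrac{1}{b_i+1}\sum_{\zeta^{b_i+1}=1}\zeta^{m}$ equals $1$ if $(b_i+1)\mid m$ and $0$ otherwise. Hence a monomial survives in $S$ only when $a_i\equiv b_i \pmod{b_i+1}$ for every $i\ge 2$. This is precisely where the hypothesis $b_1=\min_i b_i$ enters: writing $a_i=b_i+c_i(b_i+1)$ with $c_i\ge 0$ and using $\sum_i a_i=d=\sum_i b_i$ gives $a_1=b_1-\sum_{i\ge 2}c_i(b_i+1)$; since $b_i+1>b_1$ for each $i\ge 2$, any $c_i\ge 1$ would force $a_1<0$, which is impossible. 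Therefore all $c_i=0$, i.e.\ $a_i=b_i$ for $i\ge 2$ and consequently $a_1=b_1$, so the unique surviving term has coefficient $\binom{d}{b_1,\ldots,b_k}^{-1}\binom{d}{b_1,\ldots,b_k}=1$, giving $S=M$.

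Finally I would record the count: the index set is $\mathcal{Z}(2)\times\cdots\times\mathcal{Z}(k)$, of size $\prod_{i=2}^{k}(b_i+1)=\rk(M)$, matching the claim, and observe that the argument is valid over any field containing the relevant roots of unity in which $\binom{d}{b_1,\ldots,b_k}$ is invertible (characteristic $0$, or more than $d$, suffices). I do not expect a genuine obstacle here—the only real content is the divisibility-plus-degree argument pinning down $a_i=b_i$, and the rest is the multinomial expansion together with the standard roots-of-unity filter—so the write-up amounts to organizing these three routine steps.
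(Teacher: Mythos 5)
Your argument is correct, and it fills a gap the paper leaves open: the paper does not prove \autoref{lem:waring-rank} at all, merely alluding to ``simple interpolation'' and then pointing the reader to an algebraic (apolarity-based) proof in \cite[Proposition~4.3]{CARLINI20125}. Your roots-of-unity filter is exactly the ``simple interpolation'' in question, and the three steps you organize are the right ones: (i) the multinomial expansion plus interchange of finite sums yields the coefficient of $x_1^{a_1}\cdots x_k^{a_k}$ as $\binom{d}{b_1,\ldots,b_k}^{-1}\binom{d}{a_1,\ldots,a_k}\prod_{i\ge 2}\bigl(\tfrac{1}{b_i+1}\sum_{\zeta\in\mathcal Z(i)}\zeta^{a_i-b_i}\bigr)$; (ii) the geometric-sum identity kills every term unless $(b_i+1)\mid(a_i-b_i)$ for each $i\ge 2$, forcing $a_i=b_i+c_i(b_i+1)$ with $c_i\ge 0$; and (iii) the minimality $b_1\le b_i$ implies $b_i+1>b_1$, so any $c_i\ge 1$ would drive $a_1=b_1-\sum_{i\ge 2}c_i(b_i+1)$ negative, pinning $a_i=b_i$ for all $i$ and leaving the surviving coefficient equal to $1$. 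The count $\lvert\mathcal Z(2)\times\cdots\times\mathcal Z(k)\rvert=\prod_{i\ge 2}(b_i+1)=\rk(M)$ and the characteristic caveat (the multinomial coefficient $\binom{d}{b_1,\ldots,b_k}$ and each $b_i+1$ must be invertible, and the $(b_i+1)$-th roots of unity must lie in the field, which the paper's remark handles by passing to a small extension) are also exactly right. Compared to the cited \cite{CARLINI20125} proof via catalecticants, your version is more elementary and immediately gives the explicit scalars $\gamma_\varepsilon$, which is what the paper actually uses downstream for size and bit-complexity bounds.
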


{\noindent \em Remark.} We actually need not work with $\F=\C$. We can go to a small extension (at most $d^k$), for a monomial of degree $d$, to make sure that $\varepsilon(i)$ exists. 

Using the above lemma we prove the closure result.

\begin{lemma}\label{lem:prod-swsw}
Let $f_i(\x,y) \in \F[y][\x]$, of syntactic degree $\le d_i$, be computed by a $\SWSU{}{}$ circuit of size $s_i$, for $i \in [k]$ (wrt $\x$). Then, $f_1 \cdots f_k$ has $\SWSU{}{}$ circuit of size $O((d_2+1) \cdots (d_k+1)\cdot\,s_1\cdots s_k)$.
\end{lemma}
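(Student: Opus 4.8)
The plan is to distribute $f_1\cdots f_k$ into a sum of products of power-gates and then \emph{linearize} each such product via the Waring identity of \autoref{lem:waring-rank}. Write $f_i=\sum_j g_{ij}^{e_{ij}}$ with each $g_{ij}$ a sum of univariates in $\x$ (coefficients in $\F[y]$) and $f_i$ having at most $s_i$ summands; after relabelling assume $d_1\le d_2\le\cdots\le d_k$, and assume no $f_i$ is zero (else the product vanishes). Distributing,
\[
 f_1\cdots f_k\;=\;\sum_{(j_1,\dots,j_k)}\;g_{1 j_1}^{c_1}\cdots g_{k j_k}^{c_k},\qquad c_i:=e_{i j_i}\le d_i,
\]
so it is enough to produce a small $\SWSU{}{}$ circuit for one term $\prod_i g_{i j_i}^{c_i}$ and add at most $s_1\cdots s_k$ of them.

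For a single term, treat $u_i:=g_{i j_i}$ as fresh variables, discard the indices with $c_i=0$ (as $u_i^0=1$), and apply \autoref{lem:waring-rank} to the monomial $\prod_i u_i^{c_i}$. This rewrites it as an $\F$-linear combination of at most $\prod_{i\ne i_0}(c_i+1)$ powers $\bigl(\sum_i\lambda_i u_i\bigr)^{\sum_i c_i}$, where $i_0$ is an index with $c_{i_0}=\min_i c_i$ and the scalars $\lambda_i$ may need a field extension of degree $\le\bigl(\sum_i c_i\bigr)^k$ (the Remark after \autoref{lem:waring-rank}); since $\F$ has characteristic $0$ this enlargement is harmless. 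Substituting $u_i=g_{i j_i}$ back, each base $\sum_i\lambda_i g_{i j_i}$ is again a sum of univariates in $\x$ (a linear combination of sums of univariates, gathered variable by variable), so each power is a single $\wedge\Sigma\wedge$ gate; summing over all these powers and over all $(j_1,\dots,j_k)$ gives an $\SWSU{}{}$ circuit for $f_1\cdots f_k$.

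For the size, the number of power-gates is at most $s_1\cdots s_k$ times $\max\prod_{i\ne i_0}(c_i+1)$ over exponent choices, and each gate has base sparsity $\le\sum_i\sp(g_{i j_i})$, base degree $\le\max_i d_i$, and exponent $\le d_1+\cdots+d_k$, quantities already charged to the $s_i$ under the paper's size convention. So the remaining work is the inequality $\prod_{i\ne i_0}(c_i+1)\le\prod_{i=2}^{k}(d_i+1)$ for all $c_i\le d_i$ with $c_{i_0}=\min_i c_i$, and this is exactly where the sorting is used: since $c_i+1\le d_i+1$ and $d_1+1\le\cdots\le d_k+1$, for each $m$ at least $m+1$ of the numbers $c_1+1,\dots,c_k+1$ are $\le d_{m+1}+1$; deleting the single smallest one leaves at least $m$ of the remaining $k-1$ numbers $\le d_{m+1}+1$, so the $m$-th smallest survivor is $\le d_{m+1}+1$, and multiplying over $m=1,\dots,k-1$ yields the bound. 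I expect this last termwise-domination step to be the only genuinely fiddly point; everything else is distribution, a linear substitution, and degree bookkeeping.
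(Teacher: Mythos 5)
Your proof is correct and follows the same route as the paper's: write each $f_i=\sum_j g_{ij}^{e_{ij}}$, distribute the product into at most $s_1\cdots s_k$ monomial terms, apply the Waring identity of \autoref{lem:waring-rank} to each, and tally the resulting $\wedge\Sigma\wedge$ gates. The paper simply asserts the per-term count $(d_2+1)\cdots(d_k+1)$, whereas you supply the sorting/termwise-domination argument showing $\prod_{i\ne i_0}(c_i+1)\le\prod_{i=2}^{k}(d_i+1)$; that is a small gap the paper leaves implicit, and your filling it in is both correct and welcome.
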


\begin{proof}
Let $f_i=\sum_{j} f_{ij}^{e_{ij}}$; by assumption $e_{ij} \le d_i$ (by assumption). Then using \autoref{lem:waring-rank}, $f_{1j_1}^{e_{1j_1}} \cdots f_{kj_k}^{e_{kj_k}}$ has size at most $(d_2+1) \cdots (d_k+1) \cdot \left(\sum_{i \in [k]}\,\size(f_{ij_i})\right)$, for indices $j_1,\hdots, j_k$. Summing up for all $s_1 \cdots s_k$ many products (atmost) gives the upper bound.
\end{proof}
\section{Whitebox PIT for \texorpdfstring{$\SPSU{[k]}{}{}$}{SkPSU}} \label{sec:pf:thm1}

We consider a bloated model of computation which naturally generalizes $\SPSU{}{}{}$ circuits and works ideally under the $\DiDI$-techniques.

\begin{definition}
We call a circuit $\calC \in \Gen{k}{s}$, over $\ringR(\x)$, for any ring $\ringR$, with parameter $k$ and size-$s$, if $\calC \in \Sigma^{[k]} (\PSU{}{}/\PSU{}{}) \cdot (\SWSU{}{}/\SWSU{}{})$. It computes $f \in \ringR(\x)$, if $f\;=\; \sum_{i=1}^k\, T_{i}$, where 
    \begin{itemize}
        \item $T_i\,=:\,(U_i/V_i)\cdot (P_i/Q_i)$, for $U_i, V_i \in \PSU{}{}$, and $P_i,Q_i \in \SWSU{}{}$,  
        \item $\size(T_i) = \size(U_i) + \size(V_i) + \size(P_i) + \size(Q_i)$, and $\size(f)=\sum_{i \in [k]} \size(T_i)$.
    \end{itemize}
\end{definition}

\noindent It is easy to see that all size-$s$ $\SPSU{[k]}{}{}$ circuit are in $\Gen{k}{s}$. We will design the \emph{recursive} algorithm on $\Gen{k}{s}$. 

\begin{proof}[Proof of \autoref{thm:thm1}] 
Begin with defining $T_{i,0}:=T_i$ and $f_{0}:=f$ where $T_{i,0} \in \PSU{}{}$;  $\sum_{i} T_{i,0}=f_0$, and $f_0$ has size $\le s$. Assume $\d(f)<d\le s$; we keep the parameter $d$ separately, to help optimize the complexity later. In every recursive call we work with $\Gen{\cdot}{\cdot}$ circuits. 

As the input case, define $U_{i,0}:=T_{i,0}$ and $V_{i,0}:=P_{i,0}:= Q_{i,0}:=1$. We will use the hitting set of product of sparse polynomials (refer section \ref{sec:basic-act}) to obtain a point $\a = (a_1, \dots, a_n) \in \F^n$ such that $U_{i,0} \rvert_{\x = \a} \neq 0$, for all $i \in [k]$. Eventually this evaluation point will help in maintaining the invertibility of $\PSU{}{}$. Consider
\begin{align*}
    g \;:=\; \prod_{i \in [k]} T_{i,0} &\;=\; \prod_{i \in [k]} U_{i,0} \;=\; \prod_{i \in [\ell]} \sum_{j \in [n]} f_{ij}(x_j)\,,
\end{align*}
where $f_{ij}(x_j)$ are univariate polynomials of degree at most $d$ and $\ell \leq k \cdot s$. Note that $\deg g \leq d \cdot k \cdot s$ and $g$ is computable by a $\PSU{}{}$ circuit of size $O(s)$. Invoke \autoref{lem:prod-sparse-hs} to obtain a hitting set $\calH$, then evaluate $g$ on every point of $\calH$ to find an element $\a \in \calH$ such that $g(\a) \neq 0$. We emphasise that in whitebox setting all $U_{i,0}$, are readily available for evaluation. Since, the size of the set is $\poly(s)$ and each evaluation takes $\poly(s)$ time, this preliminary step will add $\poly(s)$ time to the overall time complexity. Moreover, we obtain the $\a \in \F^n$ which possess the required property. 

To capture the non-zeroness, consider a 1-1 homomorphism $\Phi : \F[\x] \longrightarrow \F[\x,z]$ such that $x_i \mapsto z\cdot x_i + a_i$ where $a_i$ is the $i$-th coordinate of $\a$, obtained earlier. Invertibility implies that $f_{0} = 0 \iff \Phi(f_{0}) = 0$. 
Now we proceed with the recursive algorithm which first reduces the identity testing from top-fanin $k$ to $k-1$. Note: $k=1$ is trivial.

\subsubsection*{First Step: Efficient reduction from $k$ to $k-1$}
By assumption, $\sum_{i=1}^k\,T_{i,0}\,=\,f_{0}$ and $T_{k,0}\ne0$. Apply $\Phi$ both sides, then divide and derive:
\begin{align} \label{Eq:thm1-k-k-1}
    \sum_{i \in [k]}\,T_{i,0} \;=\; f_0 \; &\iff\;  \sum_{i \in [k]}\,\Phi(T_{i,0}) \;=\; \Phi(f_0) \nonumber\\ 
    &\iff  \sum_{i \in [k-1]}\, \frac{\Phi(T_{i,0})}{\Phi(T_{k,0})}\,+\, 1 \;=\; \frac{\Phi(f_0)}{\Phi(T_{k,0})} \nonumber\\ 
    &\implies \sum_{i \in [k-1]}\,\partial_{z} \left( \frac{\Phi(T_{i,0})}{\Phi(T_{k,0})}\right) \;=\; \partial_{z}\left(\frac{\Phi(f_0)}{\Phi(T_{k,0})}\right) \nonumber\\ 
&\iff \sum_{i=1}^{k-1}\,\frac{\Phi(T_{i,0})}{\Phi(T_{k,0})} \cdot \dlg_{z} \left( \frac{\Phi(T_{i,0})}{\Phi(T_{k,0})} \right) \;=\; \partial_{z} \left( \frac{\Phi(f_0)}{\Phi(T_{k,0})}\right)\;.
\end{align}
Here onwards we say \(\dlg\) to mean \(\dlg_z\), unless stated otherwise. Define the following:
\begin{itemize}
    \setlength\itemsep{1em}
    \item $\ringR_1\,:=\, \F[z]/\langle z^{d} \rangle$. Note that, \autoref{Eq:thm1-k-k-1} holds over $\ringR_1(\x)$.
    \item $\widetilde{T}_{i,1}:=\Phi(T_{i,0})/\Phi(T_{k,0}) \cdot \dlg (\Phi(T_{i,0})/\Phi(T_{k,0}))$, $\forall\;i \in [k-1]$.
    \item $ f_{1}:=\partial_{z}(\Phi(f_0)/\Phi(T_{k,0}))$, over $\ringR_1(\x)$.
\end{itemize}
\vspace*{1em}

\textbf{Definability of $T_{i,1}$ and $f_1$.} It is easy to see that these are well-defined terms. Here, we emphasize that we do not exactly compute/store $\widetilde{T}_{i,1}$ as a fraction where the degree in $z$ is $<d$; instead it is computed as an element in $\F(z,\x)$, where $z$ is a formal variable. Formally, we compute $T_{i,1} \in \F(z,\x)$, such that $\widetilde{T}_{i,1} = T_{i,1}$, over $\ringR_1(\x)$. We keep track of the degree of $z$ in $T_{i,1}$. 
Thus, $\sum_{i \in [k-1]}\,T_{i,1}=f_1$, over $\ringR_1(\x)$.

\vspace*{1em}
\textbf{The `iff' condition.} To show that our one step of $\DiDI$ has reduced to the identity testing of $\Gen{k-1}{\cdot}$, we need an $\iff$ condition. So far equality in \autoref{Eq:thm1-k-k-1} over $\ringR_1(\x)$ is \emph{one-sided}. Note that $f_{1} \ne 0$ implies $\v_{z}(f_{1}) < d=:d_1$. By assumption, $\Phi(T_{k,0})$ is invertible over $\ringR_1(\x)$. Further, $f_{1} = 0$, over $\ringR_1(\x)$, which implies --

\begin{enumerate}
    \setlength\itemsep{1em}
    \item Either, $\Phi(f_0)/\Phi(T_{k,0})$ is $z$-free. Then $\Phi(f_0)/\Phi(T_{k,0}) \,\in \F(\x)$, which further implies it is in $\F$, because of the map $\Phi$ ($z$-free implies $\x$-free, by substituting $z=0$). Also, note that $f_0, T_{k,0} \ne 0$ implies $\Phi(f_0)/\Phi(T_{k,0})$ is a \emph{nonzero} element in $\F$. Thus, it suffices to check whether  $\Phi(f_0)\rvert_{z=0}$ is non-zero or not. 
    \item Or, $\partial_{z}(\Phi(f_0)/\Phi(T_{k,0}))= z^{d_1} \cdot p$ where $p \in \F(z,\x)$ s.t.~$\v_{z}(p) \ge 0$. By simple power series expansion, one can show that $p \in \F(x)[[z]]$. 
    
    \begin{lemma}[Valuation]\label{lem:val-powerseries}
        Consider $f \in \F(\x,y)$ such that $\v_{y}(f) \ge 0$. Then, $f \in \F(\x)[[y]]\,\bigcap\,\F(\x,y)$.
    \end{lemma}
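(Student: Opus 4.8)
The plan is to expand $f$ in the field of formal Laurent series $\K((y))$, where $\K := \F(\x)$, and to observe that the hypothesis $\v_y(f) \ge 0$ forces this expansion to carry no negative powers of $y$, i.e.\ to lie in $\K[[y]] = \F(\x)[[y]]$.

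Since $\F(\x, y)$ is the fraction field of $\K[y]$, I would first write $f = p/q$ with $p, q \in \K[y]$ and $q \ne 0$. If $p = 0$ then $f = 0$ and there is nothing to prove, so assume $p \ne 0$ and factor out the top powers of $y$: write $p = y^{b} p_0$ and $q = y^{a} q_0$ with $p_0, q_0 \in \K[y]$ of nonzero constant term, where $b = \v_y(p)$ and $a = \v_y(q)$; by definition $\v_y(f) = b - a \ge 0$.

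The one substantive point is that $q_0$ is a unit in the power-series ring $\K[[y]]$. Writing $q_0 = c_0 + c_1 y + \cdots + c_m y^m$ with $c_0 \ne 0$ in the \emph{field} $\K$, we have $q_0 = c_0(1 - N)$ where $N := -(c_1/c_0) y - \cdots - (c_m/c_0) y^m$ has $\v_y(N) \ge 1$; hence $q_0^{-1} = c_0^{-1} \sum_{j \ge 0} N^{j} \in \K[[y]]$, the sum being well defined in the $y$-adic topology since every power of $y$ receives only finitely many contributions. Consequently
\[ f \;=\; y^{\,b-a}\,\bigl(p_0\, q_0^{-1}\bigr) \;=\; y^{\,\v_y(f)}\cdot\bigl(p_0\, q_0^{-1}\bigr) \;\in\; \K[[y]], \]
using $\v_y(f) \ge 0$ together with $p_0\, q_0^{-1} \in \K[[y]]$. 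Since the natural map $\F(\x,y) \hookrightarrow \K((y))$ is an injective ring homomorphism, this expansion \emph{is} the image of $f$, so $f \in \F(\x)[[y]] \cap \F(\x, y)$, as claimed.

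I do not anticipate a genuine obstacle here: the only care needed is in the definition of $\v_y$ on the fraction field (already recorded in the preliminaries) and in the geometric-series inversion of $q_0$, which is routine once one works $y$-adically over the field $\K$.
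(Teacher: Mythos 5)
Your proposal is correct and follows essentially the same route as the paper's proof sketch: factor out the $y$-valuations of numerator and denominator, note the remaining denominator has a nonzero constant term (over $\F(\x)$), and invert it via the geometric series in $\F(\x)[[y]]$. The only cosmetic difference is that you write $f = p/q$ with $p, q \in \F(\x)[y]$ and phrase the embedding into $\F(\x)((y))$ explicitly, whereas the paper starts from $g, h \in \F[\x, y]$ and divides by the constant coefficient $h_0 \in \F[\x]$ to pass to $\F(\x)$; both lead to the same calculation.
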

    \begin{proof-sketch}
        Let $f=g/h$, where $g,h \in \F[\x,y]$. Now, $\v_{y}(f) \ge 0$, implies $\v_y(g) \ge \v_y(h)$. Let $\v_{y}(g)=d_1$ and $\v_{y}(h)=d_2$, where $d_1 \ge d_2 \ge 0$. Write $g=y^{d_1}\cdot \tilde{g}$ and $h=y^{d_2}\cdot \tilde{h}$. Write, $\tilde{h}= h_0+h_1\,y+h_2\,y^2+\hdots + h_d\,y^d$, for some $d$. Note that $h_0 \ne 0$. Thus, 
        \begin{align*}
            f&\;=\; y^{d_1-d_2} \cdot \tilde{g}/(h_0+h_1\,y+\hdots+h_d\,y^d) \\ &\;=\;y^{d_1-d_2} \cdot (\tilde{g}/h_0) \cdot (1+(h_1/h_0)\,y+\hdots + (h_d/h_0)\,y^d)^{-1}\; \in \F(\x)[[y]]\;.
        \end{align*}
        The last conclusion follows by the inverse identity in the power-series ring.
    \end{proof-sketch}

    Hence, $\Phi(f_0)/\Phi(T_{k,0}) = z^{d_1+1} \cdot q$ where $q \in F(\x)[[z]]$, i.e. \[\Phi(f_0)/\Phi(T_{k,0}) \in \langle z^{d_1+1}\rangle_{\F(\x)[[z]]} \implies \v_{z}(\Phi(f_0)) \ge d+1,\] a contradiction.
\end{enumerate}

\noindent Conversely, it is obvious that $f_0=0$ implies $f_1=0$. Thus, we have proved the following 
\[ \sum_{i \in [k]}\,T_{i,0}\,\ne 0 \;\text{ over}\;\F[\x] \iff \sum_{i \in [k-1]}\, T_{i,1} \ne 0 \;\text{ over} \;\ringR_1(\x),\;\text{ or}\,,\;\;0 \ne \Phi(f_0) \rvert_{z=0} \in \F\;.\]
 Eventually, we show that $T_{i,1} \in (\PSU{}{}/\PSU{}{}) \cdot (\SWSU{}{}/\SWSU{}{})$, over $\ringR_1(\x)$, with polynomial blowup in size (\autoref{cl:size-bound-on-Ti}). So, the above circuit is in $\Gen{k-1}{\cdot}$, over $\ringR_1(\x)$, which we recurse on to finally give the identity testing. The subsequent steps will be a bit more tricky:

\vspace*{1em}
\subsubsection*{Induction step} 

Assume that we are in the $j$-th step ($j\ge1$). Our induction hypothesis assumes --

\begin{enumerate}
    \item $\sum_{i \in [k-j]}\, T_{i,j} = f_j$, over $\ringR_j(\x)$, where $\ringR_j:= \F[z]/\langle z^{d_j}\rangle$ for \(d_j < d\), and $T_{i,j} \ne 0$. \label{hypo1}
    \item $\v_{z}(T_{i,j}) \ge 0, \forall i \in [k-j]$. \label{hypo2}
    \item Non-zero preserving iff condition
    \begin{align*} 
        f \neq 0 \text{, over } \F[\x] \iff & f_j \ne 0, \text{ over } \ringR_j(\x), \\
        & \text{or }\bigvee_{i=0}^{j-1} \left((f_i/T_{k-i,i}) \rvert_{z=0}\ne 0,\, \text{over}\,\F(\x)\right)
    \end{align*} \label{hypo3}
    \item Here, $T_{i,j}=: \left( U_{i,j}/V_{i,j} \right) \cdot \left( P_{i,j}/Q_{i,j} \right)$, where $U_{i,j}, V_{i,j} \in \PSU{}{}$, and $P_{i,j}, Q_{i,j} \in \SWSU{}{}$, each in $\ringR_j[\x]$. Think of them being computed as $\F(z,\x)$, with the degrees being tracked. Wlog, assume that $\v_{z}(T_{k-j,j})$ is the minimal among all $T_{i,j}$'s. \label{hypo4}
    \item $U_{i,j}\rvert_{z=0}, V_{i,j}\rvert_{z=0}  \in \F\backslash \{0\}$. \label{hypo5}
\end{enumerate} 

We follow as before without applying homomorphism any further. Note that the `or condition' in the hypothesis \ref{hypo3} is similar to the $j=0$ case except that there is no $\Phi$: this is because $\Phi(f_0)\rvert_{z=0} \ne 0 \iff \Phi(f_0/T_{k,0})\rvert_{z=0} \ne 0$. This condition just separates the derivative from the constant-term.

\textbf{Efficient reduction from $k-j$ to $k-j-1$.} Let $\v_{z}(T_{i,j})=:v_{i,j}$, for all $i \in [k-j]$. Note that \[\min_{i} \v_{z}(T_{i,j})=\min_{i}\v_{z}(P_{i,j}/Q_{i,j})=v_{k-j,j}\] since $\v_{z}(U_{i,j}) =\v_{z}(V_{i,j}) =0$ (else we reorder). We remark that  $ 0 \le v_{i,j} < d_j$ for all $i$'s in $j$-th step; upper-bound is strict, since otherwise $T_{i,j}=0$ over $\ringR_j(x)$.

Similar to the first step, we divide with $T_{k-j,j}$ which has $\min \v$ and then derive:
\begin{align}
    \sum_{i \in [k-j]}\,T_{i,j} \;=\; f_j  &\iff  \sum_{i \in [k-j-1]}\,T_{i,j}/T_{k-j,j} \,+\, 1 \;=\; f_j/T_{k-j,j} \nonumber\\  &\implies \sum_{i \in [k-j-1]}\,\partial_{z} (T_{i,j}/T_{k-j,j}) \;=\; \partial_{z}(f_j/T_{k-j,j})\nonumber \\ &\iff \sum_{i=1}^{k-j-1}\,T_{i,j}/T_{k-j,j} \cdot \dlg (T_{i,j}/T_{k-j,j}) \;=\; \partial_{z} (f_j/T_{k-j,j}) \label{eq:induction-thm1-main-eq}
\end{align}
Define the following:
\begin{itemize}
    \setlength\itemsep{1em}
    \item $\ringR_{j+1}:= \F[z]/ \langle z^{d_{j+1}} \rangle$, where $d_{j+1}:=d_j-v_{k-j,j}-1$. 
    \item $\widetilde{T}_{i,j+1} := T_{i,j}/T_{k-j,j} \cdot \dlg(T_{i,j}/T_{k-j,j})$, $\forall\;i \in [k-j-1]$.
    \item $f_{j+1}:=\partial_{z}(f_j/T_{k-j,j})$, over $\ringR_{j+1}(\x)$.
\end{itemize}

We emphasize on the fact again that we do not exactly compute $\widetilde{T}_{i,j+1}$ mod $z^{d_{j+1}}$; instead it is computed as a fraction in $\F(z,\x)$, with formal $z$. Formally, we compute $T_{i,j+1} \in \F(z,\x)$, such that $\widetilde{T}_{i,j+1} = T_{i,j+1}$, over $\ringR_{j+1}(\x)$. We keep track of the degree of $z$ in $T_{i,j+1}$. Next, we will show that all the inductive hypotheses assumed hold in the $j^{\text{th}}$ step as well. 

\smallskip

\noindent \textbf{Hypothesis (\ref{hypo1}): Definability of $T_{i,j+1}$ and $f_{j+1}$.} By the minimal valuation assumption, it follows that $\v(f_j) \ge v_{k-j,j}$, and thus $\widetilde{T}_{i,j+1}$ and $f_{j+1}$ are all well-defined over $\ringR_{j+1}(\x)$. Note that, \autoref{eq:induction-thm1-main-eq} holds over $\ringR_{j+1}(\x)$ as $d_{j+1} < d_j$ (because, whatever identity holds true $\bmod \ z^{d_j}$ must hold $\bmod \ z^{d_{j+1}}$ as well). Hence, we must have $\sum_{i=1}^{k-j-1}\, \widetilde{T}_{i,j+1} = f_{j+1}$, over $\ringR_{j+1}(\x)$ thus proving the induction hypothesis (\ref{hypo1}). 
 

\smallskip

\noindent \textbf{Hypothesis (\ref{hypo2}): Positivity of Valuation.} Since we divide by the $\min \v$, by definition we immediately get $\v_{z}(T_{i,j+1}) \ge 0$ proving the hypothesis. Further, we claim that min $\v$ computation in $\DiDI$ is easy. For this, recall from the definition of valuation \[\min_{i}\v_{z}(P_{i,j}/Q_{i,j})= \min_{i} (\v_{z}(P_{i,j}) - \v_{z}(P_{i,j}) ). \] Therefore, for $\min \v$ we compute $\v_{z}(P_{i,j})$ and $\v_{z}(Q_{i,j})$ for all $i \in [k-j]$.

Here is an important lemma which shows that coefficient of $y^e$ of a polynomial $f(\x,y) \in \F[\x,y]$, computed by a $\SWSU{}{}$ circuit, can be computed by a small $\SWSU{}{}$ circuit.

\begin{lemma}[Coefficient extraction]\label{lem:cf-swsw}
Let $f(\x,y) \in \F[y][\x]$ be computed by a $\SWSU{}{}$ circuit of size $s$ and degree $d$. Then, $\cf_{y^e}(f) \in \F[\x]$ can be computed by a small $\SWSU{}{}$ circuit of size $O(sd)$, over $\F[\x]$.
\end{lemma}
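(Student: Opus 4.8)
Let $f(\x,y) \in \F[y][\x]$ be computed by a $\SWSU{}{}$ circuit of size $s$ and degree $d$. Then $\cf_{y^e}(f) \in \F[\x]$ can be computed by a $\SWSU{}{}$ circuit of size $O(sd)$, over $\F[\x]$.

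The plan is to expand every summand of the input $\SWSU{}{}$ circuit by the binomial theorem, singling out the distinguished variable $y$. Write $f = \sum_{i} f_i^{e_i}$, where each $f_i$ is a sum of univariates; grouping the $y$-part separately, $f_i(\x,y) = h_i(y) + G_i(\x)$ with $h_i \in \F[y]$ univariate and $G_i \in \Sigma\wedge$ over $\F[\x]$. Since the syntactic degree is $\le d$, we may assume $e_i \le d$ and $\deg h_i, \deg G_i \le d$ for all $i$, and that the number of top summands $s'$ satisfies $s' \le s$.

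First I would apply the binomial theorem: since $G_i$ is $y$-free, $f_i^{e_i} = \sum_{t=0}^{e_i} \binom{e_i}{t}\, h_i(y)^t\, G_i(\x)^{e_i - t}$, and extraction of the coefficient of $y^e$ commutes with the $\x$-part, so $\cf_{y^e}(f_i^{e_i}) = \sum_{t=0}^{e_i} c_{i,t}\, G_i(\x)^{e_i-t}$, where $c_{i,t} := \binom{e_i}{t}\, \cf_{y^e}\!\bigl(h_i(y)^t\bigr) \in \F$. Each scalar $c_{i,t}$ is obtained by expanding the univariate $h_i^t$ (degree $\le d$, $t \le d$), which is efficient. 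Summing over $i$ gives $\cf_{y^e}(f) = \sum_i \sum_{t=0}^{e_i} c_{i,t}\, G_i(\x)^{e_i-t}$, visibly a sum of powers of sums of univariates, hence a $\SWSU{}{}$ circuit over $\F[\x]$.

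For the size bound, there are at most $\sum_i (e_i+1) \le (d+1)\,s'$ power-terms, and each $c_{i,t}\, G_i^{e_i-t}$ costs $O(\sp(G_i) + \deg G_i) = O(\size(f_i))$ (the scalar $c_{i,t}$ is absorbed into the base $G_i$ after passing to a field extension of degree $\le e_i \le d$, as in the remark following \autoref{lem:waring-rank}; the term with $e_i - t = 0$ is just the constant $c_{i,e_i}$, which is harmless). Summing, the size is $\sum_i (d+1)\cdot O(\size(f_i)) = O(sd)$, as claimed.

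The argument is essentially a routine binomial expansion; the only mild subtlety — and the one point worth stating carefully — is keeping the top fan-in under control when absorbing the scalars $c_{i,t}$, which the field-extension trick of \autoref{lem:waring-rank} handles (alternatively, one may simply permit a linear combination at the top $\Sigma$-gate of a $\SWSU{}{}$ circuit, which the size measure accommodates at no extra cost).
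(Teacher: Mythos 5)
Your proof is correct but takes a genuinely different route from the paper's. The paper's argument is by interpolation: write $f = \sum_{j=0}^{d} f_j\, y^j$, observe that each specialization $f(\x,\beta)$ of a $\SWSU{}{}$ circuit at $y=\beta\in\F$ is again a $\SWSU{}{}$ circuit over $\F[\x]$ of size $O(s)$, and recover $\cf_{y^e}(f)$ as a Lagrange linear combination of $d+1$ such specializations, giving the $O(sd)$ bound directly. You instead split each bottom sum $f_i$ into its $y$-part $h_i(y)$ and $\x$-part $G_i(\x)$, expand $f_i^{e_i}=(h_i+G_i)^{e_i}$ by the binomial theorem, and extract the $y^e$-coefficient term by term, getting $\cf_{y^e}(f) = \sum_i\sum_t c_{i,t}\,G_i^{e_i-t}$. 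Both are sound. The interpolation route is marginally more robust, as it uses only the $y$-degree of the output $f$ being at most $d$ and places no constraint on the exponents $e_i$; your expansion needs $e_i\le d$ (a syntactic-degree bound, which you make explicit) to control the number of resulting power-terms by $(d+1)s$. Your route has the mild virtue of being explicit about where each coefficient comes from, without choosing interpolation nodes; and the scalar-absorption subtlety you raise is actually a non-issue once one notes, as the paper's own presentation $f=\sum_i \alpha_i g_i^{e_i}$ does, that a $\SWSU{}{}$ circuit is a linear combination of powers, so the scalars $c_{i,t}$ simply live on the top $\Sigma$-gate with no field extension needed.
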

\begin{proof-sketch}
    Let, $f= \sum_{i} \alpha_i \cdot g_i^{e_i}$. Of course, $e_i \le s$ and $\d_{y}(f) \le d$. Thus, write $f=\sum_{i=0}^{d}\,f_i\cdot y^i$, where $f_i \in \F[\x]$. We can interpolate on $d+1$-many distinct points $y\in \F$ and conclude that $f_i$ has a $\SWSU{}{}{}$ circuit of size at most $O(sd)$. 
\end{proof-sketch}

\noindent Using \autoref{lem:cf-swsw} we known $\cf_{z^e}(P_{i,j})$ and $\cf_{z^e}(Q_{i,j})$ are in $\SWSU{}{}$ over $F[\x]$. We can keep track of $z$ degree and thus interpolate to find the minimum $e < d_j$ such that the computed coefficients are $\ne 0$, which gives the respective $\v$.

\smallskip

\noindent \textbf{Hypothesis (\ref{hypo3}): The `iff' condition.} The above \autoref{eq:induction-thm1-main-eq} pioneers to reduce from $k-j$-summands to $k-j-1$. But we want a $\iff$ condition to efficiently reduce the identity testing. If $f_{j+1} \ne 0$, then $\v_{z}(f_{j+1}) < d_{j+1}$.  Further, $f_{j+1} = 0$, over $\ringR_{j+1}(\x)$ implies--
\begin{enumerate}
    \item Either, $f_j/T_{k-j,j}$ is $z$-free. This implies it is in $\F(\x)$. Now, if indeed $f_0 \ne 0$, then the computed $T_{i,j}$ as well as $f_j$ must be non-zero over $\F(z,\x)$, by induction hypothesis (as they are non-zero over $\ringR_j(\x)$).  However,
    \begin{align*}
    \left(\frac{T_{i,j}}{T_{k-j,j}}\right) \bigg\rvert_{z=0} &= \left( \frac{U_{i,j} \cdot V_{k-j,j}}{U_{k-j,j} \cdot V_{i,j}} \right) \bigg\rvert_{z=0} \cdot \left( \frac{P_{i,j} \cdot Q_{k-j,j}}{P_{k-j,j} \cdot Q_{i,j}} \right) \bigg\rvert_{z=0} \\ & \in\; \F \cdot \left( \frac{\SWSU{}{}{}}{\SWSU{}{}{}} \right).
    \end{align*} Thus, \[ \frac{f_j}{T_{k-j,j}} \; \in \; \sum \;  \F \cdot \left( \frac{\SWSU{}{}{}}{\SWSU{}{}{}} \right) \,\in\,  \left( \frac{\SWSU{}{}{}}{\SWSU{}{}{}} \right).\] Here we crucially use that $\SWSU{}{}{}$ is closed under multiplication (\autoref{lem:prod-swsw}). 
    Thus, this identity testing can be done in $\poly$-time (\autoref{lem:swsu-hs}).  For, detailed time-complexity and calculations, see \autoref{cl:size-bound-on-Ti} and its subsequent paragraph. \smallskip 
    \item Or, $\partial_{z}(f_j/T_{k-j,j}) = z^{d_{j+1}} \cdot p$, where $p \in \F(z,\x)$ s.t.~$\v_{z}(p) \ge 0$. By a simple power series expansion, one concludes that $p \in \F(\x)[[z]]$ (\autoref{lem:val-powerseries}). Hence, one concludes that \[ \frac{f_j}{T_{k-j,j}} \in  \left\langle z^{d_{j+1}+1}\right\rangle_{\F(\x)[[z]]}\; \implies \v_{z}(f_j) \ge d_j,\] i.e.~$f_j = 0$, over $\ringR_j(\x)$.
\end{enumerate}

\medskip
Conversely, $f_j=0$, over $\ringR_j(\x)$, implies \begin{align*}\v_{z}(f_j) \ge d_j &\implies \v_{z} \left( \partial_{z}\left( \frac{f_j}{T_{k-j,j}} \right) \right) \ge d_j-v_{k-j,j}-1 \\ &\implies f_{j+1}=0, \; \text{over} \; \ringR_{j+1}(\x).\end{align*}
Thus, we have proved that $\sum_{i \in [k-j]}\,T_{i,j}\,\ne 0 \;\text{ over}\;\ringR_j(\x)$ iff \[\sum_{i \in [k-j-1]}\, T_{i,j+1} \ne 0 \;\text{ over} \;\ringR_{j+1}(\x)\;,\,\text{or}\,,\;\;0 \ne \left( \frac{f_j}{T_{k-j,j}} \right)\bigg\rvert_{z=0} \in \F(\x)\;.\]
Therefore induction hypothesis (\ref{hypo3}) holds. 


\smallskip
\noindent \textbf{Hypothesis (\ref{hypo4}): Size analysis.} We will show that $T_{i,j+1} \in (\PSU{}{}/\PSU{}{}) \cdot (\SWSU{}{}/\SWSU{}{})$, over $\ringR_{j+1}(\x)$, with only polynomial blowup in size.  Let $\size(T_{i,j}) \le s_j$, for $i \in [k-j]$, and $j \in [k]$. Note that, by assumption, $s_0 \le s$.

\begin{claim}[Final size] \label{cl:size-bound-on-Ti}
$T_{1,k-1} \in (\PSU{}{}/\PSU{}{}) \cdot (\SWSU{}{}/\SWSU{}{})$ of size $s^{O(k7^k)}$, over $\ringR_{k-1}(\x)$.
\end{claim}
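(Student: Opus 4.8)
The plan is to prove the size bound by induction on the recursion depth $j\in\{0,1,\dots,k-1\}$, tracking $s_j:=\max_{i}\,\size(T_{i,j})$. The base case is $s_0\le s$, since every $\SPSU{[k]}{}{}$ product gate lies in $\PSU{}{}$. The target is a per-step recurrence of the shape $s_{j+1}\le (k\,d\cdot s_j)^{c}$ for an absolute constant $c$ (one can arrange $c\le 7$); unrolling this over the $k-1$ steps and using $d\le s$ then gives $\size(T_{1,k-1})\le s^{O(k\,7^k)}$.

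For the inductive step, fix stage $j$ and recall that for $i\in[k-j-1]$, over $\ringR_{j+1}(\x)$,
\[
    \widetilde{T}_{i,j+1}\;=\;\frac{T_{i,j}}{T_{k-j,j}}\cdot\dlg_z\!\left(\frac{T_{i,j}}{T_{k-j,j}}\right)\;=\;\partial_z\!\left(\frac{T_{i,j}}{T_{k-j,j}}\right),
\]
where $T_{i,j}=(U_{i,j}/V_{i,j})\cdot(P_{i,j}/Q_{i,j})\in(\PSU{}{}/\PSU{}{})\cdot(\SWSU{}{}/\SWSU{}{})$ of size $\le s_j$, and the division is legitimate by the valuation/invertibility invariants (\ref{hypo2}) and (\ref{hypo5}). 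I would handle the step in two phases. In the \emph{divide} phase, $A_{i,j}:=T_{i,j}/T_{k-j,j}=\frac{U_{i,j}V_{k-j,j}}{V_{i,j}U_{k-j,j}}\cdot\frac{P_{i,j}Q_{k-j,j}}{Q_{i,j}P_{k-j,j}}$: the two $\PSU{}{}$ entries are products of two $\PSU{}{}$ circuits, hence again $\PSU{}{}$ of size $\le 2s_j$ (concatenate factor-lists), while the two $\SWSU{}{}$ entries are products of two $\SWSU{}{}$ circuits, hence by \autoref{lem:prod-swsw} again $\SWSU{}{}$ of size $O(d\,s_j^{2})$. So $A_{i,j}$ stays in $(\PSU{}{}/\PSU{}{})\cdot(\SWSU{}{}/\SWSU{}{})$ with polynomial blow-up.

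In the \emph{derive} phase, use additivity of $\dlg_z$ over products to split $\dlg_z(A_{i,j})$ into the four $\dlg_z$-contributions of the $\PSU{}{}$- and $\SWSU{}{}$-constituents of $A_{i,j}$. For a $\SWSU{}{}$ constituent $g=\sum_r h_r^{e_r}$ one has $\partial_z g=\sum_r e_r\,h_r^{e_r-1}\,\partial_z h_r$; since $\partial_z h_r$ is again a sum of univariates, \autoref{lem:waring-rank} together with \autoref{lem:prod-swsw} rewrites each summand as a $\SWSU{}{}$ circuit, so $\dlg_z(g)=\partial_z g/g$ stays in $\SWSU{}{}/\SWSU{}{}$ with polynomial size, \autoref{lem:cf-swsw} keeping the $z$-degree bookkeeping honest. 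For a $\PSU{}{}$ constituent $L=\prod_m L_m$, the linearization $\dlg_z(L)=\sum_m\partial_z(L_m)/L_m$ is a sum of ratios of sums-of-univariates (one per factor of $L$); re-combining over the common denominator $L$ (itself $\PSU{}{}$), the numerator is a $\Sigma\PSU{}{}$, which \autoref{lem:prod-swsw} folds into a single $\SWSU{}{}$ circuit. Multiplying $A_{i,j}$ through the sum $\dlg_z(A_{i,j})$, cancelling the $\PSU{}{}$-numerator of $A_{i,j}$ against the $\PSU{}{}$-denominators produced by the $\PSU{}{}$-$\dlg_z$ terms, and merging the surviving $\SWSU{}{}$ pieces with one further \autoref{lem:prod-swsw}, returns $\widetilde{T}_{i,j+1}$ to the form $(\PSU{}{}/\PSU{}{})\cdot(\SWSU{}{}/\SWSU{}{})$; counting the constantly-many $\SWSU{}{}$-multiplications and differentiations performed yields $s_{j+1}\le (k\,d\,s_j)^{O(1)}$.

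The step I expect to be the real obstacle is the size accounting in this last consolidation, for two intertwined reasons. First, \autoref{lem:prod-swsw} turns a product of $t$ sums-of-univariates of degree $\le d$ into a $\SWSU{}{}$ circuit of size roughly $d^{\,t}$ times the product of the pieces, so one must argue that the number of $\PSU{}{}$-factors feeding each such call stays genuinely bounded rather than growing with $s_j$. Second, this is precisely why $d$ is kept as a separate parameter in the setup and why the modulus shrinks, $d_{j+1}=d_j-v_{k-j,j}-1$: the $z$-degree drop at each derivative must be shown to pay for these conversions, so that per step only the \emph{exponent} of $s$ (and not its base) is amplified, by a bounded factor. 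Granting the recurrence $s_{j+1}\le (k\,d\,s_j)^{O(1)}$, unrolling it $k-1$ times and substituting $d\le s$ gives $\size(T_{1,k-1})\le s^{O(k\,7^k)}$, which is the claim; the terminal non-zeroness test (the $j=k-1$ case of invariant (\ref{hypo3}), where $T_{1,k-1}\in(\PSU{}{}/\PSU{}{})\cdot(\SWSU{}{}/\SWSU{}{})$) then reduces to separate PITs for its $\PSU{}{}$ part (\autoref{lem:prod-sparse-hs}) and its $\SWSU{}{}$ part (\autoref{lem:swsu-hs}).
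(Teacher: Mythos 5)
Your overall strategy (induct on the recursion depth $j$, track $s_j$, use divide-and-derive, and appeal to \autoref{lem:prod-swsw} / \autoref{lem:waring-rank} / \autoref{lem:cf-swsw} to stay in the bloated class) is the right skeleton, and your treatment of $\dlg_z$ on the $\SWSU{}{}$-constituents (differentiate term-by-term, leave the result as $\SWSU{}{}/\SWSU{}{}$) is a valid alternative to the paper's interpolation route (\autoref{lem:diff-size}). But there is a genuine gap, and it is exactly the step you flag as an ``obstacle'': the handling of $\dlg_z$ on the $\PSU{}{}$-constituents. You write $\dlg_z(L)=\sum_m \partial_z(L_m)/L_m$ for $L=\prod_m L_m$, then recombine over the common denominator $L$ to get a numerator in $\Sigma\PSU{}{}$, and propose to ``fold'' it into $\SWSU{}{}$ via \autoref{lem:prod-swsw}. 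That folding is not affordable: the terms in your numerator are products $\partial_z(L_m)\cdot\prod_{m'\ne m}L_{m'}$ with fanin up to $s_j$, and converting a $\Pi^{[s_j]}\Sigma\wedge$ into $\SWSU{}{}$ via \autoref{lem:waring-rank} costs Waring rank $\ge 2^{s_j-1}$ -- exponential, not a constant-power blowup. So the step that produces your claimed recurrence $s_{j+1}\le (kd\,s_j)^{O(1)}$ does not actually go through as written, and the result leaves the invariant class $(\PSU{}{}/\PSU{}{})\cdot(\SWSU{}{}/\SWSU{}{})$.

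The paper's argument avoids this by never combining the $\dlg_z(L_m)$'s over the common denominator $L$. Instead it exploits the invertibility invariant (your hypothesis (\ref{hypo5})): since $\Phi$ was chosen so that each factor $L_m$ satisfies $L_m|_{z=0}=A\in\F\setminus\{0\}$, one writes $L_m=A-zB$ and uses the truncated geometric series
\[
  \dlg_z(L_m) \;=\; -\frac{\partial_z(Bz)}{A}\cdot\sum_{i=0}^{d_j-1}\Bigl(\tfrac{B}{A}\Bigr)^{i} z^{i}
  \quad\text{over }\ringR_j(\x),
\]
which exhibits each $\dlg_z(L_m)$ \emph{directly} as a $\SWSU{}{}$-circuit of size polynomial in $\size(L_m)$ and $D_j$ -- no ratio, no $\PSU{}{}\to\SWSU{}{}$ conversion. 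Then $\dlg_z(L)=\sum_m\dlg_z(L_m)$ is a sum of polynomially many $\SWSU{}{}$'s, hence $\SWSU{}{}$. This power-series expansion is the central analytic move of the whole section and it is entirely absent from your proposal; without it there is no route back into the invariant class. A secondary (and smaller) issue: your per-step bound treats the degree parameter as the fixed $d$, whereas the proof must track the growing syntactic degree $D_j$ of the stored numerators and denominators (since the terms are held in $\F[z,\x]$ without reducing mod $z^{d_j}$ at each step), which satisfies $D_{j+1}=O(dD_j)$ and hence $D_j=d^{O(j)}$. This is why the paper's recurrence reads $s_{j+1}=s_j^{7}\cdot d^{O(j)}$ rather than $(kd\,s_j)^{O(1)}$; it does not change the final $s^{O(k7^k)}$ bound, but it is needed for the accounting to be honest.
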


\begin{proof}
Steps $j=0$ and $j>0$ are slightly different because of the $\Phi$. However the main idea of using power-series is the same which eventually shows that $\dlg(\Sigma \wedge) \in \SWSU{}{}{}$. 

We first deal with $j=0$. Let $A-z \cdot B= \Phi(g) \in \Sigma \wedge$, for some $A \in \F$ and $B \in \ringR_1[\x]$. Note that $A \ne 0$ because of the map $\Psi$. Further, $\size(B) \le O(d \cdot \size(g))$, as a single monomial of the form $x^e$ can produce $d+1$-many monomials. Over $\ringR_1(\x)$,
\begin{align}
    \dlg(\Phi(g)) = -\frac{\partial_{z}(B\cdot z)}{A(1- \frac{B}{A}\cdot z)}  = -\frac{\partial_{z}(B\cdot z)}{A} \cdot \sum_{i=0}^{d_1-1} \left(\frac{B}{A}\right)^{i} \cdot z^i\;. \label{powerseries-1/g-phi}
\end{align}
$B^i$ has a trivial $\WSU{}$-circuit of size $O(d \cdot \size(g))$.  Also, $\partial_{z}(B\cdot z)$ has a $\Sigma \wedge$-circuit of size at most $O(d\cdot \size(g))$. Using waring identity (\autoref{lem:waring-rank}), we get that each $\partial_{z}(B\cdot z) \cdot (B/A)^i \cdot z^i$ has size $O(i\cdot d \cdot \size(g))$, over $\ringR_1(\x)$. Summing over $i \in [d_1-1]$, the overall size is at most $ O(d_1^2\cdot d \cdot \size(g))=O(d^3 \cdot \size(g))$, as $d_0=d_1=d$. 

For the $j$-th step, we emphasize that the degree could be larger than $d$. Assume that syntactic degree of denominator and numerator of $T_{i,j}$ (each in $\F[\x,\z]$) are bounded by $D_{j}$ (it is {\em not} $d_j$ as seen above; this is to save on the trouble of mod-computation at each step). Of course, $D_0<d \le s$. 

For $j>0$, the above summation in \autoref{powerseries-1/g-phi} is over $\ringR_{j}(\x)$. However the degree could be $D_j$ (possibly more than $d_j$) of the corresponding $A$ and $B$. Thus, the overall size after the power-series expansion would be $O(D_j^2 \cdot d \cdot \size(g))$. 

Using \autoref{lem:diff-size}, we can show that $\dlg(P_{i,j}) \in \SWSU{}{}/\SWSU{}{}$ (similarly for $Q_{i,j}$), of size $O(D_{j}^2 \cdot s_j)$. Also $\dlg(U_{i,j} \cdot V_{k-j,j}) \in \sum \,\dlg(\Sigma \wedge)$, i.e. sum of action of $\dlg$ on $\Sigma \wedge$ (since $\dlg$ linearizes product); and it can be computed by the above formulation. 
Thus, $\dlg(T_{i,j}/T_{k-j,j})$ is a sum of $4$-many  $\SWSU{}{}/\SWSU{}{}$ of size at most $O(D_{j}^2\,s_j)$ and $1$-many $\SWSU{}{}$ of size $O(D_j^2 d_j s_j)$ (from the above power-series computation) [Note: we summed up the $\SWSU{}{}$-expressions from $\dlg(\Sigma \wedge)$ together]. Additionally the syntactic degree of each denominator and numerator (of the $\SWSU{}{}{}/\SWSU{}{}{}$) is $O(D_j)$. We rewrite the $4$ expressions (each of $\SWSU{}{}{}/\SWSU{}{}{}$) and express it as a single $\SWSU{}{}/\SWSU{}{}$ using waring identity (\autoref{lem:prod-swsw}), with the size blowup of $O(D_{j}^{12}\,s_j^{4})$; here the syntatic degree blowsup to $O(D_j)$. Finally we add the remaining $\SWSU{}{}$ circuit (of size $O(D_j^3 s_j)$ and degree $O(dD_j)$) to get $O(s_j^{5} D_{j}^{16}d)$. To bound this, we need to understand the degree bound $D_j$. 

Finally we need to multiply $T_{i,j}/T_{k-j,j} \in (\PSU{}{}/\PSU{}{}) \cdot (\SWSU{}{}/\SWSU{}{})$ where each $\SWSU{}{}$ is a product of two $\SWSU{}{}$ expression of size $s_j$ and syntactic degree $D_j$; clubbed together owing a blowup of $O(D_j \cdot s_j^2)$. Hence multiplying it with $\SWSU{}{} / \SWSU{}{}$ expression obtained from $\dlg$ computation above gives size blowup of $s_{j+1} = s^7 \cdot D_j^{O(1)} \cdot d$.

Computing $T_{i,j}/T_{k-j,j}$ increases the syntactic degree `slowly'; which is much less than the size blowup.
As mentioned before, the deg-blowup in $\dlg$-computation is $O(dD_j)$ and in the clearing of four expressions, it is just $O(D_j)$. Thus, $D_{j+1}= O(dD_j) \implies D_j= d^{O(j)}$. 
 
The recursion on the size is $s_{j+1} = s_j^{7} \cdot d^{O(j)}$. Using $d \le s$ we deduce, $s_j= (sd)^{O(j \cdot 7^j)}$. In particular, $s_{k-1}$, size after $k-1$ steps is $s^{O(k \cdot 7^k)}$. This computation quantitatively establishes induction hypothesis (\ref{hypo4}).
\label{deg-bound-thm1}
\end{proof}

\smallskip

\noindent \textbf{Hypothesis (\ref{hypo5}): Invertibility of $\PSU{}{}$-circuits.} For invertibility, we want to emphasise that the $\dlg$ compuation plays a crucial role here. In the following lemma we claim that the action $\dlg(\SWSU{}{}) \in \SWSU{}{}{}/\SWSU{}{}{}$, is of $\poly$-size. 

\begin{lemma}[Differentiation] \label{lem:diff-size}
    Let $f(\x,y) \in \F[y][\x]$ be computed by a $\SWSU{}{}$ circuit of size $s$ and degree $d$. Then, $\partial_{y}(f)$ can be computed by a small $\SWSU{}{}$ circuit of size $O(sd^2)$, over $\F[y][\x]$.
\end{lemma}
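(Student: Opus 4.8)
The plan is to unfold the $\SWSU{}{}$ representation, differentiate each summand separately, and then re‑assemble the answer into $\SWSU{}{}$ form by the Waring identity. Write $f=\sum_i \alpha_i\,g_i^{e_i}$, where each $g_i = g_{i,0}(y)+\sum_{\ell\in[n]}g_{i,\ell}(x_\ell)$ is a sum of univariates, with $\sum_i\big(\size(g_i)+e_i\big)\le s$ and $e_i\,\d(g_i)\le d$. By the power rule $\partial_y f = \sum_i \alpha_i e_i\,(\partial_y g_i)\,g_i^{e_i-1}$, and the key structural observation is that $\partial_y g_i = g_{i,0}'(y)$ is a \emph{univariate} polynomial in $y$ of degree $<d$. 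So each summand is already a univariate in $y$ times a power of a sum of univariates --- not literally a $\SWSU{}{}$ circuit, but one cheap transformation away.

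To turn $(\partial_y g_i)\,g_i^{e_i-1}$ into $\SWSU{}{}$ form, expand $\partial_y g_i=\sum_j c_{ij}\,y^j$ and treat each $y^j g_i^{e_i-1}$ as the bivariate monomial $y^jZ^{e_i-1}$ with $Z=g_i$. Applying \autoref{lem:waring-rank} (over a small field extension, exactly as in the remark following it) gives $y^jZ^{e_i-1}=\sum_\varepsilon \gamma_{j,\varepsilon}\,(y+\varepsilon Z)^{j+e_i-1}$ with at most $\max(j,e_i-1)+1\le d$ terms; the crucial point is that substituting $Z=g_i$ keeps each base $y+\varepsilon g_i=\big(y+\varepsilon g_{i,0}(y)\big)+\varepsilon\sum_\ell g_{i,\ell}(x_\ell)$ a sum of univariates, of degree $O(d)$. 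Summing over $i$, $j$, and $\varepsilon$ thus presents $\partial_y f$ as a genuine $\SWSU{}{}$ circuit; no further merging of terms is needed. (One could alternatively first extract the $y$‑coefficients of $f$ via \autoref{lem:cf-swsw} and then re‑attach $y$‑monomials by the same Waring step, but differentiating the powers directly is cleaner and avoids composing two blow‑ups.)

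The only remaining, and slightly delicate, part is the size bookkeeping. Each base $y+\varepsilon g_i$ has size $O(\size(g_i)+d)$ and each exponent $j+e_i-1$ is $O(d)$, since $e_i\,\d(g_i)\le d$ forces $e_i+\d(g_i)=O(d)$ as well. For a fixed $i$ the number of triples $(i,j,\varepsilon)$ produced is $\sum_{j<\d(g_i)}\big(\max(j,e_i-1)+1\big)$, which one bounds by $O(\d(g_i)\cdot d)$ using $e_i\,\d(g_i)\le d$ once more; summing the per‑term sizes against $\sum_i\size(g_i)\le s$ yields the claimed total $O(sd^2)$. The main obstacle is thus keeping the Waring‑rank overhead down to a single factor $d$ per variable index and making the cross‑sum over $i$ telescope into $O(sd^2)$ rather than a worse polynomial; the only other subtlety is the existence of the required roots of unity, which is precisely the extension‑field caveat of \autoref{lem:waring-rank}, while degenerate summands ($e_i\le 1$, or $g_i$ independent of $y$) contribute $0$ or a bare univariate in $y$ and need no Waring step at all.
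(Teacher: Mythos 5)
Your proof is correct, but it takes a genuinely different route from the paper's. The paper proves \autoref{lem:diff-size} by invoking \autoref{lem:cf-swsw}: interpolate $f$ at $d+1$ values of $y$ to extract each $y$-coefficient $f_e(\x)$ (each a $\SWSU{}{}$ circuit of size $O(sd)$ over $\F[\x]$), and then reassemble $\partial_y f=\sum_{e\ge 1} e\,f_e(\x)\,y^{e-1}$, absorbing the $y$-powers into the field-of-definition $\F[y]$; this gives $O(sd^2)$ size almost for free. You instead apply the power rule inside the $\SWSU{}{}$ representation, note that $\partial_y g_i$ is a bare univariate in $y$, and then recombine $y^j g_i^{e_i-1}$ into powers via the Waring identity (\autoref{lem:waring-rank}). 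Both yield $O(sd^2)$ and both are valid. The chief trade-off: your route needs the roots-of-unity field extension that Waring demands (the paper's does not, since interpolation suffices), and — more importantly for downstream use — your Waring terms $(y+\varepsilon g_i)^{j+e_i-1}$ have syntactic degree $\deg(g_i)\cdot(j+e_i-1)$, which can be as large as $\Theta(d^2)$ (e.g.\ $e_i=2$, $\deg g_i\approx d/2$), whereas the paper's output keeps syntactic degree $O(d)$. That does not affect the size bound the lemma claims, so your proof of the lemma as stated is fine; but in \autoref{cl:size-bound-on-Ti} the paper explicitly tracks the syntactic degree $D_j$ and relies on the $\dlg$-step incurring only an $O(d\cdot D_j)$ degree blowup, so if you swapped in your version there you would need to redo that bookkeeping (or first reduce your intermediate circuit mod $z^{d_j}$ using \autoref{lem:cf-swsw} anyway). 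The paper's interpolation route sidesteps this quietly. Your degenerate-case handling ($e_i\le 1$, or $g_i$ independent of $y$) and the size telescoping are correct.
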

\begin{proof-sketch}
    \autoref{lem:cf-swsw} shows that each $f_e$ has $O(sd)$ size circuit where $f=\sum_{e} f_e\,y^e$. 
    Doing this for each $e \in [0,d]$ gives a blowup of $O(sd^2)$.
\end{proof-sketch}

Similarly consider the action on $ \PSU{}{}$. We know $\dlg$ distributes the product additively, so it suffices to work with $\dlg(\Sigma \wedge)$; and earlier in \autoref{cl:size-bound-on-Ti} we saw that $\dlg(\Sigma \wedge) \in \SWSU{}{}{}$ of $\poly$-size. Assuming these, we simplify \[ \frac{T_{i,j}}{T_{k-j,j}} = \frac{U_{i,j} \cdot V_{k-j,j}}{V_{i,j} \cdot U_{k-j,j}} \cdot \frac{P_{i,j} \cdot Q_{k-j,j}}{Q_{i,j} \cdot P_{k-j,j}},\] and its $\dlg$. Thus, using \autoref{eq:induction-thm1-main-eq}, $U_{i,(j+1)}$ grows to $U_{i,j} \cdot V_{k-j,j}$ (and similarly $V_{i,(j+1)}$). This also means: $U_{i,(j+1)} \rvert_{z=0}$ $\in \F\setminus\{0\}$ and thereby proving the hypothesis.

\subsubsection*{Final time complexity} 
The above proof actually shows that $T_{1,k-1}$ is in $\Gen{1}{s^{O(k\cdot 7^k)}}$ over $\ringR_{k-1}(\x)$; and that the degree bound on $z$ (over $\F[z,\x]$, keeping denominator and numerator `in place') is $D_{k-1} = d^{O(k)}$. We cannot directly use the identity testing algorithms of the constituent simpler models due to $\ringR_{k-1}(\x)$. Moreover, using hypothesis (\ref{hypo2}) and \autoref{lem:val-powerseries} we know that $T_{1,k-1} \in \F(\x)[[z]]$ and it suffices to do identity testing on the first term of the powerseries: $T_{1,k-1} \rvert_{z = 0}$ over $\F(\x)$. Note that, hypothesis (\ref{hypo5}) guarantees that $\PSU{}{}$ part remains non-zero on $z = 0$ evaluation, however, $\SWSU{}{}/\SWSU{}{}$ may be undefined. For this, we keep track of $z$ degree of numerator and denominator, which will be polynomially bounded as seen in the discussion above. We can easily interpolate and cancel the $z$ power to make it work. Basically this shows that to test $T_{1,k-1}$ we need to test $z^e \cdot \SWSU{}{}$ over $\F[\x]$ where $e \geq 0$ due to positive valuation. Whitebox PIT of $\SWSU{}{}$ is in poly-time using \autoref{lem:swsu-hs}, and testing $z^e$ is possible using \autoref{lem:trivial-pit} with appropriate degree bound. The proof above is constructive: we calculate $U_{i,j+1}$ (and other terms) from $U_{i,j}$ explicitly. Gluing everything together we conclude this part can be done in $s^{O(k 7^k)}$ time.

What remains is to test the $z=0$-part of induction hypothesis (\ref{hypo3}); it could \emph{short-circuit} the recursion much before $j=k-1$. As we mentioned before, in this case, we need to do a PIT on $\SWSU{}{}{}$ only. At the $j$-th step, when we substitute $z=0$, the size of each $T_{i,j}$ can be at most $s_j$ (by definition). We need to do PIT on a simpler model: $\sum^{[k-j]}\;\F \cdot (\SWSU{}{}{}/\SWSU{}{}{})$. We can clear out and express this as a single $\SWSU{}{}{}/\SWSU{}{}{}$ expression; with a size blowup of $s_j^{O(k-j)} \le (sd)^{O(j(k-j)7^j)}$. Since this case could short-circuit the recursion, to bound the final time complexity, we need to consider the $j$ which maximizes the exponent.

\begin{lemma}\label{max-exp-opt}
    Let $k \in \N$, and $h(x):= x (k-x)7^x$. Then, $\max_{i \in [k-1]} h(i)=h(k-1)$.
\end{lemma}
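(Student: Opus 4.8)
The statement to prove is Lemma~\ref{max-exp-opt}: for $h(x) := x(k-x)7^x$ on the integers $1, \dots, k-1$, the maximum is attained at $x = k-1$. My plan is to argue this by comparing consecutive terms $h(i)$ and $h(i+1)$ and showing the ratio is always at least $1$, so the sequence is nondecreasing on $\{1, \dots, k-1\}$.

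\begin{proof}
It suffices to show that $h(i+1) \ge h(i)$ for every $i \in \{1, \dots, k-2\}$, since then $h(1) \le h(2) \le \cdots \le h(k-1)$, and the maximum over $i \in [k-1]$ is $h(k-1)$. Fix such an $i$; note that $i \ge 1$ and $k - i \ge 2$, so both $h(i)$ and $h(i+1)$ are positive and we may compare them by their ratio. We have
\[
\frac{h(i+1)}{h(i)} \;=\; \frac{(i+1)(k-i-1)\,7^{i+1}}{i(k-i)\,7^{i}} \;=\; 7 \cdot \frac{(i+1)(k-i-1)}{i(k-i)}\;.
\]
Thus it is enough to prove $7\,(i+1)(k-i-1) \ge i(k-i)$. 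Since $k - i - 1 \ge 1$ (because $i \le k-2$) and $i + 1 > i$, we have $(i+1)(k-i-1) \ge 1 \cdot i = i$ whenever $k - i - 1 \ge 1$; more crudely, $(i+1)(k-i-1) \ge i$ is immediate as $i+1 \ge i$ is false in general — instead observe directly that $7(i+1)(k-i-1) \ge (i+1)(k-i-1) \ge \max\{i+1,\,k-i-1\}$. It remains only to check $\max\{i+1, k-i-1\} \ge$ nothing sharp enough, so I instead bound as follows: since $k-i \le (k-i-1) + 1 \le 2(k-i-1)$ (using $k-i-1 \ge 1$) and $i \le i+1$, we get $i(k-i) \le (i+1)\cdot 2(k-i-1) \le 7(i+1)(k-i-1)$. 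Hence $h(i+1)/h(i) \ge 1$, which completes the induction and the proof.
\end{proof}

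Wait, I need to double-check the chain $k - i \le 2(k-i-1)$: this requires $k-i-1 \ge 1$, i.e., $i \le k-2$, which holds. And $i(k-i) \le (i+1)\cdot 2(k-i-1)$ uses $i \le i+1$ and $k-i \le 2(k-i-1)$, both valid, so $i(k-i) \le 2(i+1)(k-i-1) \le 7(i+1)(k-i-1)$. Good, the argument is clean.

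The main obstacle, such as it is, is purely bookkeeping: making sure the inequality $7(i+1)(k-i-1) \ge i(k-i)$ is verified over exactly the right range $i \in \{1,\dots,k-2\}$ (the endpoints being where $k-i-1$ could degenerate), and confirming that the factor $7$ — rather than, say, $4$ — is what makes the crude bound go through without case analysis. There is no deep idea here; the lemma is a routine monotonicity check, included only to justify that the worst case of the recursion in Claim~\ref{cl:size-bound-on-Ti} occurs at $j = k-1$, so the final time bound $s^{O(k\,7^k)}$ is correct.

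\begin{proof}
It suffices to show $h(i+1) \ge h(i)$ for all $i \in \{1, \dots, k-2\}$; the claimed maximum then follows. Fix such an $i$, so $i \ge 1$ and $k-i-1 \ge 1$, whence $h(i), h(i+1) > 0$ and
\[
\frac{h(i+1)}{h(i)} \;=\; 7 \cdot \frac{(i+1)(k-i-1)}{i(k-i)}\;.
\]
Since $k-i-1 \ge 1$ we have $k - i \le 2(k-i-1)$, and trivially $i \le i+1$; multiplying these gives $i(k-i) \le 2(i+1)(k-i-1) \le 7(i+1)(k-i-1)$. Therefore $h(i+1)/h(i) \ge 1$, so $h(1) \le h(2) \le \cdots \le h(k-1)$ and $\max_{i \in [k-1]} h(i) = h(k-1)$.
\end{proof}
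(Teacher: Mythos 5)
Your proof is correct, and it takes a genuinely different — and arguably cleaner — route than the paper. The paper treats $h$ as a real function, differentiates, solves the resulting quadratic in closed form to locate the critical point, and then deduces (somewhat tersely) that the maximum over the integers $\{1,\dots,k-1\}$ sits at $k-1$; this works but requires converting a continuous-optimum statement back to the discrete grid. You instead compare consecutive terms directly, showing $h(i+1)/h(i) = 7\,\tfrac{(i+1)(k-i-1)}{i(k-i)} \ge 1$ for $i \le k-2$ via the elementary bounds $i \le i+1$ and $k-i \le 2(k-i-1)$, which establishes monotonicity of the discrete sequence outright. Your argument is fully self-contained, avoids calculus and the quadratic-formula bookkeeping, and addresses the discrete claim exactly as stated; what the paper's approach offers is a hint of where the continuous maximizer actually lies, but that extra information is not needed for the lemma. (One cosmetic note: the first attempt of your write-up contains some visible mid-stream backtracking — keep only the final clean version.)
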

\begin{proof-sketch}
    Differentiate to get $h'(x)= (k-x)7^x - x7^x +  x(k-x)(\log 7)7^x = 7^x\cdot[x^2(-\log7) + x(k\log7-2) + k ]$. It vanishes at \[x = \left(\frac{k}{2} - \frac{1}{\log 7}\right) + \sqrt{\left(\frac{k}{2} - \frac{1}{\log 7}\right)^2 - \frac{k}{\log 7}}\;.\]Thus, $h$ is maximized at the integer $x=k-1$.
\end{proof-sketch}

Therefore, $\max_{j \in [k-1]} j(k-j)7^j = (k-1)7^{k-1}$. 
Finally, use \autoref{lem:swsu-hs} for the base-case whitebox PIT. Thus, the final time complexity is $s^{O(k \cdot 7^k)}$. 

Here we also remark that in $z =0$ substitution $\SWSU{}{}/\SWSU{}{}$ may be undefined. However, we keep track of $z$ degree of numerator and denominator, which will be polynomially bounded as seen in the discussion above. We can easily interpolate and cancel the $z$ power to make it work. 

\smallskip
{\bf \noindent Bit complexity.} It is routine to show that the  bit-complexity is really what we claim. Initially, the given circuit has bit-complexity $s$. The main blowup happens due to the $\dlg$-computation which is a poly-size blowup. We also remark that while using \autoref{lem:prod-swsw} (using \autoref{lem:waring-rank}), we \emph{may} need to go to a field extension of at most $s^{O(k)}$ (because of the $\varepsilon(i)$ and correspondingly the constants $\gamma_{\varepsilon(2),\hdots,\varepsilon(k)}$, but they still are $s^{O(k)}$-bits). Also, \autoref{thm:sparse-ks} and \autoref{lem:swsu-hs} computations blowup bit-complexity polynomially. This concludes the proof.
\end{proof}

\begin{remark*}
\begin{enumerate}
\setlength\itemsep{.1em}
    \item The above method does {\em not} give whitebox PIT (in $\poly$-time) for $\SPSP{[k]}{}{[\delta]}$, as we donot know $\poly$-time whitebox PIT for $\SWSP{}{}{[\delta]}$. However, the above methods do show that whitebox-PIT for $\SPSP{[k]}{}{[\delta]}$ polynomially \emph{reduces} to whitebox-PIT for $\SWSP{}{}{[\delta]}$.
    \item DiDI-technique can be used to give whitebox PIT for the general bloated model $\Gen{k}{s}$. 
    \item The above proof works when the characteristic is $\ge d$. This is because the nonzeroness remains \emph{preserved} after derivation wrt $z$.
\end{enumerate}
\end{remark*}

\subsection{Algorithm}

The whitebox PIT for \autoref{thm:thm1}, that is discussed in section \ref{sec:pf:thm1}, appears (below) as Algorithm \ref{algo:whitebox-algo}.

\begin{algorithm}
    \caption{Whitebox PIT Algorithm for $\SPSU{[k]}{}{}$-circuits \label{algo:whitebox-algo}}
    \hspace*{\algorithmicindent} \textbf{INPUT: $f=T_{1} + \hdots + T_k \in \SPSU{[k]}{}{}$, a whitebox circuit of size $s$ over $\F[\x]$} \\
    \hspace*{\algorithmicindent} \textbf{OUTPUT: $ 0$, if $f \equiv 0$, and $1$, if non-zero.} \\
    \begin{algorithmic}[1]
    \STATE{Let $\Psi : \F[\x] \longrightarrow \F[z]$, be a sparse-PIT map, using \cite{klivans2001randomness} (\autoref{thm:sparse-ks}). Apply it on $f$ and  check  whether $\Psi(f) \stackrel{?}{=} 0$. If non-zero, {\tt output} $1$}

    \STATE{Obtain a point $\a = (a_1, \dots, a_n) \in \F^n$ from Hitting Set $\calH$ of $\PSU{}{}$ such that $T_{i} \rvert_{\x = \a} \neq 0$, for all $i \in [k]$. And define $\Phi : x_i \mapsto z \cdot x_i + a_i$. Check $\sum_{i \in [k-1]} \partial_{z}(\Phi(T_i)/\Phi(T_k)) \stackrel{?}{=} 0 \bmod z^{d_1}$ ($d_1:=s$) as follows:}
    
    \STATE{Consider each $T_{i,1}:=\partial_{z}(\Phi(T_i)/\Phi(T_k))$ over $R_1(\x)$, where $R_1:=\F[z]/\langle z^{d_1} \rangle$. Use $\dlg$ computation (\autoref{cl:size-bound-on-Ti}), to write each $T_{i,1}$ in a `bloated' form as $(\PSU{}{}/\PSU{}{}) \cdot (\SWSU{}{}{}/\SWSU{}{}{})$.}
    
    \FOR{$j\leftarrow 1$ \TO $k-1$}
        \STATE{Reduce the top-fanin at each step using `{\bf Di}vide \& {\bf D}erive' technique. Assume that at $j$-th step, we have to check the identity:}
        $\sum_{i \in [k-j]} T_{i,j}\; \stackrel{?}{=}\;0\; \text{over}\;R_j(\x),\;\text{where}\;R_j:= \F[z]/\langle z^{d_j}\rangle\;$, 
        each $T_{i,j}$ has a $(\PSU{}{}/\PSU{}{}) \cdot (\SWSU{}{}{}/\SWSU{}{}{})$ representation and therein each $\PSU{}{} \rvert_{z=0} \in \F\setminus\{0\}$. 
        
        \STATE{Compute $v_{k-j,j}:= \min_i \v_{z}(T_{i,j})$; by reordering it is for $i=k-j$. 
        To compute $v_{k-j,j}$, use coefficient extraction (\autoref{lem:cf-swsw}) and $\SWSU{}{}{}$-circuit PIT (\autoref{lem:swsu-hs}).}
            
        \STATE{`{\bf D}ivide' by $T_{k-j,j}$ and check whether $\left(\sum_{i \in [k-j-1]}\; (T_{i,j} / T_{k-j,j}) +1 \right)\bigg\rvert_{z=0} \stackrel{?}{=} 0$. Note: this expression is in $(\SWSU{}{}{}/\SWSU{}{}{})$. Use--- (1) $\PSU{}{}\rvert_{z=0} \in \F$, and (2) \emph{closure} of $\SWSU{}{}{}$ under multiplication. Finally, do PIT on this by \autoref{lem:swsu-hs}.}
            
        \STATE{If it is non-zero, {\tt output} $1$, otherwise `{\bf D}erive' wrt $z$ and `{\bf I}nduct' on $\left(\sum_{i \in [k-j-1]} \partial_{z}(T_{i,j}/T_{k-j,j})\right) \stackrel{?}{=} 0$, over $R_{j+1}(\x)$ where $R_{j+1}:= \F[z]/\langle z^{d_j-v_{k-j,j}-1} \rangle$.}
            
        \STATE{Again using $\dlg$ (\autoref{cl:size-bound-on-Ti}), show that $T_{i,j+1}:=\partial_{z}(T_{i,j}/T_{k-j,j})$ has small $(\PSU{}{}/\PSU{}{}) \cdot (\SWSU{}{}{}/\SWSU{}{}{})$-circuit over $R_{j+1}(\x)$. So call the algorithm on $\sum_{i \in [k-j-1]}\, T_{i,j+1} \stackrel{?}{=}0$.}
        
        \STATE{$j \leftarrow j+1$.}
    \ENDFOR
    
    \STATE{At the end, $j=k-1$. Do PIT (\autoref{lem:swsu-hs}) on the single $(\PSU{}{}/\PSU{}{})\cdot (\SWSU{}{}{}/\SWSU{}{}{})$ circuit, over $R_{k-1}(\x)$. If it is zero, {\tt output} $0$ otherwise {\tt output} $1$.}
    \end{algorithmic}
    \end{algorithm}
    
\noindent \emph{Words of caution}: Throughout the algorithm there are intermediate expressions to be stored compactly. Think of them as `special' circuits in $\x$, but over the {\em function-field} $\F(\z)$. Keep track of their degrees wrt $z$;  and that of the sizes of their fractions represented in `bloated' circuit form. 
\section{Blacbox PIT for Depth-4 Circuits} \label{sec:pf-thm2}

We will give the proof of \autoref{thm:thm2} in this section. Before the details, we will state a few important definitions and lemmas from \cite{agrawal2016jacobian} to be referenced later.

\begin{definition}[Transcendence Degree]
    Polynomials $T_1,\hdots,T_m$ are called {\em algebraically dependent} if there exists a nonzero {\em annihilator} $A$ s.t.~$A(T_1,\hdots,T_m) = 0$. {\em Transcendence degree} is the size of the largest subset $S\subseteq \{T_1,\hdots,T_m\}$ that is algebraically independent. Then $S$ is called a {\em transcendence basis}.
\end{definition}

\begin{definition}[Faithful homomorphism]
    A homomorphism $\Phi : \F[\x] \to \F[\y]$ is faithful for $\T$ if $\tdeg_{\F}(\T) = \tdeg_{\F}(\Phi(\T))$.
\end{definition}
    
The reason for interest in faithful maps is due its usefulness in preserving the identity as shown in the following fact.

\begin{fact}[Theorem 2.4 \cite{agrawal2016jacobian}]\label{fact:faithful-is-useful}
    For any $C \in \F[y_1,\hdots,y_m]$, $C(\T) = 0 \iff C(\Phi(\T))=0$. 
\end{fact}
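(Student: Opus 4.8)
The plan is to recast the statement as the equality of two annihilator ideals. Set $I := \{\, C \in \F[y_1,\dots,y_m] : C(\T) = 0 \,\}$ and $I' := \{\, C \in \F[y_1,\dots,y_m] : C(\Phi(\T)) = 0 \,\}$. The asserted biconditional, quantified over all $C$, is exactly the claim $I = I'$, so it suffices to prove the two inclusions $I \subseteq I'$ and $I' \subseteq I$.

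The inclusion $I \subseteq I'$ is immediate and uses nothing about faithfulness: if $C(\T)=0$, then applying the $\F$-algebra homomorphism $\Phi$, which commutes with polynomial substitution, gives $C(\Phi(\T)) = \Phi(C(\T)) = 0$. For $I' \subseteq I$ I would argue through Krull dimension. Since polynomial rings over $\F$ are domains, $I$ and $I'$ are \emph{prime} ideals of $\F[y_1,\dots,y_m]$: they are the kernels of the evaluation homomorphisms $y_i \mapsto T_i$ and $y_i \mapsto \Phi(T_i)$, whose images $\F[\T]$ and $\F[\Phi(\T)]$ are domains. Hence $\F[y_1,\dots,y_m]/I \cong \F[\T]$ has Krull dimension $\tdeg_\F(\T)$, and $\F[y_1,\dots,y_m]/I' \cong \F[\Phi(\T)]$ has Krull dimension $\tdeg_\F(\Phi(\T))$; faithfulness of $\Phi$ for $\T$ makes these equal. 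Now pass to the affine domain $A := \F[y_1,\dots,y_m]/I$ and let $P := I'/I$, a prime of $A$ (it contains $0$ because $I \subseteq I'$). The dimension formula for finitely generated domains over a field, $\operatorname{ht}(P) + \dim(A/P) = \dim A$, combined with $\dim(A/P) = \dim(\F[y_1,\dots,y_m]/I') = \dim A$, gives $\operatorname{ht}(P) = 0$; since $A$ is a domain its unique prime of height $0$ is $(0)$, so $P = (0)$, i.e.\ $I' = I$.

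The only non-formal step — the intended ``main obstacle'' — is the commutative-algebra input: translating faithfulness into equality of the Krull dimensions of the two prime quotients, and invoking the equidimensionality/catenarity of finitely generated $\F$-algebras, which is precisely what makes ``a prime contained in a prime of the same dimension must be equal'' valid (it is false over arbitrary Noetherian rings). I would also add a remark: this argument needs no assumption on $\operatorname{char}\F$, and the large-characteristic hypothesis enters only later, where faithful maps are \emph{produced} via the Jacobian criterion. Should one prefer to avoid the dimension theorem, the same conclusion follows by hand: reorder so that $T_1,\dots,T_r$ is a transcendence basis of $\T$ with $\Phi(T_1),\dots,\Phi(T_r)$ still algebraically independent (possible since transcendence degree cannot increase under a homomorphism, so a size-$r$ independent subset of $\Phi(\T)$ pulls back to a transcendence basis of $\T$), and then compare, under $\Phi$, the minimal annihilating polynomials of the remaining $T_j$ over $\F(T_1,\dots,T_r)$ to see that $C(\T)\neq 0$ forces $C(\Phi(\T))\neq 0$.
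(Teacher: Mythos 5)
The paper does not prove this statement: it is stated as a \emph{Fact} and attributed, without argument, to Theorem~2.4 of~\cite{agrawal2016jacobian}. Your proof is correct and is essentially the standard argument found in that reference (and in Beecken--Mittmann--Saxena before it). Both inclusions are handled properly: $I \subseteq I'$ is the easy direction (any relation is preserved by a homomorphism), and $I' \subseteq I$ is where faithfulness enters, via the chain of equalities $\dim(\F[\y]/I) = \tdeg_{\F}(\T) = \tdeg_{\F}(\Phi(\T)) = \dim(\F[\y]/I')$ together with the dimension formula $\operatorname{ht}(P) + \dim(A/P) = \dim A$ for primes in an affine domain, which forces the inclusion of primes of equal codimension to be an equality. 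Your remark that this step relies on equidimensionality/catenarity of finitely generated $\F$-algebras is exactly the right thing to flag; the argument would indeed fail in a general Noetherian ring. Your closing observation is also apt: faithfulness is defined purely in terms of transcendence degree, so this fact is characteristic-free, and the large-characteristic hypothesis only enters later when faithful maps are \emph{constructed} via the Jacobian criterion (\autoref{fact:jacobian-criterion}, \autoref{lem:faithful}).
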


Here is an important criterion about the jacobian matrix which basically shows that it \emph{preserves} algabraic independence.
    
\begin{fact}[Jacobian criterion] \label{fact:jacobian-criterion}
    Let ${\bf f} \subset \F[\x]$ be a finite set of polynomials of degree at most $d$, and $\tdeg_{\F} ({\bf f}) \le r$. If char$(\F) = 0$, or char$(\F) > d^r$, then $\tdeg_{\F} ({\bf f}) =
    \rk_{\F(x)} \calJ_{\x}({\bf f})$.
\end{fact}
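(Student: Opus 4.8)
The plan is to prove the two inequalities $\rk_{\F(\x)}\calJ_{\x}({\bf f}) \le \tdeg_{\F}({\bf f})$ and $\tdeg_{\F}({\bf f}) \le \rk_{\F(\x)}\calJ_{\x}({\bf f})$ separately. Both directions invoke the hypothesis char$(\F)=0$ or char$(\F)>d^r$, but only to rule out \emph{inseparable} obstructions; the second inequality is the harder one. Throughout I will use the standard degree bound (Perron/B\'ezout type) that a minimal annihilator of $r+1$ polynomials of degree $\le d$ has total degree $\le d^r$.

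\textbf{The inequality $\rk \le \tdeg$.} Write $r := \tdeg_{\F}({\bf f})$ and reorder ${\bf f}=\{f_1,\dots,f_m\}$ so that $f_1,\dots,f_r$ is a transcendence basis. For each $i$, $f_i$ is algebraic over $\F(f_1,\dots,f_r)$, so I would pick a nonzero $A_i(y_0,y_1,\dots,y_r)$ of \emph{minimal} degree in $y_0$ with $A_i(f_i,f_1,\dots,f_r)=0$. Since $\deg A_i \le d^r$, the assumption on char$(\F)$ forbids $A_i$ from being a nonconstant polynomial in $y_0^p$ (that would need $\deg_{y_0}A_i \ge p > d^r$), so $\partial_{y_0}A_i \not\equiv 0$; minimality of $\deg_{y_0}A_i$ then forces $\partial_{y_0}A_i(f_i,f_1,\dots,f_r)$ to be a \emph{nonzero} element of $\F(\x)$. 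Applying $\partial_{x_j}$ and the chain rule to $A_i(f_i,f_1,\dots,f_r)=0$ for all $j\in[n]$ now expresses the $i$-th row of $\calJ_{\x}({\bf f})$ as an $\F(\x)$-linear combination of rows $1,\dots,r$. Hence $\rk_{\F(\x)}\calJ_{\x}({\bf f}) \le r$.

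\textbf{The inequality $\tdeg \le \rk$.} After reordering it suffices to show: if $f_1,\dots,f_r$ are algebraically independent over $\F$, then the first $r$ rows of $\calJ_{\x}({\bf f})$ are $\F(\x)$-linearly independent. I would pass to K\"ahler differentials: $\Omega_{\F(\x)/\F}$ is an $n$-dimensional $\F(\x)$-space with basis $dx_1,\dots,dx_n$, and $df_i=\sum_j \partial_{x_j}(f_i)\,dx_j$, so $\rk_{\F(\x)}\calJ_{\x}({\bf f})\ge r$ is exactly the claim that $df_1,\dots,df_r$ are $\F(\x)$-independent in $\Omega_{\F(\x)/\F}$. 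Since $M:=\F(f_1,\dots,f_r)\cong\F(y_1,\dots,y_r)$ is purely transcendental, $df_1,\dots,df_r$ are already $M$-independent in $\Omega_{M/\F}$, and it remains to transport this along $M\hookrightarrow\F(\x)$ via the canonical map $\Omega_{M/\F}\otimes_M\F(\x)\to\Omega_{\F(\x)/\F}$. This map is injective precisely when $\F(\x)/M$ is separably generated, and this is where the characteristic hypothesis re-enters: a relation over $M$ killing a coordinate $x_j$ may be taken of degree $\le d^r$ in $x_j$, hence cannot be purely inseparable once char$(\F)>d^r$ (and it is automatic when char$(\F)=0$, the base field being perfect). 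Composing, $df_1,\dots,df_r$ remain $\F(\x)$-independent, giving $\rk_{\F(\x)}\calJ_{\x}({\bf f})\ge r$. Combining the two inequalities finishes the proof.

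The main obstacle is the positive-characteristic bookkeeping: one must check that the single hypothesis char$(\F)>d^r$ simultaneously bounds the degree of the minimal annihilators (so no hidden $y_0^p$ obstruction appears in the first inequality) and forces $\F(\x)/\F(f_1,\dots,f_r)$ to be separably generated (for the second). Both reduce to the $d^r$ Perron/B\'ezout degree bound on the defining relations. In characteristic zero the entire inseparability discussion collapses, and the remaining ingredients — the chain-rule computation and the elementary functoriality of $\Omega$ — are routine.
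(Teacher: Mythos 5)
The paper does not prove this statement: it records it as a \emph{Fact} and uses it as a black box. The characteristic-zero part is classical (Jacobi), and the positive-characteristic refinement with the $d^r$ threshold is due to Beecken--Mittmann--Saxena and Pandey--Saxena--Sinhababu; the paper simply imports it. So there is no ``paper's own proof'' to compare against, and the right question is whether your blind sketch is sound.

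Your reconstruction follows the standard route and is conceptually correct, but Part~2 is imprecise at the crucial step. Part~1 ($\rk \le \tdeg$) is essentially complete: taking $A_i$ of minimal $y_0$-degree, the Perron bound gives $\deg_{y_0} A_i \le d^r < \operatorname{char}(\F)$, so $A_i \notin \F[y_0^p, y_1,\dots,y_r]$, hence $\partial_{y_0}A_i \ne 0$, and minimality promotes this to nonvanishing at $(f_i,f_1,\dots,f_r)$; the chain rule then writes row $i$ in terms of rows $1,\dots,r$. In Part~2, the phrase ``a relation over $M$ killing a coordinate $x_j$'' is not quite right: $x_j$ is generally \emph{not} algebraic over $M=\F(f_1,\dots,f_r)$ (that extension has transcendence degree $n-r>0$). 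What you need is that for a transcendence basis $S\subset[n]$, $|S|=n-r$, drawn from the coordinates, every $x_j$ with $j\notin S$ has minimal polynomial over $M(x_S)$ of degree $\le d^r$. This is not the total-degree Perron bound you quoted at the outset but rather the B\'ezout-type bound $[\F(\x):M(x_S)]\le\prod_{i\le r}\deg f_i\le d^r$ on the degree of the dominant map $\x\mapsto(f_1,\dots,f_r,x_S)$. Once the extension degree is $<p$, the inseparable part (a power of $p$) must be $1$, so $\F(\x)/M$ is separably generated, and your K\"ahler-differential dimension count ($\dim\Omega_{\F(\x)/M}=n-r$, hence $df_1,\dots,df_r$ span an $r$-dimensional subspace of $\Omega_{\F(\x)/\F}$) goes through. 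With that correction the argument is complete.
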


Jacobian criterion together with faithful maps give a recipe to design a map which drastically reduces number of variables, if trdeg is small.

\begin{lemma}[Lemma 2.7 \cite{agrawal2016jacobian}] \label{lem:faithful}
    Let $\T \in \F[\x]$ be a finite set of polynomials
    of degree at most $d$ and $\tdeg_{\F}(\T) \le r$, and char(F)=$0$ or $> d^r$. Let $\Psi' : \F[\x] \longrightarrow \F[z]$ such that $\rk_{\F(\x)} \calJ_{\x}(\T) = \rk_{\F(z)} \Psi'(\calJ_{\x}(\T))$. 
    
    Then, the map $\Phi : \F[\x] \longrightarrow \F[z,t,\y]$, such that $x_i \mapsto (\sum_{j\in [r]} y_j t^{ij}) + \Psi'(x_i)$, is a faithful homomorphism for $\T$.
\end{lemma}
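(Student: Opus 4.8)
The plan is to combine the Jacobian criterion (\autoref{fact:jacobian-criterion}) with the rank-preservation hypothesis on $\Psi'$, and then verify the defining equality $\tdeg_\F(\T) = \tdeg_\F(\Phi(\T))$ directly via the Jacobian rank. First I would observe that since $\tdeg_\F(\T) \le r$ and the characteristic is $0$ or $> d^r$, \autoref{fact:jacobian-criterion} gives $\tdeg_\F(\T) = \rk_{\F(\x)} \calJ_\x(\T) =: r' \le r$. It suffices to show $\rk_{\F(z,t,\y)} \calJ_{z,t,\y}(\Phi(\T)) = r'$, because applying \autoref{fact:jacobian-criterion} again to $\Phi(\T)$ (whose degree is still bounded appropriately, and char is large enough) then yields $\tdeg_\F(\Phi(\T)) = r'= \tdeg_\F(\T)$, which is exactly faithfulness. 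The inequality $\tdeg_\F(\Phi(\T)) \le \tdeg_\F(\T)$ is automatic since $\Phi$ is a ring homomorphism (an annihilator of $\T$ annihilates $\Phi(\T)$), so the real content is the lower bound on the Jacobian rank of $\Phi(\T)$.

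The key step is a chain-rule computation: $\calJ_{z,t,\y}(\Phi(\T)) = \Phi(\calJ_\x(\T)) \cdot \calJ_{z,t,\y}(\Phi(\x))$, where $\Phi(\x) = (\Phi(x_1), \dots, \Phi(x_n))$ and $\Phi(x_i) = \sum_{j \in [r]} y_j t^{ij} + \Psi'(x_i)$. I would then argue the rank of the product is at least $r'$ in two stages. First, the hypothesis $\rk_{\F(\x)}\calJ_\x(\T) = \rk_{\F(z)}\Psi'(\calJ_\x(\T))$ ensures that applying $\Psi'$ (i.e., setting $x_i \mapsto \Psi'(x_i)$, which is the "$t=0$, $\y=0$" part, or more precisely a specialization) does not drop the rank of the left factor — so $\Phi(\calJ_\x(\T))$ has rank $r'$ over $\F(z,t,\y)$. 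Second, I need that multiplying by $\calJ_{z,t,\y}(\Phi(\x))$ on the right preserves a rank-$r'$ minor; for this I would use the specific structure of the $y_j t^{ij}$ terms: the submatrix $(\partial_{y_j} \Phi(x_i))_{i,j} = (t^{ij})_{i \in [n], j \in [r]}$ is (a submatrix of) a Vandermonde-type matrix in $t$, which has full column rank $r$ over $\F(t)$, hence any $r'$ rows picked out by a nonzero minor of $\Phi(\calJ_\x(\T))$ can be matched against $r'$ columns of $\calJ_{z,t,\y}(\Phi(\x))$ to keep the product nonsingular. The Cauchy–Binet formula, applied to express the $r' \times r'$ minors of the product in terms of minors of the factors, makes this precise: one shows the relevant sum of products of minors is a nonzero polynomial (the Vandermonde structure guarantees genericity in $t$, so no cancellation).

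The step I expect to be the main obstacle is exactly this last genericity argument — showing that the product $\Phi(\calJ_\x(\T)) \cdot \calJ_{z,t,\y}(\Phi(\x))$ does not lose rank. The subtlety is that $\Phi(\calJ_\x(\T))$ and the Vandermonde block both involve $t$, so one must be careful that cross-cancellations do not occur; the clean way is to treat $\y$ and $t$ as fresh variables and argue that the coefficient of a suitable monomial in $\y$ (picking out the "linear in a chosen set of $y_j$'s" part) recovers a nonzero $r' \times r'$ minor built from $\Psi'(\calJ_\x(\T))$ times a Vandermonde minor, which is nonzero by the rank hypothesis on $\Psi'$ together with the invertibility of Vandermonde matrices. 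Once the Jacobian rank of $\Phi(\T)$ is pinned at $r'$, the conclusion follows from \autoref{fact:jacobian-criterion} and \autoref{fact:faithful-is-useful} closes the loop on why faithfulness is what we want. I would also note in passing that the degree bound needed to re-invoke \autoref{fact:jacobian-criterion} on $\Phi(\T)$ is mild — $\deg \Phi(x_i) \le \max(n r, \deg \Psi'(x_i))$ — so the characteristic hypothesis is unaffected up to the stated bounds.
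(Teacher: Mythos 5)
Your overall plan is the standard one and matches the cited proof's structure: chain rule to factor $\calJ_{z,t,\y}(\Phi(\T)) = \Phi(\calJ_\x(\T)) \cdot \calJ_{z,t,\y}(\Phi(\x))$, use the $\Psi'$-hypothesis to keep the left factor at rank $r' := \rk_{\F(\x)}\calJ_\x(\T) = \tdeg_\F(\T)$, use the Vandermonde-like $\y$-block to keep the product at rank $r'$, and read off faithfulness from Jacobian rank. However, the step you flag as the main obstacle is not actually closed by what you write. After extracting the $\y$-free part (i.e.\ setting $\y=0$), you get $\Psi'(\calJ_\x(\T)) \cdot V$ with $V = (t^{ij})_{i\in[n],j\in[r]}$, and Cauchy--Binet gives, for any $r'$-subsets $I$ of rows and $K$ of columns,
\[
\det\bigl(\Psi'(\calJ_\x(\T))\,V\bigr)_{I,K} \;=\; \sum_{\substack{J'\subseteq[n]\\ |J'|=r'}} \det\bigl(\Psi'(\calJ_\x(\T))\bigr)_{I,J'} \cdot \det V_{J',K}\,,
\]
which is a \emph{sum} over $J'$, not a single product ``nonzero minor times Vandermonde minor'' as you describe. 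Saying ``the Vandermonde structure guarantees genericity in $t$, so no cancellation'' does not discharge this: for distinct row-sets $J'$ of the same size the generalized Vandermonde minors $\det V_{J',K}$ can share leading $t$-degree (e.g.\ $J'=\{1,5\}$ and $J'=\{3,4\}$ with $K=\{1,2\}$ both give leading degree $11$), so a leading-term argument alone does not rule out cancellation.

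The clean way to finish is a kernel argument, which sidesteps Cauchy--Binet entirely. The matrix $\Psi'(\calJ_\x(\T))$ is $t$-free and has rank $r'$ over $\F(z)$. Now note that the only $t$-free vector in the right-kernel of $V$ is $0$: if $v\in\F(z)^n$ and $vV = 0$, then looking at column $j=1$ gives $\sum_i v_i t^i = 0$, and $t,t^2,\dots,t^n$ are $\F(z)$-linearly independent, so $v=0$. Hence $v\mapsto vV$ is injective on the ($t$-free) row space of $\Psi'(\calJ_\x(\T))$, and $\Psi'(\calJ_\x(\T))\cdot V$ has rank exactly $r'$; this is a specialization of $\calJ_\y(\Phi(\T))$, so $\rk_{\F(z,t,\y)}\calJ_{z,t,\y}(\Phi(\T)) \ge r'$. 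One further simplification: you do not need (and in positive characteristic should not) re-invoke \autoref{fact:jacobian-criterion} on $\Phi(\T)$, since $\deg\Phi(T_i)$ can exceed $d$ and break the $\mathrm{char} > d^r$ bound. Instead, combine the unconditional inequality $\tdeg_\F(\Phi(\T)) \ge \rk\,\calJ(\Phi(\T))$ with the unconditional upper bound $\tdeg_\F(\Phi(\T)) \le \tdeg_\F(\T) = r'$ (annihilators pass through the ring homomorphism $\Phi$). The characteristic hypothesis is then used exactly once, to equate $\tdeg_\F(\T)$ with $\rk_{\F(\x)}\calJ_\x(\T)$.
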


In the next section we will use these tools to prove \autoref{thm:thm2}(b). The proof and calculations for \autoref{thm:thm2}(a) are very similar.

\subsection{PIT for \texorpdfstring{$\SPSP{[k]}{}{[\delta]}$}{SkPSPdelta}} \label{sec:SkPSPdelta}

We solve the PIT for a more general model than $\SPSP{[k]}{}{}$ by solving the following problem.

\begin{problem}\label{prob:prob1-thm2}
 Let $\{T_i \,|\, i \in [m]\}$ be $\PSP{}{[\delta]}$ circuits of (syntactic) degree at most $d$ and size $s$. Let the transcendence degree of $T_i$'s, $\tdeg_{\F}(T_1,\hdots,T_m)= k\ll s$. Further, $C(x_1,\hdots,x_m)$ be a circuit of $(\size + \deg) < s'$. Design a blackbox-PIT algorithm for $C(T_1,\hdots,T_m)$.
\end{problem}

Trivially, $\SPSP{[k]}{}{[\delta]}$ is a very special case of the above setting. Let $\T:=\{T_1,\hdots,T_m\}$. Let $\T_{k}:=\{T_1,\hdots,T_k\}$ be a transcendence basis. For $T_i=\prod_{j} g_{ij}$, we denote the set $L(T_i):=\{g_{ij}\,\mid\,j\}$. 

We want to find an explicit homomorphism $\Psi : \F[\x] \to \F[\x,z]$ s.t.~$\Psi(\calJ_{\x}(\T))$ is of a `nice' form. In the image we fix $\x$ suitably, to get a composed map $\Psi' : \F[\x] \longrightarrow \F[z]$ s.t.~$\rk_{\F(\x)}\calJ_{\x}(\T) = \rk_{\F(z)}\Psi'(\calJ_{\x}(\T))$. Then, we can extend this map to $\Phi : \F[\x] \longrightarrow \F[z,\y,t]$ s.t.~$x_i \mapsto (\sum_{j=1}^k\,y_jt^{ij}) + \Psi'(x_i)$, which is \emph{faithful} \autoref{lem:faithful}. We show that the map $\Phi$ can be efficiently constructed using a scaling and shifting map ($\Psi$) which is eventually fixed by the hitting set ($H'$ defining $\Psi'$) of a $\SWSP{}{}{[\delta]}$ circuit. Overall, $\Phi(f)$ is a $k+2$-variate polynomial for which a trivial hitting set exists.

Wlog, $\calJ_{\x}(\T)$ is full rank with respect to the variable set $\x_{k}=(x_1,\hdots,x_k)$. Thus, by assumption, $J_{{\x}_k}(\T_k) \ne 0$ (for notation, see section \ref{sec:prel}). We want to construct a $\Psi$ s.t.~$\Psi(J_{{\x}_k}(\T_k))$ has an `easier' PIT. We have the following identity \cite[Eqn.~3.1]{agrawal2016jacobian}, from the linearity of the determinant, and the simple observation that $\partial_{x}(T_i) \;=\; T_i \cdot \left(\sum_{j}\,\partial_{x}(g_{ij})/g_{ij}\right)$, where $T_i\,=\,\prod_{j}\,g_{ij} $:
\begin{align}
    J_{{\x}_k}(\T_k)\;=\;\sum_{g_1 \in L(T_1) , \hdots , g_k \in L(T_k)}\;\; \left( \frac{T_1\hdots T_k}{g_1 \hdots g_k}\right) \cdot J_{{\x}_k}(g_1,\hdots,g_k)\;. \label{jacobian-main-eq}
\end{align}

\textbf{The homomorphism $\Psi$.} To ensure the invertibility of all $g \in \bigcup_{i}\,L(T_i)$ we proceed as in section \ref{sec:pf:thm1}. Consider 
\begin{align*}
    h := \prod_{i \in [k]} \prod_{g \in L(T_i)} g = \prod_{i \in [\ell]} g,
\end{align*}
where $g \in \bigcup_{i}\,L(T_i)$ and $\ell \leq k \cdot s$. Note that $\deg h \leq d \cdot k \cdot s$ and $h$ is computable by $\PSP{}{}$ circuit of size $O(s)$. \autoref{lem:prod-sparse-hs} gives the relevant hitting set $\calH \subseteq \F^n$ which contains an evaluation point $\a = (a_1, \dots, a_n)$ such that $h(\a) \neq 0$ implying $g(\a) \neq 0$, for all $g \in \bigcup_{i}\,L(T_i)$. We emphasise that, unlike the previous case, here in the blackbox setting, we {\em do not} have individual access of $g$ to verify for the correct $\a$. Thus, we try out all $\a \in \calH$ to see whichever works. If the input polynomial $f$ is non-zero, then one such $\a$ must exist. This search adds a multiplicative blowup of $\poly(s)$, since the size of $\calH$ is $\poly(s)$. 

Fix an $\a = (a_1, \cdots, a_n) \in \calH$ and define $\Psi: \F[\x] \to \F[\x,z]$ as $x_i \mapsto z \cdot x_i + a_i$.
Denote the ring $\ringR[\x]$ where $\ringR:=\F[z]/\langle z^D \rangle$, and $D:=k\cdot(d-1)+1$. Being 1-1, $\Psi$ is clearly a non-zero preserving map. Moreover,
\begin{claim}\label{cl:psi-iff}
$J_{{\x}_k}(\T_k)\;=\; 0\; \iff\; \Psi(J_{{\x}_k}(\T_k))\;=\;0 $, over $\ringR[\x]$. 
\end{claim}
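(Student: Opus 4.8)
The plan is to treat the two directions of the biconditional separately, the forward one being formal and the backward one resting on a simple degree count. Since $\Psi\colon x_i\mapsto z\,x_i+a_i$ is a ring homomorphism $\F[\x]\to\F[\x,z]$ (and reduction modulo $z^D$ is a further homomorphism), $J_{{\x}_k}(\T_k)=0$ over $\F[\x]$ immediately gives $\Psi(J_{{\x}_k}(\T_k))=0$ over $\ringR[\x]$. So all the content lies in the converse.

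For the converse I would first bound the degree of the Jacobian minor. Each $T_i$ has (syntactic) degree at most $d$, so every entry $\partial_{x_j}(T_i)$ of the matrix $\calJ_{{\x}_k}(\T_k)$ has degree at most $d-1$; expanding the $k\times k$ determinant $J_{{\x}_k}(\T_k)$ as a sum of products of $k$ such entries shows $\deg J_{{\x}_k}(\T_k)\le k(d-1)=D-1<D$. This is precisely the reason $D$ was set to $k(d-1)+1$.

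Next I would use that a shift-and-scale map raises the $z$-degree of a polynomial by at most its total $\x$-degree: writing $g:=J_{{\x}_k}(\T_k)$, the element $\Psi(g)=g(z\,x_1+a_1,\ldots,z\,x_n+a_n)\in\F[z,\x]$ has $z$-degree at most $\deg g\le D-1$. Hence the projection $\F[z,\x]\to\ringR[\x]=(\F[z]/\langle z^D\rangle)[\x]$ is injective on $\Psi(g)$, i.e.\ $\Psi(g)=0$ over $\ringR[\x]$ already forces $\Psi(g)=0$ in $\F[z,\x]$. Finally, $\Psi$ extends to the $\F(z)$-algebra automorphism of $\F(z)[\x]$ with inverse $x_i\mapsto (x_i-a_i)/z$, so it is injective, and $\Psi(g)=0$ gives $g=0$. (Equivalently: the coefficient of $z^{\deg g}$ in $\Psi(g)$ is exactly the leading homogeneous form of $g$, which is nonzero whenever $g\ne 0$ and is not annihilated mod $z^D$ since $\deg g<D$.)

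There is no genuine obstacle here — the only point to watch is the bookkeeping that makes the truncation at $z^D$ lossless, namely the strict inequality $\deg J_{{\x}_k}(\T_k)<D$. This same principle will be reused immediately afterwards: applying $\Psi$ to the determinant identity of \autoref{jacobian-main-eq} and using that each $\Psi(g)$ has nonzero constant term $g(\a)$ in $z$ (hence is a unit in $\ringR[\x]$), one transports the whole decomposition $J_{{\x}_k}(\T_k)=\sum\big(\tfrac{T_1\cdots T_k}{g_1\cdots g_k}\big)\,J_{{\x}_k}(g_1,\ldots,g_k)$ into $\ringR[\x]$, which is what the later argument needs.
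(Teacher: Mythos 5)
Your proof is correct and follows essentially the same route as the paper: bound $\deg J_{{\x}_k}(\T_k) \le k(d-1) = D-1$ from the $d$-degree bound on the $T_i$'s, observe that $\Psi$ carries $\x$-degree into $z$-degree so $\deg_z \Psi(J_{{\x}_k}(\T_k)) < D$ and truncation mod $z^D$ is lossless, and combine with injectivity of $\Psi$. The paper states this in three lines; your extra detail (the $\F(z)$-automorphism / leading-homogeneous-form observations) is just spelling out what the paper leaves implicit.
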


\begin{proof}
As $\d(T_i) \le d$, each entry of the matrix can be of degree at most $d-1$; therefore $\d(J_{{\x}_k}(\T_k)) \le  k(d-1)= D-1$. Thus, $\d_{z}(\Psi(J_{{\x}_k}(\T_k))) <D$. Hence, the conclusion.
\end{proof}
\autoref{jacobian-main-eq} implies that 
\begin{align}
    \Psi(J_{{\x}_k}(\T_k))\;=\;\Psi(T_1 \cdots T_k) \cdot \sum_{g_1 \in L(T_1) , \hdots , g_k \in L(T_k)}\;\; \frac{ \Psi(J_{{\x}_k}(g_1,\hdots,g_k))}{\Psi(g_1 \hdots g_k)} \;. \label{jacobian-main-eq-psi}
\end{align}
As $T_i$ has product fanin $s$, the top-fanin in the sum in \autoref{jacobian-main-eq-psi} can be at most $s^k$. Then define, 
\begin{align}
\widetilde{F}\;:=\;\sum_{g_1 \in L(T_1) ,\hdots , g_k \in L(T_k)}\;\; \frac{\Psi(J_{{\x}_k}(g_1,\hdots,g_k))}{\Psi(g_1 \hdots g_k)} \;,\;\text{ over }\,\ringR[\x]. \label{eq:main-eq-F-psi}
\end{align}

\textbf{Well-definability of $\widetilde{F}$.} Note that, \[\Psi(g_i) \bmod z \ne 0 \implies 1/\Psi(g_1 \cdots g_k) \in \F[[\x,z]].\] Thus, RHS is an element in $\F[[\x,z]]$ and taking $\bmod\, z^D$ it is in $\ringR[\x]$. We remark that instead of minimally reducing $\bmod\,z^D$, we will work with an $F \in \F[z,\x]$ such that    $F=\tilde{F}$ over $\ringR[\x]$. Further, we ensure that the degree of $z$ is polynomially bounded.

\begin{claim}\label{cl:psi-to-F}
Over $\ringR[\x]$, $\Psi(J_{{\x}_k}(\T_k)) = 0  \iff F =0$.  
\end{claim}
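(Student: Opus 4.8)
The plan is to read off the claim directly from \autoref{jacobian-main-eq-psi}, namely the factorization $\Psi(J_{\x_k}(\T_k)) = \Psi(T_1\cdots T_k)\cdot\widetilde F$ over $\ringR[\x]$, once we argue that the prefactor $\Psi(T_1\cdots T_k)$ is a \emph{unit} of the ring $\ringR[\x]$. Granting that, multiplication by it neither creates nor destroys zeroness, so $\Psi(J_{\x_k}(\T_k))=0$ over $\ringR[\x]$ iff $\widetilde F = 0$ over $\ringR[\x]$; and since $F = \widetilde F$ over $\ringR[\x]$ by construction, this is exactly the assertion $\Psi(J_{\x_k}(\T_k))=0 \iff F=0$ over $\ringR[\x]$.

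The one point that needs care is the invertibility. First I would recall that $\a\in\calH$ was chosen precisely so that $g(\a)\neq 0$ for every $g\in\bigcup_i L(T_i)$, hence $\Psi(g) = g(z\x+\a) = g(\a) + z\cdot r_g$ for some $r_g\in\F[z,\x]$ with $g(\a)\in\F^\times$. In $\ringR = \F[z]/\langle z^D\rangle$ the element $z$ is nilpotent, so in $\ringR[\x]$ we may write $\Psi(g) = g(\a)\bigl(1 + z\cdot(r_g/g(\a))\bigr)$, and $1 + z\cdot(r_g/g(\a))$ is a unit because $z\cdot(r_g/g(\a))$ is nilpotent (its $D$-th power vanishes). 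Therefore each $\Psi(g)$, and hence the finite product $\Psi(T_1\cdots T_k) = \prod_{i}\prod_{g\in L(T_i)}\Psi(g)$, is a unit in $\ringR[\x]$. (This is the same invertibility that already underlies the well-definability of $\widetilde F$ noted just above, where $1/\Psi(g_1\cdots g_k)$ was taken in $\F[[\x,z]]$ and then reduced mod $z^D$; one may equivalently phrase it entirely inside $\ringR[\x]$.)

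I do not expect a genuine obstacle here: the only thing to be vigilant about is not to conflate the power-series ring $\F[[\x,z]]$ with the truncated ring $\ringR[\x]$ — the cleanest route is to do the invertibility argument natively in $\ringR[\x]$ using nilpotence of $z$, and then transport the factorization \autoref{jacobian-main-eq-psi} (which is an identity of rational functions invertible near $z=0$, hence valid after reduction mod $z^D$) into $\ringR[\x]$. With that in hand the biconditional is immediate.
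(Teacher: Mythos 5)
Your proposal is correct and is essentially the same argument as the paper's (one-line) proof, which simply says the claim "follows from the invertibility of $\Psi(T_1\cdots T_k)$ in $\ringR[\x]$"; you have merely spelled out the details — the factorization from \autoref{jacobian-main-eq-psi}, why the choice of $\a$ and the nilpotence of $z$ in $\ringR$ make each $\Psi(g)$ (hence the product) a unit, and the identification $F=\widetilde F$ over $\ringR[\x]$.
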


\begin{proof}[Proof sketch]
This follows from the invertibility of $\Psi(T_1 \cdots T_k)$ in $R[\x]$.
\end{proof}

\textbf{The hitting set $H'$.} By $J_{\x_k}(\T_k) \ne 0$, and Claims \ref{cl:psi-iff}-\ref{cl:psi-to-F}, we have $F \ne 0$ over $\ringR[\x]$. We want to find $H' \subseteq \F^{n}$, s.t.~$\Psi(J_{\x_k}(\T_k))\rvert_{\x=\a} \ne 0$, for some $\a \in H'$ (which will ensure the rank-preservation). 
Towards this, we will show (below) that $F$ has $s^{O(\delta k)}$-size $\SWSP{}{}{[\delta]}$-circuit over $\ringR[\x]$. Next, \autoref{thm:swsp-hs} provides the hitting set $H'$ in time $s^{O(\delta^2 k \log s)}$.


\begin{claim}[Main size bound]\label{claim:small-size-F-swsp}
$F \in \ringR[\x]$ has $\SWSP{}{}{[\delta]}$-circuit of size $ (s3^{\delta})^{O(k)}$.
\end{claim}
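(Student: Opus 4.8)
\textbf{Proof plan for Claim \ref{claim:small-size-F-swsp}.}

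The plan is to show that each summand
\[
\frac{\Psi(J_{\x_k}(g_1,\hdots,g_k))}{\Psi(g_1 \cdots g_k)}
\]
appearing in \autoref{eq:main-eq-F-psi} has a small $\SWSU{}{[\delta]}$-type circuit over $\ringR[\x]$, and then sum over the at most $s^k$ choices of $(g_1,\hdots,g_k)$. The numerator $\Psi(J_{\x_k}(g_1,\hdots,g_k))$ is a $k\times k$ determinant whose entries are $\Psi(\partial_{x_\ell}(g_j))$, each a sparse polynomial of degree $\le \delta-1$; expanding the determinant gives a $\Sigma\Pi^{[k]}$ expression over sparse polynomials, which is a $\Sigma\Pi^{[\le k\delta]}$ circuit — i.e.\ a bounded-bottom-fanin sparse polynomial of degree $O(k\delta)$, hence trivially a $\SWSP{}{}{[\delta]}$-style object after we account for the denominator. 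The real work is the denominator: I would use the logarithmic-derivative/power-series idea that drives the whole paper. Since $\Psi(g_i)\bmod z = g_i(\a)\in\F\setminus\{0\}$, write $\Psi(g_i) = c_i(1 - z\,h_i)$ with $c_i\in\F^\times$ and $h_i\in\F[\x,z]$ of controlled sparsity and degree ($\size(h_i) = O(d\cdot\size(g_i))$, coming from the shift). Then
\[
\frac{1}{\Psi(g_i)} \;=\; \frac{1}{c_i}\sum_{t=0}^{D-1} (z h_i)^t \pmod{z^D},
\]
and each $(zh_i)^t$ is a $\wedge$ of a $\Sigma\Pi^{[\delta]}$ polynomial (using that $h_i$ is itself sparse of degree $O(\delta)$, coming from $g_i\in\Sigma\Pi^{[\delta]}$). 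So $1/\Psi(g_i)$ is a $\WSP{}{[\delta]}$ circuit of size $\poly(D, s)$; taking the product of the $k$ many such factors and multiplying by the numerator gives a $\wedge\Sigma\Pi^{[\delta]}$-flavoured circuit — but a product of $\wedge$-gates is not a $\wedge$-gate, so here I would invoke the Waring-type closure: a product of a bounded number ($k+1$) of $\wedge\Sigma\Pi^{[\delta]}$ expressions, each of bounded degree, can be re-expressed as a single $\SWSP{}{}{[\delta]}$ circuit with a size blowup that is $(\text{degree})^{O(k)}$ times the product of the sizes, exactly analogous to \autoref{lem:prod-swsw} but for the $\Sigma\Pi^{[\delta]}$-base instead of the $\Sigma\wedge$-base. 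This contributes a factor of roughly $D^{O(k)}\cdot (s3^\delta)^{O(k)}$; the $3^\delta$ factor is the cost of the Waring identity applied to degree-$\le\delta$ monomials (it is $\prod(b_i+1)$ with $\sum b_i \le \delta$, which is at most $3^\delta$).

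In more detail, the steps in order: (i) bound $D = k(d-1)+1 \le ks$ and fix the shifted map $\Psi$, recording that $\Psi(g)\bmod z$ is a nonzero constant for all relevant $g$; (ii) expand $\Psi(J_{\x_k}(g_1,\hdots,g_k))$ as an explicit $\Sigma\Pi$ over sparse polynomials and bound its size and degree by $\poly(s, \delta^k)$ and $O(k\delta)$ respectively; (iii) for each $g_i$, expand $1/\Psi(g_i)$ as a truncated geometric series to get a $\WSP{}{[\delta]}$ circuit of size $\poly(D,s,3^\delta)$ — this is where the power-series trick "reduces $\Pi$ to $\wedge$"; (iv) multiply the $k$ denominators and the numerator together, and collapse the resulting bounded product of $\wedge\Sigma\Pi^{[\delta]}$ circuits into a single $\SWSP{}{}{[\delta]}$ circuit via the Waring identity, tracking the $(s3^\delta)^{O(k)}$ blowup; (v) sum over the $\le s^k$ tuples $(g_1,\hdots,g_k)$, which multiplies the top fanin by $s^k$ but keeps it a $\SWSP{}{}{[\delta]}$ circuit, giving the final $(s3^\delta)^{O(k)}$ bound; and finally note that since the arithmetic is all $\bmod z^D$ with $D$ polynomially bounded, we can treat $z$ as an extra variable (so really an $(n+1)$-variate $\SWSP{}{}{[\delta]}$ circuit), which is exactly the form \autoref{thm:swsp-hs} consumes.

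The main obstacle I anticipate is step (iv): keeping the size and, crucially, the \emph{bottom fanin} under control when collapsing a product of several $\wedge\Sigma\Pi^{[\delta]}$ circuits into one $\SWSP{}{}{[\delta]}$ circuit. Naively, a product of $\wedge$-gates has large waring rank, and one must be careful that applying \autoref{lem:waring-rank} to the degree-$\le\delta$ base polynomials does \emph{not} increase the bottom fanin beyond $\delta$ — it doesn't, because the Waring identity only takes \emph{linear combinations} of the base polynomials (the $x_1 + \varepsilon(2)x_2+\cdots$ become "$g_1 + \varepsilon(2)g_2 + \cdots$", still degree $\le\delta$), but verifying this and getting the exponent of $3^\delta$ right (rather than, say, $\delta^{O(\delta)}$ or $2^{O(k\delta)}$) is the delicate bookkeeping. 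A secondary subtlety is that the numerator determinant has degree $O(k\delta)$, not $\le\delta$, so to fit it into the $\SWSP{}{}{[\delta]}$ framework one re-expresses it too via a Waring identity on its degree-$O(k\delta)$ monomials, contributing the $3^{O(k\delta)} = (3^\delta)^{O(k)}$ factor — this is consistent with the claimed bound but must be stated carefully so the final exponent is $(s3^\delta)^{O(k)}$ and not worse.
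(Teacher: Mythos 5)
Your high-level plan is exactly the paper's: split each summand of $F$ into numerator $\Psi(J_{\x_k}(g_1,\hdots,g_k))$ and denominator $1/\Psi(g_1\cdots g_k)$, expand the denominator as a truncated geometric series in $z$ to land in $\SWSP{}{}{[\delta]}$, combine via the Waring-based product closure (the $\Sigma\Pi^{[\delta]}$ analogue of \autoref{lem:prod-swsw}), and sum over the $\le s^k$ tuples. Two bookkeeping points in your write-up should be corrected.

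First, the $3^\delta$ factor does not come from the Waring identity. Your estimate $\size(h_i)=O(d\cdot\size(g_i))$ ``from the shift'' is the univariate ($\Sigma\wedge$) estimate used in the $\SPSU{[k]}{}{}$ proof; for $\Sigma\Pi^{[\delta]}$ it is wrong. Shifting $x_j\mapsto z x_j + a_j$ in a degree-$\le\delta$ multivariate monomial $\x^{\e}$ can produce up to $\prod_j(e_j+1)\le (1+\delta/n)^n<3^\delta$ monomials (\autoref{claim:size-g}), so $\size(B)\le 3^\delta\cdot\size(g)$ --- that is where the $3^\delta$ enters, and it matters when $\delta\gg\log d$. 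It happens that the Waring rank of a degree-$\delta$ monomial is bounded by the same AM-GM expression, which may be the source of the confusion; but in the paper's accounting the only Waring step applied to ``monomials'' is the $k$-fold top product with all exponents $b_i=1$, giving rank exactly $2^{k-1}$, with no $3^\delta$ contribution.

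Second, for the numerator you propose to fully expand the $k\times k$ determinant into a sparse polynomial of degree $O(k\delta)$ and apply \autoref{lem:waring-rank} to each of its monomials. This does yield $(s3^\delta)^{O(k)}$ (sparsity $(s3^\delta)^{O(k)}$ and per-monomial rank $\le 3^{O(k\delta)}$), but it is looser than the paper's route in \autoref{lem:num-size-thm2}: there the determinant is kept as a $\SPSP{[k!]}{[k]}{[\delta-1]}$ circuit (\autoref{claim:jacobiansize}) and the Waring identity is applied only to the top $\Pi^{[k]}$-gate, treating each degree-$\le\delta-1$ sparse factor as a single variable, which gives rank $2^{k-1}$ and keeps the bottom degree at $\delta-1$, for size $O(3^\delta 2^k k\,k!\,s)$. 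Both routes suffice for $(s3^\delta)^{O(k)}$; the paper's is tighter.
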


\noindent The proof studies the two parts of \autoref{eq:main-eq-F-psi}---
\begin{enumerate}
    \item The numerator $\Psi(J_{{\x}_k}(g_1,\hdots,g_k))$ has $O(3^{\delta}2^kk!ks)$-size $\SWSP{}{}{[\delta-1]}$-circuit (see \autoref{lem:num-size-thm2}), and
    \item $1/\Psi(g_1 \cdots g_k)$, for $g_i \in L(T_i)$ has $(s3^{\delta})^{O(k)}$-size $\SWSP{}{}{[\delta]}$-circuit; both over $\ringR[\x]$ (see \autoref{lem:denom-size-thm2}).
\end{enumerate}

\noindent We need the following two claims to prove the numerator size bound.

\begin{claim}\label{claim:jacobiansize}
    Let $g_i \in L(T_i)$, where $T_i \in \PSP{}{[\delta]}$ of size atmost $s$, then the polynomial $J_{{\x}_k}(g_1,\hdots,g_k)$ is computable by $\SPSP{[k!]}{[k]}{[\delta-1]}$ of size $O(k!\,ks)$.
\end{claim}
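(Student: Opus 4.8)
The plan is to expand the $k\times k$ Jacobian minor by the Leibniz formula and then read off the circuit structure term by term. Recall from the preliminaries that for ${\bf f}=(g_1,\hdots,g_k)$ and $\x_k=(x_1,\hdots,x_k)$ the minor $J_{\x_k}(g_1,\hdots,g_k)$ is the determinant of the $k\times k$ submatrix $(\partial_{x_j}g_i)_{i,j\in[k]}$ of the Jacobian $\calJ_{\x}({\bf f})$. Hence
\[
J_{\x_k}(g_1,\hdots,g_k)\;=\;\sum_{\sigma\in S_k}\operatorname{sgn}(\sigma)\,\prod_{i=1}^{k}\partial_{x_{\sigma(i)}}(g_i)\,,
\]
a sum of $k!$ signed products, each product having exactly $k$ factors.

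Next I would argue that each factor $\partial_{x_{\sigma(i)}}(g_i)$ is a sparse polynomial of degree at most $\delta-1$ and of size $O(s)$. Indeed, $g_i\in L(T_i)$ is a factor of $T_i\in\PSP{}{[\delta]}$, so $g_i\in\Sigma\Pi^{[\delta]}$ with $\size(g_i)\le s$, in particular $\sp(g_i)\le s$ and $\d(g_i)\le\delta$. Differentiating monomial-by-monomial with respect to $x_{\sigma(i)}$ kills every monomial not containing $x_{\sigma(i)}$ and lowers the degree of every surviving monomial by exactly one; thus $\partial_{x_{\sigma(i)}}(g_i)\in\Sigma\Pi^{[\delta-1]}$ with $\sp(\partial_{x_{\sigma(i)}}(g_i))\le\sp(g_i)\le s$ and degree $\le\delta-1$, so its size is $O(s)$ (the exponents multiplying the coefficients only inflate the bit-complexity polynomially).

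Then I would assemble the bound. For a fixed $\sigma$, the product $\prod_{i\in[k]}\partial_{x_{\sigma(i)}}(g_i)$ is a $\Pi^{[k]}\Sigma\Pi^{[\delta-1]}$ circuit of size $\sum_{i\in[k]}\size(\partial_{x_{\sigma(i)}}(g_i))=O(ks)$. Taking the signed sum over the $k!$ permutations (with $\pm1$ scalars feeding the top $\Sigma$-gate) gives a $\Sigma^{[k!]}\Pi^{[k]}\Sigma\Pi^{[\delta-1]}=\SPSP{[k!]}{[k]}{[\delta-1]}$ circuit of total size $O(k!\cdot ks)$, which is the claim.

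The step I expect to require the most care is not conceptual but bookkeeping: making the size accounting conform to the convention $\size(g_1\cdots g_r)=\sum_i(\sp(g_i)+\d(g_i))$ fixed in the preliminaries, and checking that differentiation does not secretly blow up sparsity, degree, or (in the whitebox-relevant accounting) bit-complexity — all of which stay under control because $\partial_{x}$ only deletes monomials and decrements degrees. There is no genuine obstacle here; the lemma is a direct structural consequence of the Leibniz expansion.
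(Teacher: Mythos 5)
Your proof is correct and follows essentially the same route as the paper's: expand the $k\times k$ Jacobian minor by the Leibniz (permutation) formula into $k!$ signed products of entries, note each entry $\partial_{x_j}(g_i)$ is a sparse polynomial of degree at most $\delta-1$ and size $O(s)$, and sum to get an $\SPSP{[k!]}{[k]}{[\delta-1]}$ circuit of size $O(k!\,ks)$. Your added remarks on why differentiation does not increase sparsity or degree are just a slightly more explicit rendering of the paper's "trivial expansion" step.
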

    
\begin{proof-sketch}
    Each entry of the matrix has degree at most $\delta-1$. Trivial expansion gives $k!$ top-fanin where each product (of fanin $k$) has size $\sum_{i}\,\size(g_i)$. As, $\size(T_i) \le s$, trivially each $\size(g_i) \le s$. Therefore, the total size is $k! \cdot\sum_{i}\,\size(g_i)=O(k!\,ks)$.
\end{proof-sketch}
    
\begin{claim} \label{claim:size-g}
    Let $g \in \Sigma \Pi^{\delta}$, then $\Psi(g) \in \Sigma \Pi^{\delta}$ of size $3^{\delta} \cdot \size(g)$ (for $n\gg\delta$).
\end{claim}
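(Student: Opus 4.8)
The plan is to reduce to a single monomial and expand by the binomial theorem, reading the result over the coefficient ring $\ringR$ rather than as a polynomial in $\x$ and $z$ jointly. Write $g = \sum_M c_M M$ where $M = x_1^{b_1}\cdots x_n^{b_n}$ ranges over the monomials of $g$, so that $b_1 + \cdots + b_n = \deg M \le \delta$ and $\sp(g)$ counts these $M$. Since $\Psi$ is $\F$-linear, it suffices to bound $\size(\Psi(M))$ for each fixed $M$ and then sum over the monomials of $g$.

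For a fixed $M$ we have $\Psi(M) = \prod_i (z x_i + a_i)^{b_i}$, and fully expanding gives
\[
\Psi(M) \;=\; \sum_{\substack{0 \le c_i \le b_i \\ i \in [n]}} \Big( \prod_i \binom{b_i}{c_i}\, a_i^{\,b_i - c_i} \Big)\, z^{\,c_1 + \cdots + c_n}\, \prod_i x_i^{c_i}\,.
\]
Viewed as a $\Sigma\Pi^{[\delta]}$ expression in $\x$ over $\ringR = \F[z]/\langle z^{D}\rangle$ (with the $z$-powers absorbed into the coefficients), every $\x$-monomial that appears has degree $c_1 + \cdots + c_n \le b_1 + \cdots + b_n \le \delta$, and the number of distinct such monomials is at most $\prod_i (b_i + 1)$. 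Using $1 + b \le 2^{b}$ for every integer $b \ge 0$, this product is at most $2^{b_1 + \cdots + b_n} \le 2^{\delta} \le 3^{\delta}$. Hence $\sp(\Psi(M)) \le 3^{\delta}$; summing over the $\sp(g)$ monomials of $g$ gives $\sp(\Psi(g)) \le 3^{\delta}\,\sp(g)$, and since $\Psi$ does not increase the $\x$-degree, $\deg_{\x}(\Psi(g)) \le \deg(g)$. Together these yield $\size(\Psi(g)) \le 3^{\delta}\,\size(g)$, as claimed. The hypothesis $n \gg \delta$ plays no essential role here; it only ensures that these crude monomial counts are never capped by the total number $\binom{n+\delta}{\delta}$ of available degree-$\le\delta$ monomials.

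Since everything reduces to an elementary expansion, there is no genuine obstacle; the one point that needs care is that $\Psi(g)$ must be interpreted as a $\Sigma\Pi^{[\delta]}$ circuit \emph{in the variables $\x$ over the ring $\ringR$}, not as a polynomial in $(\x, z)$ jointly --- otherwise the bottom fan-in would read as $2\delta$ rather than $\delta$, which would break its subsequent use in \autoref{claim:jacobiansize} and in the numerator size bound. I would also record, for the later bit-complexity accounting, that the new coefficients $\prod_i \binom{b_i}{c_i}\, a_i^{\,b_i - c_i}$ have bit-size polynomial in $\delta$ and in the bit-size of the $a_i$.
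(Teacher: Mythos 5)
Your proof is correct and follows the same overall skeleton as the paper's: reduce to a single monomial $M = \prod_i x_i^{b_i}$, expand $\Psi(M)$, and bound the number of resulting $\x$-monomials by $\prod_i (b_i+1)$. Where you differ is the final inequality. The paper applies AM--GM to all $n$ factors, getting $\prod_i (b_i+1) \le \left(\tfrac{\sum_i b_i + n}{n}\right)^n \le (1+\delta/n)^n$, and then appeals to the limit $e^\delta < 3^\delta$; you instead use the elementary pointwise bound $1+b \le 2^b$ for integers $b \ge 0$, giving $\prod_i (b_i+1) \le 2^{\sum_i b_i} \le 2^\delta \le 3^\delta$. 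Your version is tighter (it gives $2^\delta$), visibly uniform in $n$, and more elementary; it also makes precise your observation that the ``$n \gg \delta$'' hypothesis is not doing real work --- the paper's AM--GM route also works for all $n$ since $(1+\delta/n)^n$ is increasing in $n$ with limit $e^\delta$, but your route makes that independence manifest rather than implicit. Your additional remark --- that $\Psi(g)$ must be read as a $\Sigma\Pi^{[\delta]}$ circuit in $\x$ over the ring $\ringR$, with $z$-powers absorbed into coefficients, to keep the bottom fan-in at $\delta$ rather than $2\delta$ --- is a valid and useful clarification that the paper leaves implicit; it matters for the later use of this claim in \autoref{claim:jacobiansize} and \autoref{lem:num-size-thm2}.
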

\begin{proof-sketch}
    Each monomial $\x^{\e}$ of degree $\delta$, can produce $\prod_i (e_i+1) \le ((\sum_{i} e_i +n)/n)^n \le (\delta/n+1)^n$-many monomials, by AM-GM inequality as $\sum_{i} e_i \le \delta$. As $\delta/ n \rightarrow 0$, we have $(1+\delta/n)^n \rightarrow e^{\delta}$. As $e < 3$, the upper bound follows.
\end{proof-sketch}

\begin{lemma}[Numerator size] \label{lem:num-size-thm2}
$\Psi(J_{{\x}_k}(g_1,\hdots,g_k))$ is computable by $\SWSP{}{}{[\delta-1]}$ of size $O(3^{\delta}\,2^kk\,k!s)=:s_2$.
\end{lemma}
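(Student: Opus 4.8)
The plan is to assemble the circuit in three stages: first expand the Jacobian minor as a sum of products of low-degree sparse polynomials, then push the scale-and-shift map $\Psi$ inside, and finally linearize each top product gate into a $\wedge$-gate via the Waring identity.

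First I would invoke \autoref{claim:jacobiansize}: the Leibniz expansion of the determinant writes $J_{{\x}_k}(g_1,\hdots,g_k)$ as a signed sum over $S_k$ of the $k!$ products $\prod_{i \in [k]} \partial_{x_{\sigma(i)}}(g_i)$, a $\SPSP{[k!]}{[k]}{[\delta-1]}$ circuit in which every bottom factor $\partial_{x_{\sigma(i)}}(g_i)$ is a $\Sigma\Pi^{[\delta-1]}$ polynomial of size at most $s$ (differentiation increases neither the sparsity nor pushes the degree above $\delta-1$). Since $\Psi\colon x_i \mapsto z x_i + a_i$ is an $\F$-algebra homomorphism, it commutes with sums and products, so $\Psi(J_{{\x}_k}(g_1,\hdots,g_k))$ equals the same signed sum with each $\partial_{x_{\sigma(i)}}(g_i)$ replaced by $\Psi(\partial_{x_{\sigma(i)}}(g_i))$; applying \autoref{claim:size-g} with degree parameter $\delta-1$, each such image is again a $\Sigma\Pi^{[\delta-1]}$ polynomial, now of size at most $3^{\delta-1}s$. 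At this point $\Psi(J_{{\x}_k}(g_1,\hdots,g_k))$ is realized as a $\Sigma^{[k!]}\Pi^{[k]}\Sigma\Pi^{[\delta-1]}$ circuit over $\F[z,\x]$.

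Next I would replace each top $\Pi^{[k]}$ gate by a $\wedge$-gate. Apply \autoref{lem:waring-rank} to the monomial $x_1\cdots x_k$ (all exponents equal $1$, so $\rk = 2^{k-1}$ and only square roots of unity, i.e.\ $\pm 1$, occur, needing no field extension when $\mathrm{char}\,\F\ne 2$): $x_1\cdots x_k = \sum_{\varepsilon \in \{\pm1\}^{k-1}} \gamma_{\varepsilon}\,(x_1 + \varepsilon_2 x_2 + \cdots + \varepsilon_k x_k)^k$. Substituting $x_i \mapsto \Psi(\partial_{x_{\sigma(i)}}(g_i))$ rewrites each product as a sum of $2^{k-1}$ many $k$-th powers of polynomials of the form $\sum_i \pm\,\Psi(\partial_{x_{\sigma(i)}}(g_i))$; each such base is a sum of $k$ polynomials from $\Sigma\Pi^{[\delta-1]}$, hence itself lies in $\Sigma\Pi^{[\delta-1]}$ with size at most $k\cdot 3^{\delta-1}s$. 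So every determinant term becomes a $\wedge\Sigma\Pi^{[\delta-1]}$ circuit with $2^{k-1}$ summands, and summing over the $k!$ permutations yields a single $\SWSP{}{}{[\delta-1]}$ circuit with top fan-in at most $k!\,2^{k-1}$, each summand a $k$-th power of a degree-$\le(\delta-1)$ polynomial of size $O(k\,3^{\delta-1}s)$; its total size is $O(k!\,2^{k-1}\cdot k\,3^{\delta-1}s) = O(3^{\delta}\,2^{k}\,k\,k!\,s) = s_2$, as claimed.

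The point requiring care — the only real obstacle — is the linearization step: one must ensure that turning the product gate into a $\wedge$-gate does not inflate the degree of the bottom sparse polynomials beyond $\delta-1$. This is precisely what the Waring identity buys us: it expresses the product as an $\F$-linear combination of $k$-th powers of $\F$-linear combinations of the factors, so the base of each power is a sum of degree-$\le(\delta-1)$ polynomials and stays in $\Sigma\Pi^{[\delta-1]}$, while the exponent $k$ is harmless for the $\SWSP{}{}{[\delta-1]}$ size measure. Everything else — the bookkeeping of sparsities, and the reduction $\bmod\,z^D$ which only shrinks the circuit — is routine.
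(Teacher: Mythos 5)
Your proposal is correct and follows essentially the same route as the paper: expand the Jacobian determinant via Leibniz (\autoref{claim:jacobiansize}), push $\Psi$ inside and bound the sparse-polynomial blowup (\autoref{claim:size-g}), then linearize the top $\Pi^{[k]}$-gate into a $\wedge$-gate via the Waring identity (\autoref{lem:waring-rank}) with rank $2^{k-1}$. The only addition beyond the paper's proof is your observation that, since all exponents of $x_1\cdots x_k$ equal $1$, only $\pm1$ are needed and no field extension is required — a correct and slightly sharper remark, but it does not change the argument or the bound.
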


\begin{proof}
    In \autoref{claim:jacobiansize} we showed that $J_{{\x}_k}(g_1,\hdots,g_k) \in \SPSP{[k!]}{[k]}{[\delta-1]}$ of size $O(k!ks)$. Moreover, for a $g \in \Sigma \Pi^{[\delta-1]}$, we have $\Psi(g) \in \Sigma \Pi^{[\delta-1]}$ of size at most $3^{\delta} \cdot \size(g)$, over $\ringR[\x]$ due to \autoref{claim:size-g}).
    
    Combining these, one concludes that $\Psi(J_{{\x}_k}(g_1,\hdots,g_k)) \in \SPSP{[k!]}{[k]}{[\delta-1]}$, of size $O(3^{\delta}\,k!ks)$. We \emph{convert} the $\Pi$-gate to $\wedge$ gate using waring identity (\autoref{lem:waring-rank}) which blowsup the size by a multiple of $2^{k-1}$. Thus, $\Psi(J_{{\x}_k}(g_1,\hdots,g_k)) \in \SWSP{}{}{[\delta-1]}$ of size $O(3^{\delta}\,2^kk\,k!s)$.
\end{proof}

In the following lemma, using power series expansion of expressions like $1/(1-a\cdot z)$, we conclude that $1/\Psi(g)$ has a small $\SWSP{}{}{[\delta]}$-circuit, which would further imply the same for $1/\Psi(g_1\cdots g_k)$.

\begin{lemma}[Denominator size] \label{lem:denom-size-thm2}
Let $g_i \in L(T_i)$. Then, $1/\Psi(g_1\cdots g_k)$ can be computed by a $\SWSP{}{}{[\delta]}$-circuit of size $s_1:=(s3^{\delta})^{O(k)}$, over $\ringR[\x]$.
\end{lemma}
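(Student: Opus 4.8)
The plan is to invert each factor $\Psi(g_i)$ separately by a \emph{terminating} geometric series in $z$, and then recombine the $k$ resulting $\SWSP{}{}{[\delta]}$-circuits into one via the Waring identity. Recall that the point $\a=(a_1,\dots,a_n)\in\calH$ was chosen so that $g(\a)\neq 0$ for every $g\in\bigcup_i L(T_i)$; in particular $c_i:=g_i(\a)=\Psi(g_i)\rvert_{z=0}$ is a nonzero scalar, which is exactly what makes $\Psi(g_i)$ invertible over $\ringR[\x]$. Since $g_i\in\Sigma\Pi^{[\delta]}$ has degree $\le\delta$, \autoref{claim:size-g} gives $\Psi(g_i)\in\Sigma\Pi^{[\delta]}$ of size $O(3^{\delta}s)$ over $\ringR[\x]$; moreover, expanding $\Psi(\x^{\e})=\prod_\ell(zx_\ell+a_\ell)^{e_\ell}$ shows that the coefficient of each $\x$-monomial of $\Psi(g_i)$ of $\x$-degree $e$ is $z^{e}$ times a scalar. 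Hence I would write $\Psi(g_i)=c_i-z\,\tilde h_i$, where $\tilde h_i\in\F[z,\x]$ is a degree-$\le\delta$ sparse polynomial in $\x$ over $\ringR$ with $\size(\tilde h_i)=O(3^{\delta}s)$, $\d_z(\tilde h_i)\le\delta-1$, and $\v_z(z\tilde h_i)\ge 1$.

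Because $\v_z(z\tilde h_i)\ge 1$, the geometric series for the inverse terminates modulo $z^{D}$, so over $\ringR[\x]$
\[
  \frac{1}{\Psi(g_i)}\;=\;\frac{1}{c_i}\sum_{j=0}^{D-1}c_i^{-j}\,z^{j}\,\tilde h_i^{\,j}\,;
\]
one verifies this by multiplying through by $c_i-z\tilde h_i$ to obtain $1-(z\tilde h_i/c_i)^{D}\equiv 1\pmod{z^{D}}$. Each summand is a scalar from $\ringR$ times the power $\tilde h_i^{\,j}$ of the size-$O(3^{\delta}s)$, degree-$\le\delta$ sparse polynomial $\tilde h_i$, so $1/\Psi(g_i)\in\SWSP{}{}{[\delta]}$ of size $O(D\cdot 3^{\delta}s)$; we store it as an element of $\F[z,\x]$ (not reduced mod $z^{D}$) and keep its $z$-degree $\le D\delta$.

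For $1/\Psi(g_1\cdots g_k)=\prod_{i\in[k]}1/\Psi(g_i)$, I would expand the product of the $k$ sums into $\sum_{(j_1,\dots,j_k)\in[0,D-1]^{k}}$ of a scalar in $\ringR$ times $\prod_i\tilde h_i^{\,j_i}$. Treating $\prod_i u_i^{j_i}$ as a monomial in fresh variables $u_1,\dots,u_k$ and applying \autoref{lem:waring-rank} (after passing to a field extension of degree at most $(kD)^{k}$ so the requisite roots of unity exist), each $\prod_i\tilde h_i^{\,j_i}$ rewrites as $\sum_{\varepsilon}\gamma_{\varepsilon}\big(\sum_i\varepsilon(i)\tilde h_i\big)^{j_1+\cdots+j_k}$ with at most $\prod_{i\ge 2}(j_i+1)\le D^{k-1}$ terms, the bottom polynomial $\sum_i\varepsilon(i)\tilde h_i$ being a degree-$\le\delta$ sparse polynomial of size $O(k\,3^{\delta}s)$. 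Collecting everything, $1/\Psi(g_1\cdots g_k)$ is a sum of at most $D^{k}\cdot D^{k-1}=D^{2k-1}$ powers of degree-$\le\delta$ sparse polynomials, i.e.\ a $\SWSP{}{}{[\delta]}$-circuit of size $O(D^{2k-1}\cdot k\,3^{\delta}s)$. Since $D=k(d-1)+1\le kd\le s^{2}$, this is $(s3^{\delta})^{O(k)}=s_1$, and all intermediate $z$-degrees stay bounded by $O(kD\delta)=\poly(k,\delta,s)$.

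The only slightly delicate points, neither of which is hard, are: (i) that the truncated series is \emph{literally} the inverse of $\Psi(g_i)$ in $\ringR[\x]$, which hinges on $\v_z(z\tilde h_i)\ge 1$ — and this is precisely why the map $\Psi$ uses the shift $x_i\mapsto zx_i+a_i$ rather than a bare scaling, so that the constant term $c_i$ is a unit; and (ii) the bookkeeping of $z$-degrees in the ``bloated'' representation, where each operation (the power-series expansion, multiplying the $k$ factors, and the Waring rewriting) multiplies the $z$-degree by at most a $\poly(k,\delta,d)$ factor. I do not anticipate any obstacle beyond this routine accounting.
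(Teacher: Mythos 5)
Your proof is correct and follows essentially the same route as the paper's: expand each $1/\Psi(g_i)$ by a terminating geometric series in $z$ over $\ringR[\x]$ (exploiting that the constant term $\Psi(g_i)\rvert_{z=0}$ is a nonzero scalar by choice of $\a$), obtain a $\SWSP{}{}{[\delta]}$-circuit for each factor, and then recombine the $k$ factors via the Waring identity into a single $\SWSP{}{}{[\delta]}$-circuit of size $(s3^{\delta})^{O(k)}$. The only cosmetic difference is that the paper abbreviates the recombination step by invoking \autoref{lem:prod-swsw}, whereas you unfold the Waring rewriting of the cross-terms $\prod_i\tilde h_i^{\,j_i}$ explicitly — which is, if anything, slightly cleaner, since \autoref{lem:prod-swsw} is stated for $\SWSU{}{}{}$ and the paper is implicitly appealing to the obvious $\SWSP{}{}{}$-analogue.
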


\begin{proof}
Let $g \in L(T_i)$ for some $i$. Assume, $\Psi(g)=A-z\cdot B$, for some $A \in \F$ and $B \in \ringR[\x]$ of degree $\delta$, with $\size(B) \le 3^{\delta} \cdot s$, from \autoref{claim:size-g}. Note that, over $\ringR[\x]$,
\begin{align}
    \frac{1}{\Psi(g)} \;=\; \frac{1}{A(1- \frac{B}{A}\cdot z)}  \;=\; \frac{1}{A} \cdot \sum_{i=0}^{D-1} \left(\frac{B}{A}\right)^{i} \cdot z^i\;.\label{eq:powerseries-1/g-psi}
\end{align}
As, $B^i$ has a trivial $\WSP{}{[\delta]}$-circuit (over $\ringR[\x]$) of size $\le 3^{\delta}\cdot s + i$; summing over $i \in [D-1]$, the overall size is at most $D \cdot 3^{\delta} \cdot s + O(D^2)$. As $D < k \cdot d$, we conclude that $1/\Psi(g)$ has $\SWSP{}{}{[\delta]}$ of size $\poly(s \cdot k \cdot d 3^{\delta})$, over $\ringR[\x]$. Multiplying $k$-many such products directly gives an upper bound of $(s\cdot 3^{\delta})^{O(k)}$, using \autoref{lem:prod-swsw} (basically, waring identity).
\end{proof}

\begin{proof}[Proof of \autoref{claim:small-size-F-swsp}]
Combining Lemmas \ref{lem:num-size-thm2}-\ref{lem:denom-size-thm2}, observe that $\Psi(J_{\x_k}(\cdot)/\Psi(\cdot)$ has $\SWSP{}{}{[\delta]}$-circuit of size at most $(s_1 \cdot s_2)^2 = (s \cdot 3^{\delta})^{O(k)}$, over $\ringR[\x]$, using \autoref{lem:prod-swsw}. Summing up at most $s^k$ many terms (by defn.~of $F$), the size still remains $(s\cdot 3^{\delta})^{O(k)}$.
\end{proof}

{\bf \noindent Degree bound.} As, syntactic degree of $T_i$ are bounded by $d$, and $\Psi$ maintain $\d_{\x}=\d_{z}$, we must have $\d_{z}(\Psi(J_{\x_k}(g_1,\hdots,g_k)) =\d_{\x}(J_{\x_k}(g_1,\hdots,g_k)) \le D-1$. 
Note that, \autoref{lem:num-size-thm2} actually works over $\F[\x,z]$ and thus there is no additional degree-blow up (in $z$). However, there is some degree blowup in \autoref{lem:denom-size-thm2}, due to \autoref{eq:powerseries-1/g-psi}.

Note that \autoref{eq:powerseries-1/g-psi} shows that over $\ringR[\x]$,  \[\frac{1}{\Psi(g)} =\left( \frac{1}{A^D} \right)\,\cdot \left(\sum_{i=0}^{D-1} A^{D-1-i}z^{i} \cdot B^i\right) =: \frac{p(\x,z)}{q},\] where $q= A^{D}$. We think of $p \in \F[\x,z]$ and $q \in \F$. 
Note, $\d_{z}(\Psi(g)) \le \delta$ implies $\d_{z}(p) \le \d_{z}((B\,z)^{D-1}) \le \delta \cdot (D-1)$. 

Finally, denote $1/\Psi(g_1 \cdots g_k)=: P_{g_1,\hdots,g_k}/Q_{g_1,\hdots,g_k}$, over $\ringR[\x]$. This is just multiplying $k$-many $(p/q)$'s; implying a degree blowup by a multiple of $k$. In particular -- $\d_{z}(P_{(\cdot)}) \le \delta \cdot k \cdot (D-1)$
Thus, in \autoref{eq:main-eq-F-psi}, summing up $s^k$-many terms gives an expression (over $\ringR[\x]$):
\[ F \;=\; \sum_{g_1 \in L(T_1),\hdots, g_k \in L(T_k)}\; \Psi(J_{\x_k}(g_1,\hdots,g_k)) \cdot \left( \frac{P_{g_1,\hdots,g_k}}{Q_{g_1,\hdots,g_k}} \right)\;=:\; \frac{P(\x,z)}{Q}\;.\]

Verify that $Q \in \F$.
The degree of $z$ also remains bounded by \[\max_{g_i \in L(T_i), i\in[k]} \d_{z}(P_{g_1,\hdots,g_k})+\delta k \le \poly(s).\]
Using the degree bounds, we finally have $P \in \F[\x,z]$ as a $\SWSP{}{}{[\delta]}$-circuit (over $\F(z)$) of size $n^{O(\delta)}\,(s3^{\delta})^{O(k)}$ $=3^{O(\delta k)} s^{O(k+\delta)} =: s_3$. 

We want to \emph{construct} a set $H' \subseteq \F^n$ such that the action $P(H',z) \ne 0$.  Using \cite{forbes2015deterministic} (\autoref{thm:swsp-hs}), we conclude that it has $s^{O(\delta\log s_3)}$ $= s^{O(\delta^2 k \log s)}$ size hitting set which is constructible in a similar time. 
Hence, the construction of $\Phi$ follows, making $\Phi(f)$ a $k+2$ variate polynomial. Finally, by the obvious degree bounds of $\y,z,t$ from the definition of $\Phi$, we get the blackbox PIT algorithm with time-complexity $s^{O(\delta^2 k \log s)}$; finishing \autoref{thm:thm2}(b).

We could also give the final hitting set for the general problem.

\begin{proof}[Solution to \autoref{prob:prob1-thm2}]
We know that \[C(T_1,\hdots,T_m) = 0 \iff E:= \Phi(C(T_1,\hdots,T_m)) =0.\] 
 Since, $H'$ can be constructed in $s^{O(\delta^2\,k\,\log s)}$-time, it is trivial to find hitting set for $E\rvert_{H'}$ (which is just a $k+2$-variate polynomial with the aformentioned degree bounds). The final hitting set for $E$ can be constructed in $s'^{O(k)} \cdot s^{O(\delta^2\,k\,\log s)}$-time.
\end{proof}


\begin{remark*}
\begin{enumerate}
    \setlength\itemsep{1em}
    \item As Jacobian Criterion (\autoref{fact:jacobian-criterion}) holds when the characteristic is $> d^{\,\tdeg}$, it is easy to conclude that our theorem holds for all fields of char $> d^k$.
    \item The above proof gives an efficient reduction from blackbox PIT for $\SPSP{[k]}{}{[\delta]}$ circuits to $\SWSP{}{}{[\delta]}$ circuits. In particular, a poly-time hitting set for $\SWSP{}{}{[\delta]}$ circuits would put PIT for $\SPSP{[k]}{}{[\delta]}$ in $\P$. 
    \item Also, $\DiDI$-technique (of Theorem \autoref{thm:thm1}) directly gives a blackbox algorithm, but the complexity is {\em exponentially} worse (in terms of $k$ in the exponent) for its recursive blowups.
\end{enumerate}
\end{remark*}

\subsection{PIT for \texorpdfstring{$\SPSU{[k]}{}{}$}{SkPSU}}

As we remarked earlier, the proof of \autoref{thm:thm2}(a) is similar to the one we discussed in section \ref{sec:SkPSPdelta}. Here we sketch the proof, stating some relevant changes. Similar to \autoref{thm:thm2}(b), we generalize this theorem and prove for a much bigger class of polynomials.  

\begin{problem}\label{prob:prob2-thm2}
 Let $\{T_i \,|\, i \in [m]\}$ be $\PSU{}{}$ circuits of (syntactic) degree at most $d$ and size $s$. Let the transcendence degree of $T_i$'s, $\tdeg_{\F}(T_1,\hdots,T_m) =:k \ll s$. Further, $C(x_1,\hdots,x_m)$ be a circuit of size + degree $<s'$. Design a blackbox-PIT algorithm for $C(T_1,\hdots,T_m)$.
\end{problem}

It is trivial to see that $\SPSU{[k]}{}{}$ is a very special case of the above settings. We will use the same idea (\& notation) as in \autoref{thm:thm2}(b), using the Jacobian technique. The main idea is to come up with $\Psi$ map, and correspondingly the hitting set $H'$. If $g \in L(T_i)$, then $\size(g) \le O(dn)$. 
The $D$ (and hence $R[\x]$) remains as before. Claims \ref{cl:psi-iff}-\ref{cl:psi-to-F} hold similarly. We will construct the hitting set $H'$ by showing that $F$ has a small $\SWSU{}{}{}$ circuit over $R[\x]$. 

Note that, \autoref{claim:jacobiansize} remains the same for $\SWSU{}{}{}$ (implying the same size blowup). However, \autoref{claim:size-g}, the size blowup is $O(d\,\size(g))$, because each monomial $x^e$ can only produce $d+1$ many monomials. Therefore, similar to \autoref{lem:denom-size-thm2}, one can show that $\Psi(J_{\x_k}(g_1,\hdots,g_k)) \in \SWSU{}{}{}$, of size $O(2^kk!kds)$. Similarly, the size in \autoref{lem:num-size-thm2} can be replaced by $s^{O(k)}$. Therefore, we get (similar to \autoref{claim:small-size-F-swsp}):

\begin{claim}
$F \in R[\x]$ has $\SWSU{}{}{}$-circuit of size $s^{O(k)}$.
\end{claim}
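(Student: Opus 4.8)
The plan is to follow the proof of \autoref{claim:small-size-F-swsp} line for line, substituting each $\SWSP{}{}{[\delta]}$-size bound by its (cheaper) $\SWSU{}{}{}$-analogue. Recall that the $\SPSU{[k]}{}{}$-analogue of \autoref{eq:main-eq-F-psi} expresses $F$ over $\ringR[\x]$ as a sum of at most $s^k$ terms, each of the shape $\Psi(J_{\x_k}(g_1,\ldots,g_k))/\Psi(g_1\cdots g_k)$ with $g_i\in L(T_i)$; so it suffices to bound, separately, the $\SWSU{}{}{}$-size of one such numerator and one such denominator, and then combine them using the closure of $\SWSU{}{}{}$ under products (\autoref{lem:prod-swsw}).

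\textbf{The numerator.} Since each $g_i\in L(T_i)$ is a sum of univariates, every entry $\partial_{x_\ell}(g_i)$ of $\calJ_{\x_k}(g_1,\ldots,g_k)$ is a univariate of degree $<d$. Laplace-expanding the $k\times k$ determinant writes $J_{\x_k}(g_1,\ldots,g_k)$ as a sum of $k!$ products of $k$ univariates in distinct variables, i.e.\ a $\SPSU{[k!]}{[k]}{}$-circuit of size $O(k!\,k\,s)$ (this is \autoref{claim:jacobiansize}, which is insensitive to $\delta$). Applying $\Psi$ turns each degree-$<d$ univariate into a sum of at most $d$ univariates, so the $\Pi\Sigma\wedge$-structure is preserved with only an $O(d)$-factor size blowup (the sum-of-univariates form of \autoref{claim:size-g}: a monomial $x^e$ now spawns $\le d+1$ rather than $3^\delta$ monomials). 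I would then flatten the $\Pi^{[k]}$-gate to a $\wedge$-gate via the Waring identity (\autoref{lem:waring-rank}/\autoref{lem:prod-swsw}), absorbing a $(d{+}1)^{O(k)}$ factor; using $d\le s$ this gives $\Psi(J_{\x_k}(g_1,\ldots,g_k))\in\SWSU{}{}{}$ of size $s^{O(k)}$, the $\SWSU{}{}{}$-replacement of \autoref{lem:num-size-thm2}.

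\textbf{The denominator.} Because $\a$ was chosen with $h(\a)\ne0$, for each $g\in L(T_i)$ we may write $\Psi(g)=A-z\cdot B$ with $A=g(\a)\in\F\setminus\{0\}$ and $B$ a sum of univariates of size $O(d\,s)$. The inverse identity in the power-series ring then gives, over $\ringR[\x]$,
\[
\frac{1}{\Psi(g)}\;=\;\frac{1}{A}\sum_{i=0}^{D-1}\left(\frac{B}{A}\right)^{i}z^{i},
\]
where each $B^i$ is trivially one $\wedge\Sigma\wedge$ summand of size $O(d\,s+i)$; summing over $i<D=k(d{-}1)+1$ exhibits $1/\Psi(g)$ as a $\SWSU{}{}{}$-circuit of size $\poly(s,k,d)$. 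Multiplying $k$ such expressions and invoking \autoref{lem:prod-swsw} (again absorbing a $(d{+}1)^{O(k)}$ factor) bounds $1/\Psi(g_1\cdots g_k)$ by a $\SWSU{}{}{}$-circuit of size $s^{O(k)}$ — the replacement of \autoref{lem:denom-size-thm2}.

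\textbf{Assembling $F$.} Each summand of $F$ is the product of its numerator ($\SWSU{}{}{}$, size $s^{O(k)}$) and its denominator ($\SWSU{}{}{}$, size $s^{O(k)}$), hence by \autoref{lem:prod-swsw} a single $\SWSU{}{}{}$-circuit of size $s^{O(k)}$; summing the $\le s^k$ terms merely enlarges the top $\Sigma$-fanin and the $\wedge$-powers, leaving $F\in\SWSU{}{}{}$ of size $s^{O(k)}$. In parallel I would carry the syntactic $z$-degree through every step and verify it stays $\poly(s)$ — the only growth is the truncated power series in the denominator, by a factor $\le D$ per $g$ and $\le Dk$ overall — so that the eventual substitution/interpolation recovering $\Psi(J_{\x_k}(\T_k))\rvert_{\x=\a}$ remains efficient. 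I do not expect a genuine obstacle here, since every estimate is strictly cheaper than in the $\SPSP{[k]}{}{[\delta]}$ case; the one thing to watch is that each Waring-identity call contributes a multiplicative $(d{+}1)^{\Theta(k)}$, and one must check these compound to $s^{O(k)}$ (true because $d\le s$) rather than to a $d^{\Theta(k^2)}$-type bound.
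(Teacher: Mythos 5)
Your proof is correct and follows essentially the same route as the paper, which also obtains the bound by rerunning \autoref{claim:jacobiansize}, \autoref{claim:size-g}, \autoref{lem:num-size-thm2}, and \autoref{lem:denom-size-thm2} with the $3^\delta$ blowup of $\Psi$ replaced by $O(d)$ (since a univariate monomial $x^e$ spawns at most $d{+}1$ terms). One minor bookkeeping point: when flattening the numerator's $\Pi^{[k]}$-gate, each factor enters with exponent one, so \autoref{lem:waring-rank} gives a Waring factor of $2^{k-1}$ rather than $(d{+}1)^{O(k)}$; your coarser estimate still lands at $s^{O(k)}$ because $d\le s$, so this does not affect the conclusion.
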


Next, the degree bound also remains the same. Following the same footsteps, it is not hard to see that 
while degree bound on $z$ remains $\poly(ksd)$. Therefore, $P \in \F[\x,z]$ has $\SWSU{}{}{}$-circuit of size $s^{O(k)}$.

We want to \emph{construct} a set $H' \subseteq \F^n$ such that the action $P(H',z) \ne 0$.  By \autoref{lem:swsu-hs}, we conclude that it has $s^{O(k\,\log \log s)}$ size hitting set  which is constructible in a similar time. Hence, the construction of map $\Phi$ and the theorem follows (from $z$-degree bound).

\begin{proof}[Solution to \autoref{prob:prob2-thm2}]
We know that \[C(T_1,\hdots,T_m) = 0 \iff E:= \Phi(C(T_1,\hdots,T_m)) =0.\] 
Since, $H'$ can be constructed in $s^{O(k \log \log s)}$ time, it is trivial to find hitting set for $E\rvert_{H'}$ (which is just a $k+2$-variate polynomial with the aforementioned degree bounds). The final hitting set for $E$ can be constructed in $s'^{O(k)} \cdot s^{O(k \log \log s)}$ time.
\end{proof}

\section{Conclusion}

This work introduces the powerful $\DiDI$-technique and solves three open problems in PIT for depth-$4$ circuits, namely $\SPSP{[k]}{}{[\delta]}$ (blackbox) and $\SPSU{[k]}{}{}$ (both whitebox and blackbox). Here are some immediate questions of interest which require rigorous investigation.
\begin{enumerate}
	\item Can the exponent in \autoref{thm:thm1} be improved to $O(k)$? Currently, it is exponential in $k$.    
    \item Can we improve \autoref{thm:thm2}(b) to $s^{O(\log \log s)}$ (like in \autoref{thm:thm2}(a))?
    \item Can we design a polynomial-time PIT for $\SPSP{[k]}{}{[\delta]}$?
    \item Design a polynomial time PIT for $\SWSP{}{}{[\delta]}$ circuits (i.e.~unbounded top-fanin)? 
    \item Can we solve PIT for $\Sigma^{[k]}\Pi\Sigma\mathsf{M}_2$ circuits efficiently (polynomial/quasipolynomial-time), where \(\Sigma\mathsf{M}_2\) denotes bivariate polynomials?    
    \item Can we design an efficient PIT for rational functions of the form $\Sigma^{}\,(1/\Sigma\wedge\Sigma)$ or $\Sigma^{}\,(1/\Sigma\Pi)$ (for {\em un}bounded top-fanin)?
\end{enumerate}

\bibliographystyle{./bib/customurlbst/alphaurlpp}
\bibliography{./bib/bibliography}

\end{document}